\newtheorem{theorem}{Theorem}
\newtheorem{lemma}[theorem]{Lemma}
\newtheorem{claim}[theorem]{Claim}
\newtheorem{remark}[theorem]{Remark}
\newtheorem{definition}[theorem]{Definition}
\newenvironment{proof}
      {\medskip\noindent{\bf Proof :}\hspace{1em}}
      {\hfill$\Box$\medskip}
\newenvironment{algorithm}[1]
      {\begin{minipage}{0.965\textwidth}\medskip\noindent
       {\bf Algorithm \refstepcounter{theorem}\thetheorem\ #1}
       \begin{enumerate}\raggedright \sf }
      {\end{enumerate}\end{minipage}}
\newcommand{\nfrac}{\nicefrac}
\newcommand{\CH}{\mbox{${\mathcal H}$}}
\newcommand{\CB}{\mbox{${\mathcal B}$}}
\newcommand{\CG}{\mbox{${\mathcal G}$}}
\newcommand{\flag}{\mbox{\rm flag}}
\newcommand{\R}{\mathbb{R}}
\newcommand{\Rplus}{\R_+}
\newcommand{\vv}{\mbox{\boldmath $v$}}
\newcommand{\ww}{\mbox{\boldmath $w$}}
\newcommand{\yy}{\mbox{\boldmath $y$}}
\newcommand{\zz}{\mbox{\boldmath $z$}}
\newcommand{\ee}{\mbox{\boldmath $e$}}
\newcommand{\pp}{\mbox{\boldmath $p$}}
\newcommand{\xx}{\mbox{\boldmath $x$}}
\newcommand{\mmu}{\mbox{\boldmath $\mu$}}
\renewcommand{\aa}{\mbox{\boldmath $a$}}
\DeclareMathOperator*{\argmax}{arg\,max}
\DeclareMathOperator*{\argmin}{arg\,min}
\renewcommand{\AA}{\mbox{\boldmath $A$}}
\newcommand{\LL}{\mbox{\boldmath $L$}}
   \newcommand{\bb}{\mbox{\boldmath $b$}}
   \newcommand{\Hlin}{\CH_{lin}}
   \newcommand{\Hsplc}{\CH_{splc}}
   \newcommand{\Hces}{\CH_{ces}}
   \newcommand{\Hleon}{\CH_{leon}}
   \newcommand{\bz}{\mbox{\boldmath $0$}}
   \newcommand{\bo}{\mbox{\boldmath $1$}}
   \newcommand{\iid}{\textsl{i.i.d.\ }} 
   \newcommand{\eg}{e.g.\ }
   \newcommand{\reals}{\mathbb{R}}
   \newcommand{\naturals}{\mathbb{N}}
   \newcommand{\e}{\mathrm{e}}
   \newcommand{\conv}{\mathrm{conv}}   
   \newcommand{\dist}{\mathrm{dist}}   
   \newcommand{\poly}{\mathrm{poly}}
   \newcommand{\err}{\mathrm{err}}
  \newcommand{\inner}[1]{\langle #1\rangle}
  \newcommand{\Acal}{{\mathcal A}}
  \newcommand{\Xcal}{{\mathcal X}}
  \newcommand{\Ycal}{{\mathcal Y}}
\date{}
\title{Learning Economic Parameters from Revealed Preferences}
\author{Maria-Florina Balcan$^1$
\and Amit Daniely$^2$
\and Ruta Mehta$^3$
\and Ruth Urner$^1$
\and Vijay V. Vazirani$^3$\\\\
\small{$^1$Department of Machine Learning, Carnegie Mellon University}\\
\texttt{\small ninamf@cs.cmu.edu, rurner@cs.cmu.edu}\\
\small{$^2$Department of Mathematics, The Hebrew University}\\
\texttt{\small amit.daniely@mail.huji.ac.il}\\
\small{$^3$School of Computer Science, Georgia Institute of Technology}\\
\texttt{\small rmehta@cc.gatech.edu, vazirani@cc.gatech.edu}
}
\begin{document}
\maketitle

\begin{abstract}
 A recent line of work, starting with Beigman and Vohra~\cite{BeigmanV06} and Zadimoghaddam and Roth~\cite{ZadimoghaddamR12}, has
 addressed the problem of {\em learning} a utility function from  revealed preference data. The goal here is to make use of past data
 describing the purchases of a utility maximizing agent when faced with certain prices and budget constraints in order to produce a
 hypothesis function that can accurately forecast the {\em future} behavior of the agent.

  In this work we advance this line of work by providing sample complexity guarantees and efficient algorithms for a number of
  important classes.  By drawing a connection to recent advances in  multi-class learning, we provide a computationally efficient
  algorithm with tight sample complexity guarantees ($\Theta(d/\epsilon)$ for the case of $d$ goods) for learning linear utility
  functions under a linear price model. This solves an open question in Zadimoghaddam and Roth~\cite{ZadimoghaddamR12}. Our technique
  yields numerous generalizations  including the ability to learn other well-studied classes of utility functions,  to deal with a
  misspecified model, and with non-linear prices.
  \medskip
  \medskip
  \medskip

  \noindent{\bf Keywords:} 
  revealed preference, statistical learning, query learning, efficient algorithms,  Linear, SPLC, CES and Leontief utility functions

\end{abstract}
\newpage

\section{Introduction}
A common assumption in Economics  is that  agents are utility maximizers, meaning that  the agent, facing
prices, will choose to buy the bundle of goods that she most prefers among all bundles that she can afford, according to some concave,
non-decreasing utility function \cite{MasColell}.
In the classical ``revealed preference'' analysis \cite{varian05}, the goal is to produce a model of the agent's utility function that can explain her behavior
based on past data.  Work on this topic has a long history in economics
\cite{uzawa,MC-RP1,MC-RP2,HSH,Rich,Afriat,var1,die,RPEmp1,RPEMP2}, beginning with the seminal work by Samuelson (1948) \cite{Samuelson}. Traditionally,
this work has focused on the ``rationalization'' or ``fitting the sample'' problem, in which explanatory utility functions are
constructively generated from finitely many agent price/purchase observations. For example, the seminal work of Afriat~\cite{Afriat} showed
(via an algorithmic construction) that any finite sequence of observations is rationalizable if and only if it is rationalizable by a
piecewise linear, monotone, concave utility function. Note, however, that just because a function agrees with a set of data does not
imply that it will necessarily predict future purchases well.

A recent exciting line of work, starting with Beigman and Vohra~\cite{BeigmanV06} introduced a statistical learning  analysis of the problem of learning the utility function from past data with the explicit formal goal of having predictive or forecasting properties.  The goal here is to make use of the observed data  describing the behavior of the agent (i.e., the bundles the agent bought when faced with certain prices and budget constraints) in order to produce a hypothesis function that can accurately predict or forecast the {\em future} purchases of a utility maximizing agent.
 \cite{BeigmanV06}  show that
without any other assumptions on utility besides monotonicity and
concavity, the sample complexity of learning (in a statistical or probably approximately
correct sense) a demand and hence utility function
is infinite. This shows the importance of focusing on important sub-classes since fitting just any monotone, concave function to the data will not be predictive for future events.

Motivated by this, Zadimoghaddam and Roth~\cite{ZadimoghaddamR12}  considered specific classes of utility functions including  the commonly used class of
linear utilities.  In this work, we advance this line of work by providing sample complexity guarantees and efficient
algorithms for a number of important classical classes (including linear, separable piecewise-linear concave (SPLC), CES and Leontief
\cite{MasColell}), significantly expanding the cases where we have strong learnability results.  At a technical level, our work
establishes connections between learning from revealed preferences and problems of multi-class learning, combining recent advances on
intrinsic sample complexity of multi-class learning based on compression schemes \cite{DanielyS14} with a new algorithmic analysis yielding time- and
sample-efficient procedures. We believe that this technique may apply to a variety of learning problems in economic and game theoretic
contexts.

\subsection{Our Results}
 For the case of linear utility functions, we establish a connection to  the so-called  structured prediction problem of \emph{$D$-dimensional linear classes} in
theoretical machine learning (see e.g.,~\cite{Collins00,Collins02,LaffertyMP01}). By using and improving very recent results of \cite{DanielyS14}, we provide a computationally efficient algorithm with  {\em tight} sample complexity guarantees for learning linear utility functions under a linear price model (i.e., additive over goods) for the statistical revealed
preference setting. This improves over the bound in Zadimoghaddam and Roth~\cite{ZadimoghaddamR12} by a factor of $d$ and resolves their open question concerning the right sample complexity of this problem.
In addition to noting that we can actually fit the types of problems stemming from revealed preference in the structured prediction framework of Daniely and Shalev-Shwartz \cite{DanielyS14}, we also provide a much more efficient and practical algorithm for this learning problem. We specifically show that we can reduce their compression based technique to a classic SVM problem which can be solved via convex programming\footnote{Such an algorithm has been used in the context of revealed preferences in a more applied work of \cite{Lahaie10}; but we prove correctness and tight sample complexity.}. This latter result could be of independent interest to Learning Theory.

The connection to the  structured prediction problem with $D$-dimensional linear classes is quite powerful and it  yields  numerous generalizations. It immediately implies strong sample complexity guarantees (though not
necessary efficient algorithms) for other important revealed preference settings. For linear utility functions we can deal with
non-linear prices (studied for example in \cite{Lahaie10}), as well as with a misspecified model --- in learning theoretic terms this
means the  agnostic setting where the target function is consistent with a linear utility function on a $1-\eta$ fraction of bundles;
furthermore, we can also  accommodate non-divisible goods. Other classes of utility functions including SPLC and CES can be readily
analyzed in this framework as well.

We additionally study  {\em exact} learning via revealed
preference queries: here  the goal of the learner is to determine the underlying utility function exactly, but it has more power since it can
 choose instances (i.e., prices and budgets) of its own choice and obtain the labels (i.e., the bundles the buyer  buys). We carefully exploit the structure of the optimal solution (which can be determined based on the KKT conditions) in order to design query efficient algorithms. This could be relevant for scenarios where sellers/manufacturers with many
different products have the ability to explicitly set desired prices  for exploratory purposes (e.g., with the goal to be able to predict how  demands
change with change in prices of different goods, so that they can price their
goods optimally).

As a point of comparison, for both statistical and the query setting, we also analyze learning classes of utility functions directly (from utility values instead of from revealed preferences). Table \ref{t:results_overview} summarizes our sample complexity bounds for learning from revealed preferences (RP) and from utility values (Value) as well as our corresponding bounds on the query complexity
(in the table we omit $\log$-factors).
Previously known results are indicated with a $*$.

\begin{table}[!h]
\label{t:results_overview}
\begin{tabular}{|c|c|c|c|c|}\hline
& {\bf RP, Statistical} & {\bf RP, Query} &  {\bf Value, Statistical} & {\bf Value, Query} \\ \hline\hline
Linear & $\Theta(d/\epsilon)$ & $O(nd)$ & $O(d/\epsilon)^*$  & $O(d)^*$ \\ \hline
SPLC (at most $\kappa$  & $\ O(\kappa d/\epsilon)$ (known \ & \multirow{2}{*}{$O(n\kappa d)$} &   \multirow{2}{*}{?}
&\multirow{2}{*}{$O(n\kappa d)$}  \\
segments per good) & \ segment lengths) & & & \\ \hline
\multirow{2}{*}{CES}  & $O(d/\epsilon)$ & \multirow{2}{*}{$O(1)$} &   $O(d/\epsilon)$  &\multirow{2}{*}{
$O(d)$}  \\
& (known $\rho$) & & (known $\rho$) & \\ \hline
Leontief & $O(1)$ & $O(1)$ & $O(d/\epsilon)$ & $O(d)$ \\
\hline
\end{tabular}
\caption{Markets with $d$ goods, and parameters of (bit-length) size $n$}
\end{table}

\section{Preliminaries}
\label{s:preliminaries}
Following the framework of \cite{ZadimoghaddamR12}, we consider a market that consists of 
 a set of agents (buyers), and
a set $\CG$ of $d$ goods of unit amount each. 
The prices of the goods are indicated by a price vector $\pp = (p_1, \ldots, p_d)$.
A buyer comes with a budget of money, say $B$, and intrinsic preferences over bundles of goods.
For most of the paper we focus on divisible goods.
A bundle of goods is represented by a vector $\xx = (x_1, \ldots, x_d) \in [0,1]^d$, where the $i$-th component $x_i$ denotes the amount of the $i$-th good in the bundle.
The price of a bundle is computed as the inner product $\inner{\pp, \xx}$.
Then the preference over bundles of an agent is defined by a non-decreasing, non-negative and concave utility function $U: [0,1]^d \rightarrow \Rplus$.
The buyer uses her budget to buy a bundle of goods that maximizes her utility.

In the revealed preference model, when the buyer
is provided with $(\pp,B)$, we observe the optimal bundle that she buys. 
Let this optimal bundle be denoted by $\CB_U(\pp,B)$, which is an optimal
solution of the following optimization problem: 

\begin{equation}\label{eq.ob}
\begin{array}{lcl}
\argmax_{\xx \in [0,\ 1]^d} & : & U(\xx) \\
s.t. && \inner{\pp,\xx} \le B \\ 
\end{array}
\end{equation}

We assume that if there are multiple optimal bundles, then the buyer will choose a cheapest one, i.e., let
$S=\argmax_{\xx \in [0,\ 1]^d} U(\xx)$ at $(\pp,B)$, then $\CB_U(\pp,B)\in \argmin_{\xx \in S} \inner{\xx, \pp}$.
Furthermore, if there are multiple optimal bundles of the same price, ties a broken according to some rule (\eg, the buyer prefers lexicographically earlier bundles).

\paragraph{Demand functions}
While a utility function $U$, by definition, maps bundles to values, it also defines a mapping from pairs $(\pp, B)$ of price vectors and budgets to an optimal bundles under $U$. 
We denote this function by $\widehat{U}$ and call it the \emph{demand function} corresponding to the utility function $U$.
That is, we have $\widehat{U}: \Rplus^d\times \Rplus \to [0,1]^d$, and $\widehat{U}(\pp, B) = \CB_U(\pp, B)$.
For a class of utility function $\CH$ we denote the corresponding class of demand functions by $\widehat{\CH}$.

\subsection{Classes of utility functions}
\label{ss:classes}
Next we discuss four different types of utility functions that we analyze in this paper, namely linear, SPLC, CES and Leontief
\cite{MasColell}, and define their corresponding classes formally.  Note that at given prices $\pp$ and budget $B$, $\CB_{U}(\pp,B) =
\CB_{\alpha U}(\pp,B)$, for all $\alpha>0$, i.e., positive scaling of utility function doesn't affect optimal bundles. Since we are
interested in learning $U$ by asking queries to $\CB_U$ we will make some normalizing assumptions in the following definitions. 
We start with the simplest and the most studied class of functions, namely linear utilities.

\begin{definition}
[Linear $\CH_{lin}$] A utility function $U$ is called linear if the utility from a bundle $\xx$ is linear in each good.
Formally, for some $\aa \in \Rplus^d$, we have $U(\xx) = U_{\aa}(x) = \sum_{j \in \CG} a_j x_j$. 
It is wlog to assume that $\sum_j
a_j=1$. We let $\Hlin$ denote the class of linear utility functions. 
\end{definition}

Next, is a generalization of linear functions that captures decreasing marginal utility, called separable piecewise-linear concave. 

\begin{definition}
[Separable Piecewise-Linear Concave (SPLC) $\CH_{splc}$] 
A utility function function $U$ is called SPLC if, $U(\xx)=\sum_{j \in \CG} U_j(x_j)$ where each $U_j:\Rplus \rightarrow \Rplus$ is non-decreasing piecewise-linear concave
function. The number of (pieces) segments in $U_j$ is denoted by $|U_j|$ and the 
$k^{th}$ segment of $U_j$  denoted by $(j, k)$. The slope of a segment specifies the rate at which the agent derives utility
per unit of additional good received. Suppose segment $(j, k)$ has domain $[a, b] \subseteq \Rplus$, and slope $c$. Then,
we define $a_{jk} = c$ and $l_{jk} = b-a$; $l_{j|U_j|}=\infty$). Since $U_j$ is concave, we have $a_{j(k-1)} > a_{jk},\ \forall k\ge
2$. We can view an SPLC  function, with $|U_j| \leq \kappa $ for all $j$, as defined by to matrices $\AA, \LL \in \Rplus^{d\times
\kappa}$ and we denote it by $U_{\AA\LL}$.  We let $\Hsplc$ denote the class of all SPLC functions.
\end{definition}

Linear and SPLC functions are applicable when goods are substitutes, i.e., one good can be replaced by another to
maintain a utility value. The other extreme is when goods are complementary, i.e., all goods are needed in some proportions
to obtain non-zero utility. Next, we describe a class of functions, used extensively in economic literature,
that captures
both substituteness and complementarity in different ranges.
\medskip

\begin{definition}
[Constant elasticity of substitution (CES) $\CH_{ces}$] A utility function $U$ is called CES if for some $-\infty < \rho \le 1$, and $\aa\in\Rplus^d$ we have $U(\xx)= U_{\aa\rho}(\xx)= (\sum_j a_j x_j^\rho)^{\nfrac{1}{\rho}}$.
Again it is wlog to assume that $\sum_j a_j=1$. Let $\CH_{ces}$ be the set of all CES functions. 
Further, for some fixed $\rho$, we let $\CH_{ces}^{\rho}$ denote the subclass of functions with parameter $\rho$. 
\end{definition}

Note that if $\rho=1$ for some CES function, then the function is linear, that is $\CH_{ces}^{1} = \CH_{lin}$.  Further, under CES functions with $\rho >0$, the goods behave
as substitutes. However, for $\rho \le 0$, they behave as complements, i.e., if an $x_j=0$ while $a_j>0$ the utility derived remains
zero, regardless of how much amounts of other goods are given. As $\rho\rightarrow -\infty$, we get Leontief function at the limit where
goods are completely complementary, i.e., a set of goods are needed in a specific proportion to derive any utility. 

\begin{definition}
[Leontief $\CH_{leon}$]
A utility function $U$ is called a Leontief function if $U(\xx)=\min_{j \in \CG} \nfrac{x_j}{a_j}$, where $\aa\ge 0$ and (wlog) $\sum_j a_j=1$.
Let $\CH_{leon}$ be the set of all Leontief functions on $d$ goods.
\end{definition}

In order to work with finite precision, in all the above definition we assume that the parameters defining the utility functions are
rational numbers of (bit-length) size at most $n$. 

\subsection{Learning models: Statistical \& Query}

We now introduce the formal models under which we analyze the learnability of utility functions.
We start by reviewing the general model from  statistical learning theory for multi-class classification.
We then explain the more specific model for learning from revealed preferences as introduced in \cite{ZadimoghaddamR12}.
Finally, we also consider a non-statistical model of exact learning from queries, which is explained last in this section.

\paragraph{\bf General model for statistical multi-class learning}
Let $\Xcal$ denote a \emph{domain}  set and let $\Ycal$ denote a \emph{label} set.
A \emph{hypothesis} (or \emph{label predictor} or \emph{classifier}), is a function $h:\Xcal\to \Ycal$, and a \emph{hypothesis class} $\CH$ is a set of hypotheses.
We assume that data is generated by some (unknown) probability distribution $P$ over $\Xcal$.
This data is labeled by some (unknown) \emph{labeling function} $l:\Xcal \to \Ycal$.
The quality of a hypothesis $h$ is measured by its \emph{error} with respect to $P$ and $l$:
\[
 \err_{P}^{l}(h) = \Pr_{x\sim P} [l(x) \neq h(x)],
\]
A \emph{learning algorithm} (or \emph{learner}) gets as input a sequence $S = ((x_1,y_1),\dots,$ $(x_m,y_m))$ and outputs a hypothesis. 

\begin{definition}[Multi-class learnability (realizable case)]\label{d:learnability}
 We say that an algorithm $\Acal$ \emph{learns} some hypothesis class $\CH\subseteq \Ycal^\Xcal$, if there exists a function $m: (0,1) \times (0,1) \to \naturals$  such that, for all distributions $P$ over $\Xcal$, and for all $\epsilon >0$ and  $\delta>0$, when given a sample $S = ((x_1,y_1),\dots,(x_m,y_m))$ of size at least $m = m(\epsilon,\delta)$ with the $x_i$ generated \iid from $P$ and $y_i = h(x)$ for some $h\in\CH$, then, with probability at least $1-\delta$ over the sample, $\Acal$ outputs a hypothesis $h_\Acal: \Xcal \to \Ycal$ with $\err_{P}^{h}(h_\Acal) \leq \epsilon$. 
\end{definition}

The complexity of a learning task is measured by its \emph{sample complexity}, that is, informally, the amount of data with which an optimal learner can achieve low error.
 We call the (point-wise) smallest function $m: (0,1) \times (0,1) \to \naturals$ that satisfies the condition of Definition \ref{d:learnability} the \emph{sample complexity of the algorithm $\Acal$ for learning $\CH$}. 
 We denote this function by $m[\Acal, \CH]$. 
 We call the smallest function $m: (0,1) \times (0,1) \to \naturals$ such that there exists a learning algorithm $\Acal$ with $m[\Acal,\CH]\leq m$ the \emph{sample complexity of learning $H$} and denote it by $m[\CH]$. 

\paragraph{\bf Statistical learning  from revealed preferences} As in \cite{ZadimoghaddamR12}, we consider a statistical learning setup where data is generated by a distribution $P$ over pairs of price vectors and budgets (that is, $P$ is a distribution over $\Rplus^d \times \Rplus$).
In this model, a learning algorithm $\Acal$ gets as input a sample $S=(((\pp_1,B_1), \CB_U(\pp_1, B_1)),$ $ \ldots, ((\pp_m,B_m), \CB_U (\pp_m, B_m)))$, where the $(\pp_i,B_i)$ are generated \iid from the distribution $P$ and are labeled by the optimal bundles under some utility function $U$. 
It outputs some function $\Acal(S): \Rplus^d \times \Rplus \to [0,1]^d$ that maps pairs of price vectors and budgets to bundles.
A learner is considered successful if it learns to predict a bundle of value that is the optimal bundles' value.
 
\begin{definition}[Learning from revealed preferences]
\label{d:learning_rev_pref}
An algorithm $\Acal$ is said to \emph{learn a class of utility functions $\CH$ from revealed preferences}, if for all $\epsilon, \delta >0$, there exists a sample size $m = m(\epsilon,\delta) \in \naturals$, such that, for any distribution $P$ over $\Rplus^d \times \Rplus$ (pairs of price vectors and budgets) and any target utility function $U\in \CH$, if $S=(((\pp_1,B_1), \CB_U(\pp_1, B_1)), \ldots, ((\pp_m,B_m), \CB_U (\pp_m, B_m)))$ is a sequence of \iid samples generated by $P$ with $U$, then, with probability at least $1-\delta$ over the sample $S$, the output utility function $\Acal(S)$ satisfies
\[
 \Pr_{(\pp,B)\sim P}\left[ U(\CB_U(\pp,B)) \neq U(\CB_{\Acal(S)}(\pp,B)) \right] \leq \epsilon.
\]
\end{definition}

Note that the above learning requirement is satisfied if the learner ``learns to output the correct optimal bundles''. 
That is, to learn a class $\CH$ of utility functions from revealed preferences, in the sense of Definition \ref{d:learning_rev_pref}, it suffices to learn the corresponding class of demand functions $\widehat{\CH}$ in the standard sense of Definition \ref{d:learnability} (with $\Xcal = \Rplus^d \times \Rplus$ and $\Ycal = [0,1]^d$).
This is what the algorithm in \cite{ZadimoghaddamR12} and our learning algorithms for this setting actually do. 
The notion of sample complexity in this setting can be defined analogously to the definition above.

\subsubsection{Model for exact learning from queries}

In the query learning model, the goal of the learner is to determine the underlying utility function exactly.
The learner can choose instances and obtain the labels of these instances from some oracle. 
A \emph{revealed preference query learning algorithm} has access to an oracle that, upon given the input (query) of a price vector and a budget $(\pp,B)$, outputs the corresponding optimal bundle $\CB_U(\pp,B)$ under some utility function $U$.
Slightly abusing notation, we also denote this oracle by $\CB_U$.

\begin{definition}[Learning from revealed preference queries]
 A learning algorithm \emph{learns a class $\CH$ from $m$ revealed preference queries}, if for any function $U\in\CH$, if the learning algorithm is given responses from oracle $\CB_U$, then after at most $m$ queries the algorithm outputs the function $U$. 
\end{definition}

Both in the statistical and the query setting, we analyze a revealed preference learning model as well as a model of learning classes of utility function ``directly'' from utility values. Due to limited space, these latter definition and results have been moved to the Appendix, Sections \ref{s:statistical_value} and \ref{ss:value_query}.

 \section{Efficiently learning linear multi-class hypothesis classes} 
 \label{sss:linear_classes}

 We start by showing that certain classes of multi-class predictors, so-called \emph{$D$-dimensional linear classes} (see Definition \ref{d:D_dim_linear_class} below), can be learnt efficiently both in terms of their sample complexity and in terms of computation.
 For this, we make use of a very recent upper bound on their sample complexity by Daniely and Shalev-Shwartz~\cite{DanielyS14}. 
 At a high level,  their result obtains strong  bounds on the sample complexity of $D$-dimensional linear classes (roughly $D/\epsilon$) by using an algorithm  and sample complexity analysis based on  a compression scheme ---  which roughly means that  the hypothesis produced can be uniquely described by a small subset of $D$ of the training examples.
 We show that their algorithm is actually equivalent to a multi class SVM formulation, and thereby obtain a computationally efficient algorithm with optimal sample complexity.
 In the next sections we then show how learning classes of utility functions from revealed preferences can be cast in this framework.
 
 \begin{definition}\label{d:D_dim_linear_class}
 A hypothesis classes $\CH\subseteq \Ycal^\Xcal$ is a \emph{$D$-dimensional linear class}, if there exists a function $\Psi: \Xcal \times \Ycal \to \reals^D$ such that for every $h\in \CH$, there exists a vector $\ww\in\reals^D$ such that
 $h(x) \in \argmax_{y\in \Ycal} \inner{\ww, \Psi(x,y)}$ for all $x\in\Xcal$. 
 We then also denote the class by $\CH_\Psi$ and its members by $h_{\ww}$.  
 \end{definition}
 
 For now, we assume that (the data generating distribution is so that) the set $\argmax_{y\in \Ycal} \inner{\ww, \Psi(x,y)}$ contains only one element, that is, there are no ties\footnote{The work of \cite{DanielyS14} handled ties using a ``don't know" label; to remove technicalities, we make this distributional assumption in this version of our work}.
The following version of the multi-class support vector machine (SVM) has been introduced by Crammer and Singer~\cite{CrammerS01}. 
\begin{algorithm}[h]
   \caption{Multi-class (hard) SVM \cite{CrammerS01}}
   \label{a:svm}
\begin{algorithmic}
   \STATE {\bfseries Input:} Sample $(\xx_1,y_1),\ldots,(\xx_m,y_m) \in \Xcal\times \Ycal$
   \STATE  Solve: $\ww = \argmin_{\ww\in\mathbb{R}^d} \|\ww\|$
   \STATE  such that $ \inner{\ww,\Psi(\xx_i,y_i)-\Psi(\xx_i,y)}\ge 1  \quad\forall  i\in[m], y \neq y_i$ 
   \STATE {\bfseries Return:} vector $\ww$ 
\end{algorithmic}
\end{algorithm}

\begin{remark}\label{r:svm_efficient}
Suppose that given $\ww\in\mathbb{R}^d$ and $x\in\Xcal$, it is possible to efficiently compute
some
$
y' \in \mathrm{argmax}_{y\not in \mathrm{argmax}_{y''} \inner{\ww,\Psi(x,{y''})}} 
\inner{\ww,\Psi(x,y)}.
$
That is, it is possible to compute a label $y$ 
in the set of ``second best'' labels.  In that case, it is not hard to see that SVM can be solved efficiently.  The reason is that this
gives  a separation oracle. SVM minimizes a convex objective subject to, possibly exponentially many, linear constraints. 
For a given $\ww$, 
a violated constraint can be efficiently detected (by one scan over the input sample) by observing that $\inner{\ww,
\Psi(x_i,y_i)-\Psi(x_i,y')} < 1$. 
\end{remark}

The following theorem on the sample complexity of the above SVM formulation, is based on the new analysis of linear classes by \cite{DanielyS14}.
We show that the two algorithms (the SVM and the one in \cite{DanielyS14}) are actually the same.

\begin{theorem}
\label{t:svm_for_d_linear}
Let $\CH_\Psi$ be some $D$-dimensional linear class.
Then the sample complexity of SVM for $\CH_\Psi$ satisfies 
 $
  m[SVM, \CH_\Psi](\epsilon,\delta) = O\left(\frac{D\log(1/\epsilon)+\log(1/\delta)}{\epsilon} \right).
 $
\end{theorem}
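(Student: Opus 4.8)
The plan is to show that Algorithm~\ref{a:svm} is a sample compression scheme of size $D$ that always outputs a hypothesis consistent with its input sample, and then to invoke the classical realizable-case generalization bound for size-$D$ compression schemes, which gives error $O\!\left(\frac{D\log(m/D)+\log(1/\delta)}{m}\right)$ from $m$ examples; inverting this yields exactly the stated sample complexity. Along the way this also establishes the claim that the SVM coincides with the compression-based learner of \cite{DanielyS14}.

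First I would verify that on a realizable sample Algorithm~\ref{a:svm} is feasible and returns a consistent classifier. Realizability provides some $\ww_0$ with $h_{\ww_0}(\xx_i)=y_i$ for all $i$; under the no-ties assumption this means $\langle \ww_0,\Psi(\xx_i,y_i)-\Psi(\xx_i,y)\rangle>0$ strictly for all $i$ and all $y\neq y_i$, so rescaling $\ww_0$ makes all of these finitely many margins at least $1$. Hence the feasible set is a nonempty closed convex set, and since $\|\ww\|^2$ is strictly convex it is minimized there at a \emph{unique} point $\ww^\star$; feasibility then gives $\langle \ww^\star,\Psi(\xx_i,y_i)\rangle>\langle \ww^\star,\Psi(\xx_i,y)\rangle$ for $y\neq y_i$, i.e.\ $h_{\ww^\star}(\xx_i)=y_i$, so the output hypothesis (which lies in $\CH_\Psi$ by construction) is consistent.

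The crux is bounding the compression size by $D$. Writing $v_{i,y}=\Psi(\xx_i,y_i)-\Psi(\xx_i,y)$, the KKT conditions of the (convex) min-norm program give $\ww^\star=\sum_{(i,y)} \lambda_{i,y} v_{i,y}$ with $\lambda_{i,y}\ge 0$ supported on the active constraints. Since $\ww^\star\in\reals^D$, the conical Carath\'eodory theorem lets us retain at most $D$ of these terms, indexed by a set $J'$ of constraints with $|J'|\le D$, so that $\ww^\star=\sum_{(i,y)\in J'}\lambda'_{i,y}v_{i,y}$ with $\lambda'_{i,y}\ge 0$ and every constraint in $J'$ active. The compression map then keeps the $\le D$ examples $\{(\xx_i,y_i): (i,y)\in J'\text{ for some }y\}$, and reconstruction re-runs Algorithm~\ref{a:svm} on them. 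I claim this returns $\ww^\star$: the subsample's constraint set contains $J'$, so its feasible region is contained in that of the program with constraint set $J'$ alone; the displayed KKT data certify $\ww^\star$ as the (unique) min-norm point of the latter; and $\ww^\star$ remains feasible for the subsample program because it is feasible for the full one; hence $\ww^\star$ is the min-norm point of the subsample program as well. This makes Algorithm~\ref{a:svm} a size-$D$ compression scheme outputting consistent hypotheses, and the compression generalization bound then finishes the proof.

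I expect the delicate step to be precisely this Carath\'eodory-plus-KKT bookkeeping — in particular, arguing that re-running the SVM on the $\le D$ retained examples reproduces $\ww^\star$ exactly rather than some vector of strictly smaller norm, which is why uniqueness of the min-norm point and careful tracking of which constraints land in $J'$ are essential.
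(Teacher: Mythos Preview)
Your proof is correct but takes a different route from the paper's. The paper does not establish directly that SVM is a size-$D$ compression scheme. Instead, it shows that the SVM output $\ww$ is a positive scalar multiple of the vector $\ww'$ returned by the algorithm of \cite{DanielyS14}, which outputs the minimum-norm point of $\conv(Z)$ with $Z=\{\Psi(\xx_i,y_i)-\Psi(\xx_i,y):i\in[m],\,y\neq y_i\}$; the argument is geometric, proving that both $\ww/\|\ww\|$ and $\ww'/\|\ww'\|$ are the (unique) margin-maximizing unit vector for $Z$. Since $h_{\ww}=h_{\ww'}$, the paper then simply invokes Theorem~5 of \cite{DanielyS14}, where the compression argument has already been carried out for that algorithm.

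Your approach is more self-contained: you bypass the equivalence with \cite{DanielyS14} and extract the compression structure of SVM directly via KKT and conical Carath\'eodory. This has the advantage of not needing to open the black box of \cite{DanielyS14}, and it makes explicit which $\le D$ examples serve as the compressed set. The paper's route, in turn, buys an identification of SVM with a geometrically natural object (the closest point of $\conv(Z)$ to the origin), which is of independent interest. Your sandwich argument for the subsample---Feasible(full) $\subseteq$ Feasible(subsample) $\subseteq$ Feasible($J'$), with $\ww^\star$ optimal for the outer program and feasible for the inner one---is clean and correct; the only cosmetic point is to make explicit that Carath\'eodory is applied \emph{within} the original KKT support, so that every retained constraint is automatically active at $\ww^\star$, which you do note but only in passing.
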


\begin{proof}
Let $S = (\xx_1,y_1),\ldots,(\xx_m,y_m)$ be a sample that is realized by $\CH_\Psi$.
That is, there exists a vector $\ww \in \reals^d$ with
$
 \inner{\ww, \Psi(\xx_i, y_i)} > \inner{\ww, \Psi(\xx_i,y)}$ for all  $y \neq y_i. 
$
Consider the set
$
Z=\{\Psi(x_i,y_i)-\Psi(x_i,y)\mid i\in [m],\;y\ne y_i\}.
$
The learning algorithm for $\CH_\Psi$ of \cite{DanielyS14}
outputs the minimal norm vector $\ww'\in \conv(Z)$. 
According to Theorem 5 in \cite{DanielyS14} this algorithm successfully learns $\CH_\Psi$ and has sample complexity $O\left(\frac{D\log(1/\epsilon)+\log(1/\delta)}{\epsilon} \right)$.
We will show that the hypothesis returned by that algorithm is the same hypothesis as the one returned by SVM.
Indeed, let $\ww$ be the vector that solves the SVM program and let $\ww'$ be the vector found by the algorithm of \cite{DanielyS14}. 
We will show that $\ww=\frac{\|\ww\|}{\|\ww'\|}\cdot \ww'$. This is enough since in that case $h_{\ww}=h_{\ww'}$.

We note that $\ww$ is the same vector that solves the {\em binary} SVM problem defined by the sample $\{(z,1)\}_{z\in Z}$.
It well known (see, \eg, \cite{MLbook}, Lemma 15.2) that the hyperplane defined by $\ww$ has maximal margin. 
That is, the unit vector $\ee=\frac{\ww}{\|\ww\|}$ maximizes the quantity
\[
\mathrm{mar}(\ee''):=\min\{\inner{\ee'',\zz}\mid \zz\in Z\}
\]
over all unit vectors $\ee''\in S^{d}$. 
The proof of the theorem now follows from the following claim:
\begin{claim}
Over all unit vectors, $\ee'=\frac{\ww'}{\|\ww'\|}$ maximizes the margin.
\end{claim}
\begin{proof}
Let $\ee''\ne \ee'$ be a unit vector. We must show that $\mathrm{margin}(\ee'')<\mathrm{margin}(\ee')$. 
Note that $\mathrm{margin}(\ee')>0$, since $\ww'$ is shown in \cite{DanielyS14} to realize the sample $S$ (that is $\inner{\ww, \zz} >0$ for all $\zz\in Z$ and thus also for all $\zz\in\conv{Z}$).
Therefore, we can assume w.l.o.g. that $\mathrm{margin}(\ee'')>0$. 
In particular, since $\mathrm{margin}(-\ee')=-\mathrm{margin}(\ee')<0$, we have that $\ee''\ne -\ee'$.

Since, $\mathrm{margin}(\ee'')>0$, we have that $\mathrm{margin}(\ee'')$ is the distance between the hyperplane $H''=\{\xx\mid \inner{\ee'',\xx}=0\}$ and $\mathrm{conv}(Z)$. Since  $\ee''\notin \{\ee',-\ee'\}$, there is a vector in $\vv\in H''$ with $\inner{\ee',\vv}\ne 0$. Now, consider the function
\[
t\mapsto \|t\cdot \vv-\ww'\|^2=t^2\cdot\|\vv\|^2+\|\ww'\|^2-2t\inner{\vv,\ww'}.
\]
Since the derivative of this function at $0$ is not $0$, for some value of $t$ we have
$
\dist(t\cdot \vv,\ww')<\dist(0,\ww').
$
Therefore,
$
\mathrm{margin}(\ee'')=\dist(H'',Z)\le \dist(t\cdot \vv,\ww')<\dist(0,\ww')=\mathrm{margin}(\ee').
$
\end{proof}

\end{proof}

\section{Statistical learning from revealed preferences}

In the next section, we show that learning utility functions from revealed preferences can in many cases be cast as learning a $D$-dimensional linear class $\CH_\Psi$ for a suitable encoding function $\Psi$ and $D$.
Throughout this section, we assume that the data generating distribution is so that there are no ties for the optimal bundle with respect to the agents' utility function (with probability $1$).
This is, for example, the case if the data-generating distribution has a density function.

\subsection{Linear}
Learnability of $\Hlin$ from revealed preferences is analyzed in \cite{ZadimoghaddamR12}. They obtain a bound of (roughly) $d^2/\epsilon$ on the sample complexity.
We show that the quadratic dependence on the number of goods is not needed. 
The sample complexity of this problem is (roughly) $d/\epsilon$. 

We will show that the corresponding class of demand functions $\widehat{\Hlin}$ is actually a $d$-dimensional linear class.
Since learnability of a class of utility functions in the revealed preference model (Definition \ref{d:learning_rev_pref}) is implied by learnability of the corresponding class of demand functions (in the sense of Definition \ref{d:learnability}), Theorem \ref{t:svm_for_d_linear} then implies the upper bound in the following result:

\begin{theorem}\label{t:linear_result}
 The class $\Hlin$ of linear utility functions is efficiently learnable in the revealed preference model with sample complexity   
  $
   O\left(\frac{d\log(1/\epsilon)+\log(1/\delta)}{\epsilon} \right).
  $
 Moreover, the sample complexity is lower bounded by
 $
 \Omega\left(\frac{(d-1) + \log(1/\delta)}{\epsilon}\right).
 $ 
\end{theorem}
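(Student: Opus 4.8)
The plan is to establish the upper and lower bounds separately. For the upper bound I will exhibit the class of demand functions $\widehat{\Hlin}$ as a $d$-dimensional linear class; then Theorem~\ref{t:svm_for_d_linear} and the reduction noted after Definition~\ref{d:learning_rev_pref} (it suffices to learn the demand functions) give the sample complexity, and Remark~\ref{r:svm_efficient} gives efficiency. For the lower bound I will show $\widehat{\Hlin}$ has Natarajan dimension at least $d-1$ and invoke the generic PAC lower bound.

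\textbf{Upper bound.} Normalize each $U_\aa\in\Hlin$ so $\sum_j a_j=1$. Let $P_{\pp,B}=\{\yy\in[0,1]^d:\inner{\pp,\yy}\le B\}$ be the budget polytope, and define $\Psi:(\Rplus^d\times\Rplus)\times[0,1]^d\to\reals^d$ by $\Psi\bigl((\pp,B),\yy\bigr)=\yy$ if $\yy$ is a vertex of $P_{\pp,B}$ and $\Psi\bigl((\pp,B),\yy\bigr)=-\bo$ otherwise. Setting $\ww_\aa=\aa$, the score $\inner{\ww_\aa,\Psi((\pp,B),\yy)}$ equals $U_\aa(\yy)\ge 0$ on vertices of $P_{\pp,B}$ and equals $-\sum_j a_j=-1$ elsewhere. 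Since $U_\aa$ is linear and the tie-break picks the lexicographically-least minimum-price maximizer, $\CB_{U_\aa}(\pp,B)$ is always a vertex of $P_{\pp,B}$ (in fact one with at most one fractional coordinate), and its utility is strictly positive for a generic instance (some $a_j\ge 1/d$ and $B>0$). Hence $\CB_{U_\aa}(\pp,B)\in\argmax_{\yy\in[0,1]^d}\inner{\ww_\aa,\Psi((\pp,B),\yy)}$, so $\widehat{\Hlin}$ is a $d$-dimensional linear class, and Theorem~\ref{t:svm_for_d_linear} yields the claimed $O\!\left(\frac{d\log(1/\epsilon)+\log(1/\delta)}{\epsilon}\right)$ bound. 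Restricting $\Psi$ to vertices (rather than scoring every feasible bundle by its utility) is what makes the hard-margin program feasible: there are finitely many vertices, the true label is always one of them, and any other competing label is either a vertex (utility bounded away from the optimum) or scores $-1$, so there is a positive margin. Efficiency then follows from Remark~\ref{r:svm_efficient}: for this box-plus-budget polytope, a best vertex under a linear objective is found by the greedy ``bang-per-buck'' rule, and a second-best label by comparing the at most $d$ vertices adjacent to the best one with the all-$(-1)$ option --- giving a polynomial separation oracle. Finally, $\Psi$ is realized with $\ww=\aa$ in the simplex, so one may post-process the returned $\ww$ into the simplex, or simply output the predictor $h_\ww$, as Definition~\ref{d:learnability} only requires a good label predictor.

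\textbf{Lower bound.} By the standard PAC lower bound --- the binary bound $m[\CH]=\Omega\!\left(\frac{\mathrm{VCdim}+\log(1/\delta)}{\epsilon}\right)$ applied to the two label values on a Natarajan-shattered set --- it suffices to show $\mathrm{Ndim}(\widehat{\Hlin})\ge d-1$; and on the instances below distinct bundles carry distinct utilities, so a bundle error is a value error and the bound transfers to the objective of Definition~\ref{d:learning_rev_pref}. Take instances $x_i=(\pp^{(i)},1)$, $i\in[d-1]$, where $\pp^{(i)}$ prices goods $i$ and $d$ at $1$ and every other good at a fixed small rational $M$ (say $M=5$). With budget $1$, a good $k\ne i,d$ has bang-per-buck at most $2/M<\tfrac12$ (after normalization), strictly below that of goods $i,d$; so the buyer puts the whole budget on the better of $i,d$, and $\CB_{U_\aa}(x_i)=\ee_i$ if $a_i>a_d$, $=\ee_d$ if $a_i<a_d$. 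Given $T\subseteq[d-1]$, choose $\aa$ (pre-normalization) with $a_d=1$, $a_i=2$ for $i\in T$, $a_i=\tfrac12$ for $i\notin T$: normalization preserves all comparisons and keeps every $a_i\ne a_d$ (no ties), and $\widehat{U_\aa}$ then agrees with $f(x_i)=\ee_i$ on $T$ and with $g(x_i)=\ee_d$ on $[d-1]\setminus T$. Thus $\{x_1,\dots,x_{d-1}\}$ is Natarajan-shattered, $\mathrm{Ndim}(\widehat{\Hlin})\ge d-1$, and the lower bound $\Omega\!\left(\frac{(d-1)+\log(1/\delta)}{\epsilon}\right)$ follows.

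\textbf{Main obstacle.} The nontrivial design choice is the feature map $\Psi$, and in particular the reduction of the infinite label set $[0,1]^d$ to the finitely many vertices of the budget polytope, without which the hard-margin SVM has no solution; checking the greedy/adjacent-vertex separation oracle and the passage between $0/1$ bundle loss and value loss is then routine. On the lower-bound side the work is in designing the shattering instances so that they respect the no-ties assumption and the finite-bit-length constraint on prices simultaneously, after which the matching $\Omega\!\left(\frac{(d-1)+\log(1/\delta)}{\epsilon}\right)$ is read off from the standard Natarajan-dimension lower bound.
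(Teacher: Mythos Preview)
Your proof is correct and follows essentially the same approach as the paper: you embed $\widehat{\Hlin}$ as a $d$-dimensional linear class (the paper defines $\Psi$ on ``admissible'' bundles---those on the budget hyperplane with at most one fractional coordinate, sending everything else to $\bz_d$---where you use all vertices of the budget polytope, sending non-vertices to $-\bo$; both work), and you lower-bound the Natarajan dimension by $d-1$ via the same two-cheap-goods-per-instance construction (your good $d$ plays the role of the paper's good $1$). The one place the paper is substantially more detailed is the separation oracle: it proves through three structural lemmas that the second-best admissible bundle is obtained from the optimum by a single money transfer between two specific goods, giving an explicit $O(d)$ algorithm; your adjacent-vertex argument reaches the same conclusion but leans implicitly on the monotone-path property of LP (second-best vertex is adjacent to the optimum) and on simplicity of the polytope, and you should note that the greedy bang-per-buck rule only finds the best vertex when $\ww\ge 0$, whereas the oracle must work for arbitrary $\ww$ during the SVM---so one falls back on solving an LP there.
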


\begin{proof}
Let $U_{\aa}$ be a linear utility function.
By definition, the optimal bundle given a price vector $\pp$ and a budget $B$ is  $\argmax_{\xx\in[0,1]^n, \inner{\pp, \xx} \le B} \inner{\aa, \xx}$.
Note that, for a linear utility function, there is always an optimal bundle $\xx$ where all (except at most one) of the $x_i$ are in $\{0,1\}$ (this was also observed in \cite{ZadimoghaddamR12}; see also Section \ref{ss:characterization_opt_bundles}).
Essentially, given a price vector $\pp$, in an optimal bundle, the goods are bought greedily in decreasing order of $a_i/p_i$ (value per price).

Thus, given a pair of price vector and budget $(\pp, B)$, we call a bundle $\xx$ \emph{admissible}, if $|\{i ~:~ x_i\notin\{0,1\}\}| \leq 1$ and $\inner{\pp, \xx} = B$. 
In case $\inner{\pp, \bo_d} = \sum_{i\in\CG} p_i \leq B$, we also call the all $1$-bundle $\bo_d$ admissible (and in this case, it is the only admissible bundle).
We now define the function $\Psi$ as follows:
\begin{equation*}
\Psi((\pp, B), \xx) = \left\{
\begin{array}{l}
 \xx  \quad\text{ if $\xx$ admissible} \\
 \bz_d \quad\text{ otherwise }
\end{array}
\right. 
\end{equation*}
where $\bz_d$ denotes the all-$0$ vector in $\reals^d$.
With this, we have $\CH_{\Psi} = \widehat{\CH_{lin}}$.

We defer the proof of the lower bound to the Appendix, Section \ref{s:lower_bound_for_Hlin}.    
To outline the argument, we prove that the Natarajan dimension of $\widehat{\CH_{lin}}$ is at least $d-1$ (Lemma \ref{l:Nidim}). 
This implies a lower bound for learning $\widehat{\CH_{lin}}$ (see Theorem \ref{nat_graph_upper_lower} in the Appendix).
It is not hard to see that the construction also implies a lower bound for learning $\Hlin$ in the revealed preference model.

To prove computational efficiency, according to Remark \ref{r:svm_efficient}, we need to show that for a linear utility function, we
can efficiently compute some 
\[
y' \in \mathrm{argmax}_{y'\notin \mathrm{argmax}_y \inner{\ww,\Psi(x,y)}} \inner{\ww,\Psi(x,y')};
\]
that is a second best bundle with respect to the mapping $\Psi$. This will be shown in Theorem \ref{thm:secondbest} of the next
subsection.  
\end{proof}

\subsubsection{Efficiently computing the second best bundle under linear utilities}
\label{s:computing_second_best}

It is known and easy to show (for example using KKT conditions for (\ref{eq.ob}), see Section \ref{ss:characterization_opt_bundles})
that an optimal bundle for the case of linear utility functions can be computed as follows: Sort the goods in decreasing order of
$\frac{a_j}{p_j}$, and keep buying in order until the budget runs out. The number of partially allocated goods in such a bundle is at
most one, namely the last one bought in the order. 

In this section show how to compute a second best {\em admissible bundle} (with respect to the mapping $\Psi$) efficiently. 
Recall that {\em admissible bundles} at prices $\pp$ and budget $B$ are defined (in the proof of the above theorem) to be the bundles
that cost exactly $B$ with at most one partially allocated good (or the all-$1$ bundle $\bo_d$, in case it is affordable). 
Note that, in case $\inner{\pp, \bo_d} \leq B$, \emph{any} other bundle is second best with respect $\Psi$. 
For the rest of this section, we assume that  $\inner{\pp, \bo_d} > B$.

At any given $(\pp,B)$ the optimal bundle is always admissible. We now design an $O(d)$-time algorithm to compute
the second best admissible bundle, i.e., $\yy \in \argmax_{\xx \mbox{ admissible }, \xx\neq \xx^*} \inner{\aa, \xx}$, where
$\xx^*$ is the optimal bundle.

At prices $\pp$, let $\frac{a_1}{p_1} \ge \frac{a_2}{p_2} \ge \dots \ge \frac{a_d}{p_d}$, and let the first $k$ goods be bought
at the optimal bundle, i.e., $k = \max_{j:\ x^*_j>0} j$. Then, clearly $\forall j<k,\ x^*_j=1$ and $\forall j>k,\ x^*_j =0$ as
$\xx^*$ is admissible. 

Note that, to obtain the second best admissible bundle $\yy$ from $\xx^*$, amounts of only first $k$ goods can be lowered
and amounts of only last $k$ to $d$ goods can be increased. Next we show that the number of goods whose amounts are lowered
and increased at exactly one each. In all the proofs we crucially use the fact that if $\frac{a_j}{p_j} > \frac{a_k}{p_k}$,
then transferring money from good $k$ to good $j$ gives a better bundle, i.e., $a_j \frac{m}{p_j} - a_k \frac{m}{p_k} >0$. 

\begin{lemma}\label{le:1}
There exists exactly one $j\ge k$, such that $y_j > x^*_j$.
\end{lemma}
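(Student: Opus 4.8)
The goal is to show that in passing from the optimal admissible bundle $\xx^*$ to a second-best admissible bundle $\yy$, exactly one good with index $\ge k$ has its amount increased. The plan is to first argue that \emph{at least} one such good must be increased, then argue that \emph{at most} one can be increased, both arguments resting on the exchange inequality already highlighted in the text: if $a_j/p_j > a_i/p_i$, then shifting a fixed amount of money $m$ from good $i$ to good $j$ strictly increases utility, since $a_j m/p_j - a_i m/p_i > 0$.

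For the ``at least one'' direction: since $\inner{\pp,\bo_d} > B$ we have $\xx^* \ne \bo_d$, so $\xx^*$ has at least one good not fully allocated; admissibility forces $\inner{\pp,\yy} = B = \inner{\pp,\xx^*}$, so if $\yy \ne \xx^*$ then some coordinate must go down \emph{and} some coordinate must go up (money is conserved). A coordinate $j$ that goes up must have $x^*_j < 1$, hence $j \ge k$ (goods $1,\dots,k-1$ are fully allocated). This already gives one index $j \ge k$ with $y_j > x^*_j$.

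For the ``at most one'' direction, which I expect to be the main obstacle: suppose for contradiction that two distinct indices $j_1, j_2 \ge k$ both satisfy $y_{j_i} > x^*_{j_i}$. I would consider a small perturbation of $\yy$ that moves money from the lower-ratio good among $\{j_1, j_2\}$ toward the first not-fully-bought good in the optimal order that $\yy$ has ``room'' in — ideally good $k$ itself, which is the cheapest-ratio good that is still partially available in $\xx^*$ and whose ratio dominates $a_{j_2}/p_{j_2}$ whenever $j_2 > k$. The subtlety is handling the boundary cases: one of $j_1,j_2$ might equal $k$, the coordinate being increased might already be at its cap $1$ in $\yy$ so we cannot push more into it, and we must stay within $[0,1]^d$ and keep the bundle admissible (cost exactly $B$, at most one fractional coordinate). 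I would argue that whenever two goods with index $\ge k$ are strictly increased, at least one of them, say with the smaller value-per-price ratio, can have money withdrawn from it and redirected (to good $k$, or to whichever increased good still has slack below $1$), producing an admissible bundle of strictly higher utility than $\yy$ but still $\ne \xx^*$ — contradicting that $\yy$ is second-best. Care is needed to ensure the resulting bundle has at most one fractional good; since we are only touching two or three coordinates and can choose the transfer amount to ``saturate'' one of them to $0$ or $1$, this can be arranged. Combining the two directions yields exactly one such $j$.

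A cleaner alternative I would also consider: directly characterize the second-best admissible bundle as ``take $\xx^*$, pick one good $i \le k$ to decrease and one good $j \ge k$ (with $j \ne i$, or $j > k$) to increase, transferring the maximal feasible amount of money,'' and show any bundle not of this form is dominated by one that is. The lemma then falls out of this structural description, and it also sets up the $O(d)$-time algorithm promised in the surrounding text. Either way, the essential engine is the single exchange inequality, applied repeatedly to collapse any configuration with two increased high-index goods down to one.
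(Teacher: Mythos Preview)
Your proposal is correct and takes essentially the same approach as the paper: an exchange/contradiction argument showing that if two goods with index $\ge k$ were increased, money could be shifted from the one with the worst ratio to an earlier good with room, yielding a strictly better admissible bundle that is still $\ne \xx^*$ (since the other increased good remains increased). The paper's proof is slightly crisper in its choice of target---it picks the \emph{last} increased good $l$ and uses admissibility to locate some $j<l$ with $y_j<1$ (or $y_j=0$ when $y_l$ is the fractional coordinate)---whereas you aim for good $k$ specifically and then patch the boundary cases; but the underlying mechanism is identical. Your explicit treatment of the ``at least one'' direction is a small addition the paper leaves implicit.
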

\begin{proof}
To the contrary suppose there are more than one goods with $y_j > x^*_j$. 
Consider the last such good, let it be $l$. Clearly $l>k$, because the first good that can be increased is $k$. 
If $y_l < 1$ then there exists $j<l$ with $y_j=0$, else if $y_l = 1$ then there exists $j<l$ with $y_j<1$. In either case
transfer money from good $l$ to good $j$ such that the resulting bundle is admissible. Since,
$\frac{a_j}{p_j}>\frac{a_l}{p_l}$ it is a better bundle different from $\xx^*$. The latter holds because there is another
good whose amount still remains increased. A contradiction to $\yy$ being second best.
\end{proof}

\begin{lemma}\label{le:2}
There exists exactly one $j\le k$, such that $y_j < x^*_j$.
\end{lemma}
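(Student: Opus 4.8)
The plan is to mirror the argument in the proof of Lemma~\ref{le:1}. First I would record the basic accounting fact: since we are in the case $\inner{\pp,\bo_d}>B$, the optimal bundle exhausts the budget, so $\inner{\pp,\xx^*}=B=\inner{\pp,\yy}$ (the latter because $\yy$ is admissible). Subtracting, the total price of the coordinates that $\yy$ lowers relative to $\xx^*$ equals the total price of the coordinates it raises; since, by the structural observation preceding Lemma~\ref{le:1}, only goods with index $\le k$ can be lowered and only goods with index $\ge k$ can be raised, and by Lemma~\ref{le:1} (at least) one good is raised, there is at least one index $j\le k$ with $y_j<x^*_j$. So it only remains to rule out two or more such indices.

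Suppose for contradiction that there are at least two indices $\le k$ on which $\yy$ is strictly below $\xx^*$, and let $s$ be the smallest of them. Since there are two such indices and all of them lie in $\{1,\dots,k\}$, we have $s\le k-1$, hence $x^*_s=1$ and therefore $0\le y_s<1$, i.e.\ good $s$ has room to be increased. Let $j_0$ be the unique good with $y_{j_0}>x^*_{j_0}$ guaranteed by Lemma~\ref{le:1}; then $j_0\ge k>s$, so $\frac{a_{j_0}}{p_{j_0}}\le \frac{a_k}{p_k}\le \frac{a_s}{p_s}$ (the inequality being strict whenever the ratios are distinct, which we may assume as in the proof of Lemma~\ref{le:1}), and $y_{j_0}>x^*_{j_0}\ge 0$, so good $j_0$ carries a positive amount that can be transferred away.

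Now transfer money from good $j_0$ to good $s$, choosing the transferred amount so that the resulting bundle $\yy'$ is again admissible --- concretely, by the same case split as in Lemma~\ref{le:1}, distinguishing whether emptying $j_0$ would overfill $s$, and tracking which of $s,j_0$ ends up fractional so that at most one fractional coordinate remains. Crucially this move touches only coordinates $s$ and $j_0$ (together, if necessary, with one auxiliary coordinate that is $0$ or $1$ in $\yy$, which can be chosen different from the other lowered index), so the other index $\le k$ on which $\yy$ was below $\xx^*$ is untouched; hence $\yy'\ne\xx^*$ and $\yy'$ is admissible. Because $\frac{a_s}{p_s}>\frac{a_{j_0}}{p_{j_0}}$, shifting money from $j_0$ into $s$ strictly increases the linear utility, so $\inner{\aa,\yy'}>\inner{\aa,\yy}$, contradicting the choice of $\yy$ as a second best admissible bundle.

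I expect the main obstacle to be the admissibility bookkeeping in the last paragraph: guaranteeing that after the transfer there is still at most one fractional coordinate, and simultaneously that $\yy'$ still differs from $\xx^*$, requires the same careful case analysis used in Lemma~\ref{le:1}, with the extra wrinkle that one must avoid disturbing the second lowered coordinate. Everything else is a short consequence of the budget being tight and of the greedy/value-per-price structure of optimal bundles under linear utilities.
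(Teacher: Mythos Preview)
Your overall strategy matches the paper's exactly: assume at least two lowered indices, take the unique raised index $j_0$ (your notation; the paper calls it $l$) from Lemma~\ref{le:1}, transfer money from $j_0$ toward an earlier good, and obtain an admissible bundle strictly better than $\yy$ and distinct from $\xx^*$. Your explicit ``at least one'' argument via budget accounting is a nice addition the paper leaves implicit.

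The gap is precisely where you flagged it, and it is not resolved by your sketch. Let $p$ denote the (at most one) fractional coordinate of $\yy$. In the case $p\notin\{s,j_0\}$, one checks that $y_s=0$ and $y_{j_0}=1$ (since $s\ne p$ forces $y_s\in\{0,1\}$ and $y_s<1$; similarly $j_0\ne p$ forces $y_{j_0}\in\{0,1\}$ and $y_{j_0}>0$). Transferring any positive amount from $j_0$ to $s$ then makes at least one of $s,j_0$ fractional while $p$ remains fractional, so the result has two fractional coordinates and is \emph{not} admissible. Your ``auxiliary coordinate that is $0$ or $1$'' idea does not fix this: touching a third coordinate that is currently integral can only create an additional fractional entry, not absorb one.

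The paper's resolution is a small but essential twist: when $p\notin\{s,j_0\}$, transfer from $j_0$ to $p$ rather than to $s$. One first observes $p<j_0$ (if $p>k$ then $y_p>0=x^*_p$ would force $p=j_0$; hence $p\le k<j_0$, or $p\le k=j_0$ with $p\ne j_0$), so $\tfrac{a_p}{p_p}\ge \tfrac{a_{j_0}}{p_{j_0}}$ and the transfer does not decrease utility. The transfer is taken until $y_{j_0}$ hits $0$ or $y_p$ hits $1$, so at most one of $p,j_0$ is fractional afterward and no other coordinate changes; admissibility is preserved. Finally $y'_s=y_s=0<1=x^*_s$ (using $s<k$, which follows since two distinct lowered indices cannot both equal $k$), so $\yy'\ne\xx^*$. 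Plugging this case split into your write-up completes the proof.
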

\begin{proof}
To the contrary suppose there are more than one goods with $y_j < x^*_j$. Let $l$ be the good with $y_l> x^*_l$; there is
exactly one such good due to Lemma \ref{le:1}. Let $i$ be the first good with $y_i < x^*_i$ and let $p$ be the good that is
partially allocated in $y$. If $p$ is undefined or $p\in\{i,l\}$, then transfer money from $l$ to $i$. to get a better
bundle. Otherwise, $p<l$ so transfer money from $l$ to $p$. In either case we can do the transfer so that resulting bundle is
admissible and is better than $\yy$ but different from $\xx^*$. A contradiction.
\end{proof}

Lemmas \ref{le:1} and \ref{le:2} gives an $O(d^2)$ algorithm to compute the second best admissible bundle, where we can
check all possible way of transferring money from a good in $\{1,\dots, k\}$ to a good in $\{k,\dots,d\}$. Next lemma
will help us reduce the running time to $O(d)$.

\begin{lemma}\label{le:3}
If $x^*_k<1$, and for $j>k$ we have $y_j>x^*_j$, then $y_k<x^*_k$. Further, if $x^*_k=1$ and $y_j>x^*_j$ then $j=k+1$.
\end{lemma}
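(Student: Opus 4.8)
The plan is to prove both statements by the same local‑exchange argument already used for Lemmas~\ref{le:1} and~\ref{le:2}: assuming the conclusion fails, I would modify $\yy$ into another \emph{admissible} bundle $\zz\neq\xx^*$ with $\inner{\aa,\zz}\ge\inner{\aa,\yy}$ --- and either strictly larger, or of equal value but cheaper/lexicographically earlier --- contradicting that $\yy$ is a second best admissible bundle. I would rely on the structural facts already proved: by Lemmas~\ref{le:1} and~\ref{le:2}, $\yy$ agrees with $\xx^*$ on all but exactly two coordinates, a \emph{decreased} good $i\le k$ with $y_i<x^*_i$ and an \emph{increased} good $l\ge k$ with $y_l>x^*_l$ (with $i\neq l$), and $\inner{\pp,\yy}=\inner{\pp,\xx^*}=B$ forces the money removed from $i$, namely $(x^*_i-y_i)p_i$, to equal the money added to $l$, namely $y_l p_l$. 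The only arithmetic input is the ``crucial fact'' stated before Lemma~\ref{le:1}: a good of smaller index has value‑per‑price ratio at least that of a good of larger index, so moving any amount of money from the latter into the former never decreases utility, and strictly increases it when the ratios differ.

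For the first statement, assume $x^*_k\in(0,1)$ and that the (unique, by Lemma~\ref{le:1}) increased good satisfies $l>k$, and suppose for contradiction that $y_k=x^*_k$, i.e.\ $i\neq k$, hence $i<k$ and $x^*_i=1>y_i$. Since $y_k=x^*_k\in(0,1)$, good $k$ is a fractional coordinate of $\yy$; admissibility then forces it to be the \emph{only} one, so $y_i\in\{0,1\}$ and $y_l\in\{0,1\}$, i.e.\ $y_i=0$ and $y_l=1$. Now transfer money from good $l$ into good $k$: either exactly $(1-x^*_k)p_k$ worth (possible when $(1-x^*_k)p_k\le p_l$, after which $l$ is the unique fractional good and $k$ is full) or all of good $l$'s $p_l$ worth (otherwise, after which $k$ stays the unique fractional good and $l$ becomes empty). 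In both cases the result is admissible, still differs from $\xx^*$ because $y_i=0$, and its value exceeds that of $\yy$ by (money transferred) times (ratio of $k$ minus ratio of $l$) $\ge 0$, which is positive since $k<l$ --- or, if the ratios happen to coincide, the new bundle has the same value and cost but uses the lower‑indexed good $k$, so the tie‑breaking rule prefers it. Either way we contradict second‑bestness, so $i=k$ and $y_k<x^*_k$.

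For the second statement, $x^*_k=1$ means $\xx^*$ has no fractional coordinate and $\sum_{j\le k}p_j=B$ (with $k<d$, since $\inner{\pp,\bo_d}>B$), and Lemma~\ref{le:1} gives a unique increased good $l>k$; suppose for contradiction $l\ge k+2$. The idea is again to push the money that $\yy$ spends on good $l$ ``down'' to good $k+1$, whose ratio is at least that of $l$ since $k+1<l$: redirecting that money to $k+1$, and cascading any surplus beyond $k+1$'s unit capacity into $k+2,k+3,\dots$ (all of index $<l$, hence of ratio at least that of $l$), is a utility non‑decreasing move, strictly increasing unless all the involved ratios equal that of $l$ (in which case the tie‑break again favours the lower‑indexed goods). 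Once such a $\zz$ is produced and shown admissible with $\zz\neq\xx^*$, $\inner{\aa,\zz}\ge\inner{\aa,\yy}$ gives the contradiction, forcing $l=k+1$.

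The main obstacle, and where I would spend the bulk of the write‑up, is the admissibility bookkeeping in the second statement: $\yy$ already spends its single ``fractional slot'' on $i$ or on $l$, so the redirection must be balanced --- using good $k$ or good $i$ as a buffer to absorb the mismatch between the money freed and the integral capacities of $k+1,k+2,\dots$ --- so that exactly one coordinate of $\zz$ ends up fractional. I expect the sub‑case where $\yy$'s fractional coordinate is the decreased good $i$ (so that $y_l=1$) to be the crux, since then the transferred unit of good $l$ cannot simply become a fractional amount of good $k+1$ without creating a second fractional coordinate, and the buffering good and the exact transferred amount have to be chosen carefully (and one may need to also un‑fractionalise good $i$ or additionally adjust good $k$). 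Verifying that some such choice always yields an admissible bundle at least as good as $\yy$ is the heart of the argument.
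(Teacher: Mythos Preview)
Your argument for the first claim coincides with the paper's. For the second claim you correctly isolate a gap that the paper's own proof glosses over: when the decreased good $i$ carries the fractional coordinate of $\yy$ (your Case~B, $0<y_i<1$ and $y_l=1$), transferring money from $l$ to $k+1$ leaves \emph{two} fractional coordinates and hence a non-admissible bundle. The paper simply ignores this; you propose to repair it by ``cascading'' and ``buffering'' through neighbouring goods.

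No such repair is possible, because the second statement is false as written. Take $d=4$, $\pp=(1,1,100,0.01)$, $\aa=(100,10,900,0.08)$ (so the ratios $a_j/p_j$ are $100>10>9>8$), and $B=2$. Then $\xx^*=(1,1,0,0)$, $k=2$, $x^*_k=1$. Enumerating admissible bundles, the second best is $\yy=(1,\,0.99,\,0,\,1)$ with utility $109.98$, while every admissible bundle touching good $k+1=3$ has utility at most $109$ (the best being $(1,0,0.01,0)$). Hence the unique increased good in the second-best bundle is $l=4\neq k+1$. The mechanism is exactly your Case~B: since $p_4$ is tiny, buying good $4$ in full costs only $p_4\,(a_2/p_2-a_4/p_4)=0.02$ in utility, less than any nonzero purchase of the expensive good $3$ would cost. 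So the sub-case you flag is not a bookkeeping obstacle to be engineered around but a genuine counterexample to the claim itself---and to the paper's argument for it.
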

\begin{proof}
To the contrary suppose, $y_k=x^*_k<1$ and for a unique $i<k$, $y_i<x^*_i$ (Lemma \ref{le:2}). 
Clearly, $y_i=0$ and $y_j=1$ because $0 < y_k < 1$. Thus, transferring money from $j$ to $k$ until either $y_j=0$ or $y_k=1$ gives a
better bundle different from $\xx^*$, a contradiction.

For the second part, note that there are no partially bought good in $\xx^*$ and $y_{k+1}=0$. To the contrary suppose $j>k+1$, then
transferring money from good $j$ to good $k+1$ until either $y_j=0$ or $y_{k+1}=1$ gives a better bundle other than $\xx^*$, a
contradiction.
\end{proof}

The algorithm to compute second best bundle has two cases. First is when $x^*_k<1$, then from Lemma \ref{le:3} it is clear that if an
amount of good in $\{k+1,\dots,d\}$ is increased then the money has to come from good $k$. This leaves exactly $d-1$ bundles to be
checked, namely when money is transferred from good $k$ to one of $\{k+1,\dots,d\}$, and when it is transferred from one of
$\{1,\dots,k-1\}$ good $k$. 

The second case is when $x^*_k=1$, then we only need to check $k$ bundles namely, when money is transferred from one of $\{1,\dots,k\}$
to good $k+1$. Thus, the next theorem follows.

\begin{theorem}\label{thm:secondbest}
Given prices $\pp$ and budget $B$, the second best bundle with respect to the mapping $\Psi$ for a utility function $U \in \Hlin$ at
$(\pp,B)$ can be computed in $O(d)$ time.
\end{theorem}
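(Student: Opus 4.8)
The plan is to assemble the structural lemmas already proved (Lemmas \ref{le:1}, \ref{le:2}, \ref{le:3}) into a case analysis that yields only $O(d)$ candidate bundles, each evaluable in $O(1)$ amortized time. First I would dispose of the trivial case: if $\inner{\pp,\bo_d}\le B$, then $\bo_d$ is the unique admissible bundle and hence also the unique optimal one with respect to $\Psi$, while \emph{every} other bundle maps to $\bz_d$ under $\Psi$ and so is a ``second best'' bundle; the algorithm can return, say, $\bz_d$ itself in $O(d)$ time. So assume $\inner{\pp,\bo_d}>B$. I would then sort the goods by $a_j/p_j$ in $O(d\log d)$ — or note that, since we only need the relative order of a few quantities and the bucketed greedy solution, an $O(d)$ selection-based routine suffices — and compute the optimal (hence admissible) bundle $\xx^\ast$ by the greedy rule, recording $k=\max\{j:x^\ast_j>0\}$.

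Next I would invoke the structural results. By Lemmas \ref{le:1} and \ref{le:2}, any second best admissible bundle $\yy$ differs from $\xx^\ast$ by \emph{lowering exactly one good} $i\le k$ and \emph{raising exactly one good} $j\ge k$, with a money transfer from $i$ to $j$ keeping the cost exactly $B$ and at most one good fractional. The number of such $(i,j)$ pairs is a priori $\Theta(d^2)$, but Lemma \ref{le:3} collapses it: if $x^\ast_k<1$, then whenever the raised good is some $j>k$ the lowered good must be $k$ itself, so the only candidates are (a) transfer from $k$ to one of $k+1,\dots,d$ (that is $d-k$ bundles) and (b) transfer from one of $1,\dots,k-1$ into $k$ (that is $k-1$ bundles) — in total $d-1$ candidates; if $x^\ast_k=1$, Lemma \ref{le:3} forces the raised good to be $k+1$, leaving only transfers from one of $1,\dots,k$ into $k+1$, i.e. $k$ candidates. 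In either case there are $O(d)$ candidate bundles.

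It then remains to argue that each candidate, and its utility $\inner{\aa,\xx}$, can be produced in $O(1)$ time given $\xx^\ast$ and $\inner{\aa,\xx^\ast}$, and that checking admissibility (cost exactly $B$, at most one fractional coordinate) and $\xx\ne\xx^\ast$ is also $O(1)$. This is routine: each candidate is obtained from $\xx^\ast$ by a single ``slide money from good $i$ to good $j$'' operation, which changes exactly two coordinates and changes the utility by an amount computable from $a_i,p_i,a_j,p_j$ and the transferred mass $m=\min(p_i x^\ast_i,\ p_j(1-x^\ast_j))$; one updates $\inner{\aa,\xx}$ incrementally, so the whole scan over $O(d)$ candidates costs $O(d)$. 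Returning the admissible $\yy\ne\xx^\ast$ of maximum utility gives the second best bundle. The main obstacle is not computational but logical bookkeeping: one must be careful, in the $x^\ast_k<1$ case, that the ``lower good $i<k$, raise good $k$'' family actually produces admissible bundles (the freed money $p_i$ exactly refills good $k$ toward $1$, and $k$ stays the only fractional good), and that no admissible second-best bundle has been missed — but this is exactly the content of Lemmas \ref{le:2} and \ref{le:3}, so the proof of the theorem is a short synthesis once those are in hand.
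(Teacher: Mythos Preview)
Your proposal is correct and follows essentially the same approach as the paper: the paper likewise disposes of the $\inner{\pp,\bo_d}\le B$ case, assumes the goods sorted by $a_j/p_j$, and then uses Lemmas~\ref{le:1}--\ref{le:3} to reduce to exactly the same two cases (either $d-1$ candidates when $x^*_k<1$, or $k$ candidates when $x^*_k=1$), concluding the theorem directly. You actually add a bit more than the paper does---the explicit $O(1)$-per-candidate incremental evaluation and the remark that a selection-based routine avoids the $O(d\log d)$ sort---both of which the paper's write-up glosses over.
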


\subsection{Other classes of utility functions}

By designing appropriate mappings $\Psi$ as above, we also obtain bounds on the sample complexity of learning other classes of utility functions from revealed preferences.
In particular, we can employ the same technique for the class of SPLC functions with known segment lengths and the class of CES functions with known parameter $\rho$.
See Table \ref{t:results_overview} for an overview on the results and Section \ref{s:SPCL_CES_Leon} in the appendix for the technical details.

\subsection{Extensions}

Modeling learning tasks as learning a $D$-dimensional linear class is quite a general technique. 
We now discuss how it allows for a variety of interesting extensions to the results presented here.

\paragraph{Agnostic setting}
In this work, we mostly assume that the data was generated by an agent that has a utility function that is a member of some specific class (for example, the class of linear utilities).
However, this may not always be a realistic assumption. 
For example, an agent may sometimes behave irrationally and deviate from his actual preferences. 
In learning theory, such situations are modeled in the \emph{agnostic learning} model.
Here, we do not make any assumption about membership of the agents' utility function in some fixed class.
The goal then is, to output a function from some class, say the class of linear utility functions, that predicts the agents' behavior with error that is at most $\epsilon$ worse than the best linear function would.

Formally, the requirement on the output classifier $h$ in Definition \ref{d:learnability} then becomes $\err_{P}^{l}(h) \leq \eta + \epsilon$ (instead of $\err_{P}^{l}(h) \leq \epsilon$), where $\eta$ is the error of the best classifier in the class.
Since our sample complexity bounds are based on a compression scheme, and compression schemes also imply learnability in the agnostic learning model (see Section \ref{ss:compression_schemes}
  in the appendix), we get that the classes of utility functions with $D$-dimensional linear classes as demand functions that we have analyzed are also learnable in the agnostic model. That is, we can replace the assumption that the data was generated exactly according to a linear (or SPLC or CES) function with an assumption that the agent behaves according to such a function at least a $1-\eta$  fraction of the time.

\paragraph{Non-linear prices and indivisible goods}
So far, we looked at a market  where pricing is always linear and goods are divisible (see Section \ref{s:preliminaries}).
We note that the sample complexity results for $\Hlin, \Hsplc$, and $\Hces$  that we presented here actually apply in a broader context.
Prices per unit could vary with the amount of a good in a bundle (\eg \cite{Lahaie10}).
For example, there may be discounts for larger volumes.
Also, goods may not be arbitrarily divisible (\eg \cite{EcheniqueGW11}).  
Instead of one unit amount of each good in the market, there may then be a number of non-divisible items of each good on offer.
Note that we can still define the functions $\Psi$ to obtain a $D$-dimensional linear demand function class and the classes of utility functions discussed above are learnable with the same sample complexity (though not necessarily efficiently).

\paragraph{Learning preference orderings}
Consider the situation where we would like to not only learn the preferred choice (over a number $d$ of options) of an agent, but the complete ordering of preferences given some prices over the options.

We can model this seemingly more complex task as a learning problem as follows:
Let $\Xcal = \Rplus^d$ be our instance space of price vectors. 
Denote by $\Ycal = S_d$ the group of permutations on $d$ elements. 
Let a vector $\ww\in \mathbb{R}^d$ represent the unknown valuation of the agent, that is $w_i$ indicates how much the agent values option $i$.
Consider the functions $h_{\ww}:\Rplus^d\to S_d$ such that
$h_{\ww}(\pp)$ is the permutation corresponding to the ordering over the values $w_i/p_i$ (i.e. $\pi(1)$ is the index with the largest value per money $w_i/p_i$ and so on).

Finally, consider the hypothesis class
$
\CH_\pi=\{h_{\ww} ~:~ \ww\in \Rplus^d\}.
$
We show below hat $\CH_\pi$ is a $d$-dimensional linear class.
Therefore, this class can also be learned with sample complexity 
  $
   O\left(\frac{d\log(1/\epsilon)+\log(1/\delta)}{\epsilon} \right).
  $
With the same construction as for linear demand functions (see Lemma \ref{l:Nidim} in the appendix), we can also show that the Natarajan dimension of $\CH_\pi$ is lower bounded by $d-1$, which implies that this bound on the sample complexity is essentially optimal.

To see that $\CH_\pi$ is $d$-dimensional linear,  consider the map $\Psi:\Xcal\times S_d\to \mathbb R^d$ defined by
$
\Psi(\pp,\pi)=\sum_{1\le i<j\le d}\pi_{ij}\cdot((1/p_j) e_j - (1/p_i) e_i),
$
where, $\pi_{ij}$ is $1$ if $\pi(i)<\pi(j)$ and else $-1$; $e_1,\ldots,e_d$ is the standard basis of $\mathbb{R}^d$

\section{Learning via Revealed Preference Queries}

In this section we design algorithms to learn classes $\CH_{lin}$, $\CH_{splc}$, $\CH_{leon}$ or $\CH_{ces}$ using $\poly(n,d)$
revealed preference queries. (Recall that we have assumed all defining parameters of a function to be rationals of size (bit
length) at most $n$.)

\subsection{Characterization of optimal bundles}
\label{ss:characterization_opt_bundles}

In this section we characterize optimal bundles for linear, SPLC, CES and Leontief utility functions. In other words, given $(\pp,B)$
we characterize $\CB_U(\pp,B)$ when $U$ is in $\CH_{lin}$, $\CH_{splc}$, $\CH_{ces}$, or $\CH_{leon}$.  Since function $U$ is concave,
formulation (\ref{eq.ob}) is a convex formulation, and therefore Karush-Kuhn-Tucker (KKT) conditions characterize its
optimal solution \cite{BV,BSS}. For a general formulation $\min\{f(\xx)\ |\ g_i(\xx)\le 0,\ \forall i \le n\}$, the KKT conditions are
as follows, where $\mu_i$ is the dual variable for constraint $g_i(\xx)\le 0$. 

\[
\begin{array}{c}
L(\xx,\mmu) =  f(\xx)+ \sum_{i\le n} \mu_i g_i(\xx); \ \ \ \forall i\le n:\ \frac{dL}{dx_i} = 0 \\
\forall i\le n:\ \mu_ig_i(\xx)=0,\ \ \ g_i(\xx)\le 0,\ \ \ \mu_i\ge 0 
\end{array}
\]

In (\ref{eq.ob}), let $\mu$, $\mu_j$ and $\mu'_j$ be dual variables for constraints $\left<\pp,\xx\right>\le B$, $x_j\le 1$ and $-x_j \le 0$
respectively. Then its optimal solution $\xx^* =\CB_U(\pp,B)$ satisfies the KKT conditions: $\nfrac{dL}{dx_j}|_{\xx^*} =
-\nfrac{dU}{dx_j}|_{\xx^*} + \mu p_j +\mu_j -\mu'_j=0$, $\mu'_j x^*_j=0$, and
$\mu_j(x^*_j-1)=0$. Simplifying these gives us:

\begin{equation}\label{eq.kt}
\begin{array}{lcl}
\forall j \neq k,\ \ x^*_j>0,\ x^*_k=0\ \  & \Rightarrow \ \ & \frac{\nfrac{dU}{dx_j}|_{x^*}}{p_j} \ge \frac{\nfrac{dU}{dx_k}|_{x^*}}{p_k}\\
\forall j \neq k,\ \  x^*_j=1,\ 0 \le x^*_k<1 & \Rightarrow \  \ & \frac{\nfrac{dU}{dx_j}|_{x^*}}{p_j} \ge \frac{\nfrac{dU}{dx_k}|_{x^*}}{p_k}\\
\forall j \neq k,\ \  0 < x^*_j, x^*_k<1 & \Rightarrow\ \ & \frac{\nfrac{dU}{dx_j}|_{x^*}}{p_j}= \frac{\nfrac{dU}{dx_k}|_{x^*}}{p_k}
\end{array}
\end{equation}

\noindent{\bf Linear functions: } 
Given prices $\pp$, an agent derives $\nfrac{a_j}{p_j}$ utility per unit money spent on good $j$ (bang-per-buck). 
Thus, she prefers the goods where this ratio is maximum. Characterization of optimal bundle exactly reflects this,

\begin{equation}\label{eq.linob}
\begin{array}{lcl}
\forall j \neq k,\ \ x^*_j>0,\ x^*_k=0\ \  & \Rightarrow \ \ & \frac{a_j}{p_j} \ge \frac{a_k}{p_k}\\
\forall j \neq k,\ \  x^*_j=1,\ 0 \le x^*_k<1 & \Rightarrow \ \ & \frac{a_j}{p_j} \ge \frac{a_k}{p_k}\\
\forall j \neq k,\ \  0 < x^*_j, x^*_k<1 & \Rightarrow \ \ & \frac{a_j}{p_j}= \frac{a_k}{p_k}
\end{array}
\end{equation}

\medskip

\noindent{\bf SPLC functions:}
At prices $\pp$, the utility per unit money (bang-per-buck) on segment $(j,k)$ is $\nfrac{a_{jk}}{p_j}$. Clearly, the agent
prefers segments with higher bang-per-buck and therefore, if allowed, will buy segments in order of decreasing
bang-per-buck. Let $x^*_j$ in optimal bundle be 
be ending at $t^{th}$ segment. 
Then clearly segments 1 to $t-1$ are completely allocated, and segments $t+1$ to $|U_j|$ are not allocated at all. 
Accordingly define $\forall k<t,\ x^*_{jk} = l_{jk}$, $x^*_{jt}=x^*_j - \sum_{k<t} l_{jk}$, and $\forall k>t,\ x^*_{jk}=0$,
then similar to the conditions for linear function, these satisfy,

\begin{equation}\label{eq.splcob}
\begin{array}{lcl}
\forall (j,k) \neq (j',k'),\ \ x^*_{jk}>0,\ x^*_{j'k'}=0\ \  & \Rightarrow \ \ & \frac{a_{jk}}{p_j} \ge \frac{a_{j'k'}}{p_{j'}}\\
\forall (j,k) \neq (j'k'),\ \  x^*_{jk}=l_{jk},\ 0 \le x^*_{j'k'}<l_{jk} & \Rightarrow \ \ & \frac{a_{jk}}{p_j} \ge \frac{a_{j'k'}}{p_{j'}}\\
\forall (j,k) \neq (j'k'),\ \ 0 \le x^*_{jk},x^*_{j'k'}<l_{jk} & \Rightarrow \ \ & \frac{a_{jk}}{p_j} = \frac{a_{j'k'}}{p_{j'}}
\end{array}
\end{equation}
\medskip

\noindent{\bf CES utility functions:} Since $\frac{dU}{dx_j} = \frac{a_j U(x)^{1-\rho}}{x_j^{1-\rho}}$ and $-\infty < \rho < 1$, we
have $\lim_{x_j \to 0} \frac{dU}{dx_j} = \infty$. Therefore, conditions (\ref{eq.kt}) gives the following. 
$\forall j,\ x^*_j>0$, 

\begin{equation}\label{eq.cesob}
\begin{array}{llcl}
\forall j \neq k,& x^*_k < x^*_j=1 & \Rightarrow & \frac{a_j}{p_j} \ge \frac{a_k}{p_k} \left(\frac{1}{x^*_k}\right)^{1-\rho}\ \
\Rightarrow \ \ \frac{a_j}{p_j} > \frac{a_k}{p_k} \\
\forall j \neq k,& 0<x^*_k\le x^*_j<1 & \Rightarrow &\frac{a_j}{a_k} = \frac{p_j}{p_k} \left(\frac{x^*_j}{x^*_k}\right)^{1-\rho}\  \
\Rightarrow \ \ \frac{a_j}{p_j} \ge \frac{a_k}{p_k} 
\end{array}
\end{equation}

\medskip

\noindent{\bf Leontief utility functions:} 
An optimal bundle at Leontief is essentially driven by $a_j$s and not so much by prices. Note that to achieve unit amount of utility
the buyer has to buy at least $a_j$ amount of each good $j$, and therefore has to spend $\sum_j a_jp_j$ money. Thus from money $B$
she can obtain at most $\frac{B}{\sum_j a_jp_j}$ units of utility. Further, 
since she will always buy the cheapest optimal bundle, we get,

\begin{equation}\label{eq.leonob}
\forall j, x^*_j =\beta a_j,\ \  \mbox{where}\ \ \beta=\min\  \left\{\frac{B}{\sum_j a_jp_j},\ \frac{1}{\max_j a_j}\right\} 
\end{equation}

The next theorem follows using the KKT conditions of (\ref{eq.kt}) for each class of utility functions.
\begin{theorem}\label{thm:ob}
Given prices $\pp$ and budget $B$, conditions (\ref{eq.linob}), (\ref{eq.splcob}), (\ref{eq.cesob}) and (\ref{eq.leonob}) together with
feasibility constraints of (\ref{eq.ob}) exactly characterizes $\xx^* = \CB_U(\pp,B)$ for $U\in \CH_{lin}$, $U\in \CH_{splc}$, $U\in
\CH_{ces}$ and $U\in \CH_{leon}$ respectively.
\end{theorem}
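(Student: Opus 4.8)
The plan is to treat the four utility classes one at a time, in each case starting from the general KKT system (\ref{eq.kt}) together with the feasibility constraints of (\ref{eq.ob}) (namely $\inner{\pp,\xx}\le B$, $0\le x_j\le 1$, and the tie-breaking rule that selects a cheapest maximizer), and then specializing by plugging in the partial derivatives $\nfrac{dU}{dx_j}$ for each functional form. Since $U$ is concave and the feasible region is a polytope, (\ref{eq.ob}) is a convex program satisfying Slater's condition (the interior point $\xx=\bz_d$, or $\tfrac12\bo_d$, is strictly feasible), so the KKT conditions are both necessary and sufficient for optimality; this is the one general fact I would invoke up front. The remaining work is bookkeeping: show that the class-specific conditions listed in the statement are logically equivalent to (\ref{eq.kt}) restricted to the relevant $U$, and that among all KKT points they pin down the unique cheapest one.

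For \emph{linear} $U_{\aa}$ we have $\nfrac{dU}{dx_j}=a_j$, a constant, so (\ref{eq.kt}) becomes exactly (\ref{eq.linob}); the only subtlety is that (\ref{eq.kt}) can have a continuum of optimal solutions when several $a_j/p_j$ ratios tie, and one must observe that the cheapest-bundle tie-breaking rule forces the greedy ``buy in decreasing order of $a_j/p_j$, fill the budget, at most one fractional good'' solution, which is the admissible bundle singled out in Section \ref{ss:characterization_opt_bundles}. For \emph{SPLC} $U_{\AA\LL}$ I would reduce to the linear case by the standard device of splitting each good $j$ into $|U_j|$ ``virtual goods'', one per segment $(j,k)$, with linear rate $a_{jk}$ and capacity $l_{jk}$; concavity of $U_j$ ($a_{j(k-1)}>a_{jk}$) guarantees that an optimal fractional solution fills the segments of a given good in index order, so the translation back and forth is exact, and (\ref{eq.splcob}) is just (\ref{eq.linob}) written in the virtual-goods coordinates. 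For \emph{CES} with $-\infty<\rho<1$, I would compute $\nfrac{dU}{dx_j}=a_j U(\xx)^{1-\rho} x_j^{1-\rho}$... more precisely $a_j\bigl(U(\xx)/x_j\bigr)^{1-\rho}$, note that $U(\xx)$ is a common positive factor that cancels in every ratio $\tfrac{dU/dx_j}{p_j}\ \mathrm{vs}\ \tfrac{dU/dx_k}{p_k}$, and observe $\lim_{x_j\to 0}\nfrac{dU}{dx_j}=+\infty$, which kills the first line of (\ref{eq.kt}) (no good with $x_j^*=0$ can satisfy it against a good with $x_j^*>0$) and hence forces $x_j^*>0$ for all $j$ with $a_j>0$; substituting the derivative into the surviving two lines of (\ref{eq.kt}) and simplifying yields (\ref{eq.cesob}). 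For \emph{Leontief} $U(\xx)=\min_j x_j/a_j$, the function is not differentiable, so instead of (\ref{eq.kt}) I would argue directly: any bundle achieving utility value $t$ must have $x_j\ge t a_j$ for all $j$, the cheapest such bundle is $x_j=ta_j$ at cost $t\sum_j a_j p_j$, and monotonicity plus the budget and $x_j\le 1$ constraints cap $t$ at $\min\{B/\sum_j a_j p_j,\ 1/\max_j a_j\}$; the cheapest-bundle rule then selects exactly (\ref{eq.leonob}).

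The main obstacle is not any single derivative computation but handling \emph{uniqueness and ties} uniformly. In the linear and SPLC cases the raw KKT/optimality conditions are satisfied by a whole face of the polytope whenever bang-per-buck ratios coincide, so ``exactly characterizes $\xx^*$'' is only true once the cheapest-maximizer-plus-lexicographic tie-breaking convention is folded in; I would state a small lemma that, for linear utilities, among all utility-maximizers the unique cheapest one is the greedy admissible bundle, and reuse it for SPLC via the virtual-goods reduction. The CES and Leontief cases are cleaner — CES because strict concavity in the relevant directions (for $\rho<1$) already gives uniqueness of the maximizer, Leontief because the explicit formula (\ref{eq.leonob}) is manifestly the unique cheapest optimizer — but for CES I should double-check the boundary behavior when some $x_j^*=1$, which is exactly what the first line of (\ref{eq.cesob}) records, and confirm the implication chain $\tfrac{a_j}{p_j}\ge\tfrac{a_k}{p_k}(1/x_k^*)^{1-\rho}\Rightarrow \tfrac{a_j}{p_j}>\tfrac{a_k}{p_k}$ uses $x_k^*<1$ and $1-\rho>0$.
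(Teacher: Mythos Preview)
Your proposal is correct and follows essentially the same approach as the paper: the paper derives (\ref{eq.linob})--(\ref{eq.leonob}) in the text of Section~\ref{ss:characterization_opt_bundles} by specializing the general KKT system (\ref{eq.kt}) to each functional form (and arguing directly for Leontief, as you do), and then states the theorem with the one-line justification ``follows using the KKT conditions of (\ref{eq.kt}) for each class.'' Your write-up is in fact more careful than the paper's about tie-breaking and uniqueness in the linear/SPLC cases and about the non-differentiability of Leontief, but the underlying strategy is identical.
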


\subsection{Linear functions}

Recall that, if $U\in \CH_{lin}$ then $U(\xx)=\sum_j a_j x_j$, where $\sum a_j=1$. 
First we need to figure out which $a_j$s are non-zero. 
\begin{lemma}\label{lem.lin1}
For $p_j=1,\ \forall j$ and $B=n$, if $\xx =\CB_U(\pp,B)$, then $x_j=0\Rightarrow a_j=0$. 
\end{lemma}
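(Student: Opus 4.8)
The plan is to argue by contradiction using the bang-per-buck characterization of optimal bundles for linear utilities established in equation~(\ref{eq.linob}) of Theorem~\ref{thm:ob}. Suppose $x_j = 0$ in the optimal bundle $\xx = \CB_U(\pp,B)$ but $a_j > 0$. Since all prices are $1$ and $B = n \geq \sum_i a_i = 1$, the budget constraint is slack unless the bundle is the all-$1$ bundle $\bo_d$; but then $x_j = 1 \neq 0$, a contradiction. So $\xx \neq \bo_d$, which means some coordinate is strictly less than $1$ and, since the budget is not exhausted, in fact we can afford to increase $x_j$ while keeping all other coordinates fixed and staying within $[0,1]^d$ and within budget.

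The key step is then to observe that increasing $x_j$ from $0$ to some small positive amount strictly increases the utility, because $a_j > 0$. This already contradicts optimality of $\xx$ — unless one worries about the tie-breaking rule (cheapest bundle, then lexicographic). But here increasing $x_j$ strictly \emph{increases} utility, so we are not in a tie situation at all: the new bundle is strictly better, period. Hence $\xx$ was not optimal, a contradiction. Therefore $x_j = 0 \Rightarrow a_j = 0$.

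Alternatively, and perhaps cleaner to present, one can invoke (\ref{eq.linob}) directly: if $a_j > 0$ then $a_j/p_j = a_j > 0$. Since $\sum_i a_i = 1$, not every good can have $x_i^* = 1$ simultaneously would be fine, but the point is that with $B = n$ the buyer can afford every good in full, so the genuinely optimal (utility-maximizing) bundle is $\bo_d$ itself — every good with $a_j > 0$ contributes positively and every good is affordable. The cheapest such utility-maximizing bundle sets $x_j = 0$ exactly for those $j$ with $a_j = 0$ (dropping a zero-value good lowers cost without lowering utility), and keeps $x_j = 1$ for all $j$ with $a_j > 0$. Thus $x_j = 0$ forces $a_j = 0$.

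The main obstacle, such as it is, is bookkeeping around the tie-breaking convention: one must be careful that the lemma's hypothesis uses $\CB_U$ with its cheapest-then-lexicographic tie-break, so the statement is really about which goods are dropped when passing to the cheapest maximizer. The substantive content — a good with strictly positive coefficient is bought in full when the budget is large enough to afford everything — is immediate from linearity, and the only care needed is to confirm that $B = n$ indeed suffices to buy all of $\CG$, which holds since $\sum_j p_j = d \le n$ (the parameters have bit-length at most $n$, so $d \le n$; and in any case $n$ is chosen large enough that $B=n \ge d$).
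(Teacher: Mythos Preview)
Your second, ``cleaner'' argument is exactly the paper's one-line proof: since the budget suffices to buy every good in full, the utility maximizers are precisely the bundles with $x_j=1$ whenever $a_j>0$, and the cheapest among these zeroes out exactly the goods with $a_j=0$. One minor gap: your justification that ``the parameters have bit-length at most $n$, so $d \le n$'' is a non sequitur---the number of goods and the bit-length of the $a_j$'s are independent parameters---but the paper's own proof simply writes ``$B=\sum_j p_j$'' (which with $p_j=1$ would force $n=d$), so this appears to be a typo in the lemma statement (presumably $B=d$ was intended) rather than something you are expected to derive.
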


\begin{proof}
Since $B=\sum_j p_j$, the agent has enough money to buy all the good completely, and the lemma follows as the agent buys 
cheapest optimal bundle.  \end{proof}

Lemma \ref{lem.lin1} implies that one query is enough to find the set $\{j\ |\ a_j>0\}$. Therefore, wlog we now assume that $\forall j \in
\CG,\ a_j>0$. 

Note that, it suffices to learn the ratios $\frac{a_j}{a_1},\ \forall j\neq 1$ exactly in order to learn $U$, as $\sum_j a_j=1$.  Since
the bit length of the numerator and the denominator of each $a_j$ is at most $n$, we have that $\nfrac{1}{2^{2n}} \le \nfrac{a_j}{a_1}
\le 2^{2n}$.
Using this fact, next we show how to calculate each of these ratios using $O(n)$ revealed preference queries, and in turn the entire
function using $O(dn)$ queries.

Recall the optimality conditions (\ref{eq.linob}) for linear functions.
Algorithm \ref{a:lin} determines $\nfrac{a_j}{a_1}$ when called with $H=2^{2n}$, $q=1$ and $x^e_j=0$.\footnote{These three inputs are irrelevant
for learning linear functions, however they will be used to learn SPLC functions in Appendix \ref{sec:rpqsplc}.}
The basic idea is to always set budget $B$ so low that the agent can by only the most preferred good, and then do binary search by
varying $p_j$ appropriately. Correctness of the algorithm follows using (\ref{eq.linob}) and the fact that bit length of
$\nfrac{a_j}{a_1}$ is at most $2n$. 

\begin{algorithm}[!htb]
   \caption{Learning Linear Functions: Compute $\nfrac{a_j}{a_1}$}
   \label{a:lin}
\begin{algorithmic}
   \STATE {\bfseries Input:} Good $j$, upper bound $H$, quantity $q$ of goods, extra amount $x^e_j$.
   \STATE {\bf Initialize:} $\ L\leftarrow 0$; $\ p_1\leftarrow 1$; $\ p_k \leftarrow 2^{10n}, \ \forall k \in
   \CG\setminus\{j,1\}$; $\ i\leftarrow 0$; $\ \flag\leftarrow nil$
   \WHILE{$i<=4n$}
   	\STATE $i\leftarrow i+1$; $\ p_j \leftarrow\frac{H+L}{2}$; $\ B\leftarrow x^e_j*p_j+\frac{\min\{p_1,p_j\}}{q}$; $\ \xx\leftarrow\CB_U(\pp,B)$
	\STATE {\bf if }{$x_j>0\ \&\ x_1>0$} {\bf then } {\bf Return} $p_j$; 
	\STATE {\bf if} {$x_j>0$} {\bf then} $L\leftarrow p_j$; $\flag\leftarrow 1$; {\bf else} $H\leftarrow p_j$; $\flag \leftarrow 0$;
   \ENDWHILE
   \STATE {\bf if} {$\flag = 1$} {\bf then} Round up $p_j$ to nearest rational with denominator at most $2^n$
   \STATE {\bf else} Round down $p_j$ to nearest rational with denominator at most $2^n$ 
   \STATE {\bf Return} $p_j$.
\end{algorithmic}
\end{algorithm}

\begin{theorem}\label{thm.rplin}
The class $\Hlin$ is learnable from $O(nd)$ revealed preference queries.
\end{theorem}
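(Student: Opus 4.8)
The plan is to establish Theorem~\ref{thm.rplin} by combining two pieces: (i) a single query to identify the support $\{j : a_j > 0\}$ via Lemma~\ref{lem.lin1}, and (ii) for each good $j \neq 1$ in the support, an invocation of Algorithm~\ref{a:lin} (with $H = 2^{2n}$, $q=1$, $x^e_j = 0$) to recover the exact ratio $a_j/a_1$ using $O(n)$ queries. Since the function $U$ is pinned down by the ratios $a_j/a_1$ together with the normalization $\sum_j a_j = 1$, this yields $O(nd)$ queries total. So the real content is proving that Algorithm~\ref{a:lin} correctly outputs (a value from which one can read off) $a_j/a_1$ in $O(n)$ queries.

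\medskip\noindent
\textbf{First I would} explain the geometry of the algorithm. With $p_1 = 1$, $p_k = 2^{10n}$ for $k \notin \{1,j\}$, and budget $B = \min\{p_1,p_j\}$ (after specializing $x^e_j=0$, $q=1$), the budget is so small that the agent can afford a positive amount of at most one of goods $1$ and $j$, and the astronomically priced other goods are never touched (their bang-per-buck $a_k/2^{10n}$ is dominated since $a_k/a_1 \le 2^{2n}$). By the optimality characterization~(\ref{eq.linob}), the agent spends her entire budget on good $j$ if $a_j/p_j > a_1/p_1 = a_1$, on good $1$ if $a_j/p_j < a_1$, and splits (or, by the cheapest/tie-breaking rule, we see \emph{both} $x_1 > 0$ and $x_j > 0$) exactly when $p_j = a_j/a_1$. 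Thus the observed bundle tells us the sign of $p_j - a_j/a_1$: the test ``$x_j > 0$'' in the loop is equivalent to ``$a_j/a_1 \ge p_j$'', which is precisely what drives a binary search on $p_j$ over the interval $[0, 2^{2n}]$, known to contain $a_j/a_1$.

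\medskip\noindent
\textbf{Next I would} argue correctness of the binary search's termination. The true ratio $a_j/a_1$ is a rational with numerator and denominator of bit-length at most $2n$, so any two distinct such rationals in $[2^{-2n}, 2^{2n}]$ differ by at least $2^{-4n}$; equivalently, the candidate ratio is the \emph{unique} rational with denominator $\le 2^{n}$ (indeed $\le 2^{2n}$) in the final bracketing interval. After $i \le 4n$ halvings the interval $[L,H]$ has length $2^{2n}/2^{4n} = 2^{-2n}$, which is small enough that rounding the endpoint $p_j$ in the correct direction (up if the last test said $a_j/a_1 \ge p_j$, i.e.\ $\flag = 1$; down otherwise) to the nearest rational with bounded denominator recovers $a_j/a_1$ exactly. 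The early \textbf{Return} when both $x_j > 0$ and $x_1 > 0$ simply handles the lucky case $p_j = a_j/a_1$ exactly. Each ratio thus costs $O(n)$ queries; over $d-1$ goods plus the one support query this is $O(nd)$, and assembling the $a_j$ from the ratios via $\sum_j a_j = 1$ finishes the proof.

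\medskip\noindent
\textbf{The hard part will be} the precise numerical bookkeeping of the rounding step — verifying that after $4n$ iterations the bracket is genuinely narrow enough that it contains a \emph{unique} rational of denominator $\le 2^n$ and that the $\flag$-directed rounding lands on it rather than on a neighbor. One must be a little careful that the search interval endpoints stay rational of controlled size throughout (so the queries $p_j = (H+L)/2$ are legitimate inputs), and that the extreme prices $2^{10n}$ on the irrelevant goods are large enough to guarantee $x_k = 0$ for all $k \notin \{1,j\}$ given the \emph{a priori} bound $a_k/a_1 \le 2^{2n}$ — a short calculation comparing $a_k/2^{10n}$ against $a_1/B \ge a_1$. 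None of this is deep, but it is where the constants have to be checked; the structural claim (binary search on $p_j$ driven by~(\ref{eq.linob})) is immediate once the setup is unpacked.
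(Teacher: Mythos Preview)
Your proposal is correct and follows exactly the approach the paper takes: one query via Lemma~\ref{lem.lin1} to find the support, then for each remaining good a binary search on $p_j$ (Algorithm~\ref{a:lin}) driven by the optimality conditions~(\ref{eq.linob}), with the bit-length bound on $a_j/a_1$ guaranteeing that $O(n)$ halvings suffice. You have in fact supplied considerably more detail than the paper, which simply states that ``correctness of the algorithm follows using~(\ref{eq.linob}) and the fact that bit length of $a_j/a_1$ is at most $2n$''; your discussion of the rounding bookkeeping and the role of the extreme prices $2^{10n}$ is exactly the kind of verification the paper leaves to the reader.
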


\subsection{Separable piecewise-linear concave (SPLC) functions}\label{sec:rpqsplc}
In this section we design a learning mechanism for a function of class $\CH_{splc}$, which requires us to learn slops as well as lengths
of each of the segment $(j,k)$. As discussed in Section \ref{ss:classes} that for any $\alpha>0$, $\CB_{U}(\pp,B)=\CB_{\alpha U}(\pp,B),\
\forall(\pp,B)$, it is impossible to distinguish between functions $U$ and $\alpha U$, and that is why we made normalizing assumptions
while defining $\CH_{lin}$, $\CH_{ces}$ and $\CH_{leon}$. Similarly for $U\in \CH_{splc}$ we wlog assume that $a_{11}=1$ now on. 

As the size of each $a_{jk}$ and $l_{jk}$ is at most $n$, we have $\frac{1}{2^n} \le a_{jk} \le 2^n,\ \forall (j,k)$ and
$\frac{1}{2^n} \le l_{jk} \le 2^n,\ \forall j,\ \forall k<|U_j|$; recall that length of the last segment for each good is infinity,
i.e., $l_{j|U_j|}=\infty$.  Therefore, slop of first segments $a_{j1}$ of good $j$, can be learned by calling Algorithm \ref{a:lin}
with $H=2^n$, $q=\frac{1}{2^{n+1}}$ and $B^e=0$; extra budget $B^e$ will be used to learn slops of second segment onward. This will make sure
that no segment can be bought fully during the algorithm, and therefore when a good is bought we know that the allocation is on its
first segment. 

Next we show how to learn length $l_{j1}$ of this segment.
Suppose we fix prices $p_1$ and $p_j$ such that agent is prefers segment $(j,1)$ before $(1,1)$ before $(j,2)$, i.e.,
$\frac{a_{1j}}{p_j}>\frac{a_{11}}{p_1}>\frac{a_{j2}}{p_j}$, then Algorithm \ref{a:splc} outputs $l_{j1}$ when provided with $p_1$, 
$p_j$ and $x^e_j=0$. The basic idea is to do binary search by varying the budget appropriately.

\begin{algorithm}[!htb]
   \caption{Learning Linear Functions: Compute $l_{jk}$}
   \label{a:splc}
\begin{algorithmic}
   \STATE {\bfseries Input:} Good $j$, prices $p_j$ and $p_{1}$, extra amount $x^e_j$.
   \STATE {\bf Initialize:} $\ H\leftarrow 2^{n+1}$; $\ L\leftarrow 0$; $\ p_k \leftarrow 2^{10n}, \ \forall k \in
   \CG\setminus\{j,1\}$; $\ i\leftarrow 0$; $\ \flag\leftarrow nil$;
   \STATE $B\leftarrow (H+x^e_j) *p_j$; $\ \xx\leftarrow \CB_U(\pp,B);\ $ {\bf if} $x_1=0$ {\bf then} {\bf Return} $\infty$;
   \WHILE{$i<=2n+1$}
   	\STATE $i\leftarrow i+1$; $\ T\leftarrow \frac{H+L}{2}$; $\ B\leftarrow (T+x^e_j)*p_j$; $\ \xx\leftarrow \CB_U(\pp,B)$;
	\STATE {\bf if $x_1=0$} {\bf then} $L\leftarrow T$; $\flag\leftarrow 1$; {\bf else} $H\leftarrow T$; $\flag \leftarrow 0$;
   \ENDWHILE
   \STATE {\bf if} $\flag=1$ {\bf then} Round up $T$ to nearest rational with denominator at most $2^n$;
   \STATE {\bf else} Round down $T$ to nearest rational with denominator at most $2^n$;
   \STATE {\bf Return} $T$;
\end{algorithmic}
\end{algorithm}

The next question is what should be $p_1$ and $p_j$ so that $\frac{a_{1j}}{p_j}>\frac{a_{11}}{p_1}>\frac{a_{j2}}{p_j}$ is
ensured. Setting $p_j=a_{j1}$ and $p_1=a_{11}+\epsilon>1$ ensures $\frac{a_{11}}{p_1}<\frac{a_{1j}}{p_j}$. Further, $\epsilon=\frac{1}{2^{2n+1}}$
ensures $\frac{a_{11}}{p_1}>\frac{a_{j2}}{p_j}$ using the next claim.

\begin{claim}\label{cl.splc}
If $\epsilon=\frac{1}{2^{2n+1}}$, $p_1=a_{11}+\epsilon$, and $p_j=a_{jk}$ then $\frac{a_{j(k+1)}}{p_j} < \frac{a_{11}}{p_1}$.
\end{claim}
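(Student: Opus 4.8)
The plan is to establish the chain of inequalities $\frac{a_{j(k+1)}}{p_j} < \frac{a_{11}}{p_1}$ by a direct bit-length estimate, using that $p_j = a_{jk}$, that $p_1 = a_{11}+\epsilon$ with $\epsilon = 2^{-(2n+1)}$, and the normalization $a_{11}=1$. Rewriting the target inequality, we must show
\[
 \frac{a_{j(k+1)}}{a_{jk}} < \frac{1}{1+\epsilon}.
\]
So the quantity to control is the \emph{ratio of consecutive slopes} of the SPLC function for good $j$. Since $U_j$ is concave, $a_{j(k+1)} < a_{jk}$, so the left-hand side is strictly less than $1$; the whole content is to show it is bounded away from $1$ by at least $\frac{1}{1+\epsilon} = 1 - \epsilon + O(\epsilon^2)$, i.e. roughly that $\frac{a_{jk} - a_{j(k+1)}}{a_{jk}} > \epsilon$ (after a small rearrangement the precise requirement is $a_{jk} - a_{j(k+1)} > \epsilon\, a_{j(k+1)}$, which, since $a_{j(k+1)}\le 2^n$, would follow from $a_{jk} - a_{j(k+1)} > \epsilon \cdot 2^n = 2^{-(n+1)}$).

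First I would invoke the bit-length hypothesis from Section~\ref{ss:classes}: each $a_{jk}$ is a rational of size at most $n$, hence of the form $u/v$ with $1 \le u, v \le 2^n$, and in particular $\frac{1}{2^n} \le a_{jk} \le 2^n$ as already noted in Section~\ref{sec:rpqsplc}. Write $a_{jk} = u_k/v_k$ and $a_{j(k+1)} = u_{k+1}/v_{k+1}$ in lowest terms with all of $u_k, v_k, u_{k+1}, v_{k+1}$ in $\{1,\dots,2^n\}$. Then
\[
 a_{jk} - a_{j(k+1)} = \frac{u_k v_{k+1} - u_{k+1} v_k}{v_k v_{k+1}},
\]
and concavity gives $a_{jk} > a_{j(k+1)}$, so the integer numerator $u_k v_{k+1} - u_{k+1}v_k$ is at least $1$. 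Since the denominator $v_k v_{k+1}$ is at most $2^{2n}$, we get $a_{jk} - a_{j(k+1)} \ge 2^{-2n}$. Hmm — that bound is too weak by a factor of $2^{n-1}$ against the $2^{-(n+1)}$ target, so the naive estimate does not close directly; this is the main obstacle. The fix is that the comparison we actually need is not against $\epsilon \cdot 2^n$ but, after clearing denominators properly, against a quantity involving only $a_{j(k+1)} \le 2^n$ \emph{times} $\epsilon = 2^{-(2n+1)}$, i.e. against $2^{-(n+1)}$ — still too big. I would therefore revisit the rearrangement: from $\frac{a_{j(k+1)}}{a_{jk}} < \frac{1}{1+\epsilon}$ we need $a_{j(k+1)}(1+\epsilon) < a_{jk}$, i.e. $\epsilon\, a_{j(k+1)} < a_{jk} - a_{j(k+1)}$. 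Using $a_{j(k+1)} \le 2^n$ on the left and $a_{jk}-a_{j(k+1)} \ge 2^{-2n}$ on the right, it suffices that $\epsilon \cdot 2^n \le 2^{-2n}$, i.e. $\epsilon \le 2^{-3n}$ — not what the claim states. So either the claim implicitly uses a tighter bound on $a_{j(k+1)}$, or the intended $\epsilon$ in the claim should be read together with the normalization $a_{11}=1$ making $p_1 = 1+\epsilon$ and the relevant slopes all at most $a_{11}=1$; indeed if every slope of \emph{good $j$} is at most $a_{11} = 1$ (which holds when the algorithm has arranged $(1,1)$ to be among the top segments, so $a_{j2} \le a_{11}$ for the case $k\ge 1$ used in the algorithm), then $a_{j(k+1)} \le 1$ and it suffices that $\epsilon \le a_{jk} - a_{j(k+1)}$, for which $\epsilon = 2^{-(2n+1)} \le 2^{-2n}$ works.

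Thus the key steps, in order, are: (i) reduce the claim to $\epsilon\, a_{j(k+1)} < a_{jk} - a_{j(k+1)}$ via elementary manipulation; (ii) bound $a_{jk} - a_{j(k+1)} \ge 2^{-2n}$ using that both are rationals of bit-length $\le n$ and that concavity forces the difference to be a positive rational with denominator $\le 2^{2n}$; (iii) bound $a_{j(k+1)}$ from above — either by $1$ using the normalization and the ordering set up by the calling algorithm (so that $(j,k+1)$ is a lower-priority segment than $(1,1)$), which makes $\epsilon\, a_{j(k+1)} \le \epsilon < 2^{-2n}$; (iv) combine (ii) and (iii). The delicate point I expect to spend most care on is step (iii): making precise which upper bound on $a_{j(k+1)}$ is legitimately available at the point in Algorithm~\ref{a:splc} where the claim is invoked, since the quoted exponent $2n+1$ in $\epsilon$ is only consistent with a bound of order $O(1)$ (not $2^n$) on $a_{j(k+1)}$. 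Once that is pinned down, the inequality is immediate.
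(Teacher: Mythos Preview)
Your reduction is exactly the paper's: using $a_{11}=1$, the target becomes $\frac{a_{j(k+1)}}{a_{jk}} < \frac{1}{1+\epsilon}$, equivalently $\epsilon\, a_{j(k+1)} < a_{jk}-a_{j(k+1)}$. Where you diverge from the paper is at the quantitative step, and that divergence leaves a real gap.

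The paper does \emph{not} use the generic two-rational bound $a_{jk}-a_{j(k+1)} \ge 2^{-2n}$ that you derive. It asserts the stronger estimate $a_{jk}-a_{j(k+1)} \ge 2^{-n}$ directly from the bit-length hypothesis, then divides by $a_{jk}\le 2^n$ to obtain $1 - \frac{a_{j(k+1)}}{a_{jk}} \ge 2^{-2n}$, hence $\frac{a_{j(k+1)}}{a_{jk}} \le 1 - 2^{-2n} < \frac{1}{1+2^{-(2n+1)}}$. With only your $2^{-2n}$ difference bound, dividing by $a_{jk}\le 2^n$ would give $1-\frac{a_{j(k+1)}}{a_{jk}} \ge 2^{-3n}$, which is not enough for the stated $\epsilon$.

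Your attempted repair in step~(iii), bounding $a_{j(k+1)}\le 1$ ``using the normalization and the ordering set up by the calling algorithm,'' does not go through. The normalization is only $a_{11}=1$; nothing constrains $a_{j(k+1)}$ to be at most $1$, and indeed the paper explicitly allows $a_{jk}$ up to $2^n$. The ordering $\frac{a_{jk}}{p_j} > \frac{a_{11}}{p_1} > \frac{a_{j(k+1)}}{p_j}$ is precisely what the claim is \emph{establishing} (the right inequality), so you cannot invoke it; and the left inequality, with $p_j=a_{jk}$ and $p_1 = 1+\epsilon$, is just $1 > \frac{1}{1+\epsilon}$ and carries no information about $a_{j(k+1)}$. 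So the missing ingredient relative to the paper is the stronger gap $a_{jk}-a_{j(k+1)}\ge 2^{-n}$; your argument as written cannot close without it (or without shrinking $\epsilon$ to roughly $2^{-3n}$).
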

\begin{proof}
As $a_{jk} > a_{j(k+1)} \ge \frac{1}{2^n}$ with bit length of both being at most $n$, we have $a_{jk}-a_{j(k+1)} \ge \frac{1}{2^n}$. 
\[
\begin{array}{r}
a_{jk}-a_{j(k+1)} \ge \frac{1}{2^n}\ \Leftrightarrow\ 1-\frac{a_{j(k+1)}}{a_{jk}} \ge \frac{1}{2^{2n}}\ \Leftrightarrow\
\frac{a_{j(k+1)}}{a_{jk}} \le 1-\frac{1}{2^{2n}}\\ \Leftrightarrow\  \frac{a_{j(k+1)}}{p_j} < \frac{1}{1+\epsilon}\ \Leftrightarrow\
\frac{a_{j(k+1)}}{p_j} < \frac{a_{11}}{p_1}
\end{array}
\]
\end{proof}

\noindent{\bf Induction.} Once we learn slops and lengths of up to $k^{th}$ segment of good $j$, can learn $a_{j(k+1)}$ by calling Algorithm
\ref{a:lin} with $H=a_{jk}$, $q=\frac{1}{2^{n+1}}$ and $x^e_j=\sum_{t\le k}l_{jt}$. And then learn $l_{j(k+1)}$ by calling Algorithm
\ref{a:splc} with $p_j=a_{j(k+1)}$, $p_1=1+\epsilon$, and $x^e_j=l_{jk}$ (it works using Claim \ref{cl.splc}). We stop when Algorithm
\ref{a:splc} returns $\infty$.
\medskip

For each good $j\neq 1$, think of a hypothetical $0^{th}$ segment with $a_{j0}=2^{n+1}$ and $l_{j0}=0$, and apply the above inductive
procedure to learn $a_{jk}$ and $l_{jk}$ for all $1\le k\le |U_j|$. To learn the parameters for good $1$, we can swap its identity with
some other good, and repeat the above procedure. The number of calls to oracle $\CB_U$ in Algorithms \ref{a:lin} and \ref{a:splc} are
of  $O(n)$, and if there are at most $\kappa$ segments in each $U_j$, then total sample complexity for learning such an SPLC function
is $O(nd\kappa)$.

\begin{theorem}
The class $\CH_{splc}$ is learnable from $O(nd\kappa)$ revealed preference queries.
\end{theorem}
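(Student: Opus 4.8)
The theorem asserts that $\CH_{splc}$ is learnable from $O(nd\kappa)$ revealed preference queries, so the plan is to assemble the building blocks already established in this section into a single inductive procedure and then count the oracle calls. The overall strategy is to learn, for each good $j$, the slopes $a_{jk}$ and segment lengths $l_{jk}$ one segment at a time, in order of increasing $k$, using Algorithm~\ref{a:lin} (adapted via its extra parameters $q$ and $x^e_j$) to pin down each slope, and Algorithm~\ref{a:splc} to pin down each length, always keeping the good $1$ first segment as a fixed ``reference'' against which bang-per-buck comparisons are made.

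First I would handle the base case: using a single query with $p_j = 1$ and $B = n$ (Lemma~\ref{lem.lin1}), identify the set of goods with $a_{j1} > 0$, so that wlog every good has a nonzero first segment; then, after the normalization $a_{11}=1$, invoke Algorithm~\ref{a:lin} with $H = 2^n$, $q = \tfrac{1}{2^{n+1}}$, $x^e_j = 0$ to learn $a_{j1}$ for each $j \neq 1$. The key point, already noted in the text, is that the choice of $q$ forces the budget so low that no segment is ever completely bought, so whenever good $j$ is purchased we know the purchase lies on its first segment, and the bang-per-buck condition~(\ref{eq.splcob}) makes the binary search correct; finite precision ($\tfrac{1}{2^n}\le a_{jk}\le 2^n$ with bit length $\le n$) guarantees $O(n)$ iterations suffice and the final rounding recovers the exact rational. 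Next, for the length $l_{j1}$, I would set $p_j = a_{j1}$ and $p_1 = a_{11} + \epsilon$ with $\epsilon = \tfrac{1}{2^{2n+1}}$; Claim~\ref{cl.splc} certifies that this makes the agent prefer segment $(j,1)$ before $(1,1)$ before $(j,2)$, so Algorithm~\ref{a:splc}'s binary search on the budget correctly locates $l_{j1}$ (returning $\infty$ exactly when $(j,1)$ is the last segment).

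For the inductive step, I would argue: once slopes and lengths of segments $1,\dots,k$ of good $j$ are known, learn $a_{j(k+1)}$ by calling Algorithm~\ref{a:lin} with $H = a_{jk}$ (the previous slope, a valid upper bound by concavity), $q = \tfrac{1}{2^{n+1}}$, and $x^e_j = \sum_{t\le k} l_{jt}$ --- the extra amount $x^e_j$ in the budget formula $B \leftarrow x^e_j p_j + \min\{p_1,p_j\}/q$ ensures the agent first ``uses up'' the already-identified segments and then buys a small sliver on segment $(j,k{+}1)$, so the same bang-per-buck logic applies --- and then learn $l_{j(k+1)}$ by calling Algorithm~\ref{a:splc} with $p_j = a_{j(k+1)}$, $p_1 = 1 + \epsilon$, $x^e_j = \sum_{t\le k} l_{jt}$, with correctness again from Claim~\ref{cl.splc}. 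Treating a hypothetical $0$th segment with $a_{j0} = 2^{n+1}$, $l_{j0}=0$ makes the induction uniform from $k=0$; good $1$ is handled by relabelling it as some other good and re-running. Finally I would do the query count: each invocation of Algorithm~\ref{a:lin} or Algorithm~\ref{a:splc} uses $O(n)$ oracle queries, there are $O(\kappa)$ segments per good and $d$ goods, hence $O(nd\kappa)$ queries total, and by Theorem~\ref{thm:ob} the recovered parameters determine $\CB_U$ exactly, so the reconstructed function equals $U$.

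The main obstacle is verifying the inductive step's correctness in full rigor: one must check that the price/budget settings genuinely isolate the $(k{+}1)$st segment of good $j$ --- i.e.\ that given the extra budget $x^e_j p_j$, the agent's optimal purchase really does fill segments $1$ through $k$ of good $j$ and then spend the remaining $\min\{p_1,p_j\}/q$ on segment $(j,k{+}1)$ (with good $1$'s first segment as the only competing option), and that during the binary search no other good or later segment ever becomes attractive. This requires combining Claim~\ref{cl.splc} with the optimality characterization~(\ref{eq.splcob}) and the prices $p_k = 2^{10n}$ on all other goods (which makes them irrelevant), plus a careful check that the rounding at the end of each algorithm call yields the exact rational rather than merely a close approximation. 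Everything else --- the base case, the precision bounds, and the query accounting --- is routine given the lemmas and algorithms already in place.
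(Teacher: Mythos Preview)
Your proposal is correct and follows essentially the same approach as the paper's own argument in Section~\ref{sec:rpqsplc}: the same normalization $a_{11}=1$, the same use of Algorithm~\ref{a:lin} (with the extra parameters $H,q,x^e_j$) for slopes and Algorithm~\ref{a:splc} for lengths, the same price choices justified by Claim~\ref{cl.splc}, the same hypothetical $0$th segment to start the induction, the same swap trick for good~$1$, and the same $O(n)\cdot O(\kappa)\cdot d = O(nd\kappa)$ query count. The only discrepancies are cosmetic (you pass $x^e_j=\sum_{t\le k}l_{jt}$ to Algorithm~\ref{a:splc} whereas the paper writes $x^e_j=l_{jk}$, and you add an initial Lemma~\ref{lem.lin1}-style query), and your identification of the main remaining verification burden matches what the paper also leaves at the level of a sketch.
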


\subsection{CES and Leontief functions}
In this section we show that surprisingly constantly many queries are enough to learn a CES or a Leontief function. The reason behind
this is that the optimal bundles of these functions are well behaved, e.g., a buyer buys all the goods of non-zero amount, and in a
fixed proportion in case of Leontief functions. 
\medskip

\noindent{\bf CES. }
Let $U\in \CH_{ces}$ be a function defined as $U_{\aa\rho}(\xx)=(\sum_j a_j x_j^\rho)^{\nfrac{1}{\rho}}$, where $\sum_j a_j=1$ and $\rho<1$.
Since an optimal bundle for such a $U_{\aa\rho}$ contains non-zero amount of good $j$ only if $a_j>0$, wlog we assume that $a_j>0,\ \forall j$.
We show that two queries, with prices $\pp>0$ and budget $B<\min_j p_j$ are enough to learn $\Hces$ from revealed preference queries.

Let $p^1_j=1,\ \forall j$, $p^2_j=j,\ \forall j$, $B=0.5$, $\xx^1=\CB_U(\pp^1,B)$ and $\xx^2=\CB_U(\pp^2,B)$. 
Since there is not enough budget to buy any good completely in either query, we have $0<x^i_j<
1,\ i=1,2, \forall j$, using (\ref{eq.cesob}). Like for linear functions, it is enough to learn ratios $\frac{a_j}{a_1},\ \forall j \neq
1$. Using Equation (\ref{eq.cesob}), Section \ref{ss:characterization_opt_bundles},  we get the following.
\[
\begin{array}{lcl}
i=1,2, \forall j\neq 1,\ \frac{a_j}{a_1}=\frac{p^i_j}{p^i_1} \left(\frac{x^i_j}{x^i_1}\right)^{1-\rho}& \Rightarrow &
\left(\frac{x^1_j}{x^1_1}\right)^{1-\rho} = j \left(\frac{x^2_j}{x^2_1}\right)^{1-\rho} \\
& \Rightarrow & (1-\rho) \log{\frac{x^1_j}{x^1_1}} =
\log{j} + \log{\frac{x^2_j}{x^2_1}} 
\end{array}
\]

Since $\xx^1$ and $\xx^2$ are known, we can evaluate the above to get $\rho$ and $\nfrac{a_j}{a_1}$. 
\medskip

\noindent{\bf Leontief. }
Consider a Leontief function $U_{\aa}\in \CH_{leon}$ such that $U_{\aa}(\xx)=\min_j \nfrac{x_j}{a_j}$, where $\sum_j a_j=1$. Wlog, we assume that
$a_j>0,\ \forall j$; if $a_j=0$ then $x_j=0$ in an optimal bundle at any given prices and budget. We show that one query, with prices
$\pp>0$ and budget $B< \min_j p_j$, is enough to determine $U_{\aa}$ and thus one query suffices to learn the class $\Hleon$ from revealed preference queries. 

Suppose, $\xx=\CB_U(\pp,B)$ where $p_j=1,\ \forall j$ and $B=0.5$. Then using (\ref{eq.leonob}), we get $\beta=\nfrac{B}{\sum_j a_j} =
B= 0.5$ and $a_j=\nfrac{x_j}{\beta}=2x_j$.  

\begin{theorem}\label{thm.rp3}
The classes $\Hces$ and $\Hleon$ are learnable from $O(1)$ revealed preference queries.
\end{theorem}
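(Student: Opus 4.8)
The plan is to establish both claims by exhibiting explicit query strategies and verifying correctness against the optimal-bundle characterizations (\ref{eq.cesob}) and (\ref{eq.leonob}) from Section \ref{ss:characterization_opt_bundles}. For each class, the first step is to reduce to the case where every parameter $a_j$ is strictly positive: if $a_j = 0$ then the corresponding good contributes nothing to utility and, by the cheapest-bundle tie-breaking rule, never appears in an optimal bundle at positive prices, so such coordinates are irrelevant and can be identified for free (or are simply assumed away by the normalization $\sum_j a_j = 1$).

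For the Leontief case, I would use a single query with $p_j = 1$ for all $j$ and budget $B = 1/2$, chosen small enough that $B < \min_j p_j = 1$ so the buyer cannot afford one full unit of any good. Plugging into (\ref{eq.leonob}): since $\sum_j a_j p_j = \sum_j a_j = 1$, the scalar $\beta = \min\{B/\sum_j a_j p_j,\ 1/\max_j a_j\} = \min\{B, 1/\max_j a_j\}$, and because $a_j \le 1 = 1/B \cdot B$ forces $1/\max_j a_j \ge 1 > B$, we get $\beta = B = 1/2$. Hence the observed bundle satisfies $x_j = \beta a_j = a_j/2$, so $a_j = 2x_j$ recovers every parameter exactly from one query, which proves the $O(1)$ bound for $\Hleon$.

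For the CES case with unknown $\rho < 1$ (and, per the table, when $\rho$ is not known in advance — though here it must be determined too), I would issue two queries with all prices positive and budget $B = 1/2 < \min_j p_j^i$, so by the first line of (\ref{eq.cesob}) no good is bought at its cap and $0 < x_j^i < 1$ for both queries $i=1,2$; then the equality case of (\ref{eq.cesob}) gives $a_j/a_1 = (p_j^i/p_1^i)(x_j^i/x_1^i)^{1-\rho}$. Choosing $p_j^1 = 1$ for all $j$ and $p_j^2 = j$ for all $j$, dividing the two expressions for $a_j/a_1$ cancels the unknowns and yields $(x_j^1/x_1^1)^{1-\rho} = j\,(x_j^2/x_1^2)^{1-\rho}$; taking logarithms produces a linear equation in the single unknown $\rho$, namely $(1-\rho)\log(x_j^1/x_1^1) = \log j + (1-\rho)\log(x_j^2/x_1^2)$, wait — more precisely $(1-\rho)\log(x_j^1/x_1^1) = \log j + \log(x_j^2/x_1^2)^{1-\rho}$ rearranges so that, for any fixed $j \ge 2$ with $\log(x_j^1/x_1^1) \neq \log(x_j^2/x_1^2)$, we solve for $1-\rho$, and then back-substitute into the $i=1$ equation to recover each ratio $a_j/a_1$; combined with $\sum_j a_j = 1$ this pins down $\aa$ exactly. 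Thus two queries suffice for $\Hces$.

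The main obstacle is a degeneracy issue in the CES argument: the step that isolates $\rho$ requires that the two price vectors actually separate the coordinates, i.e.\ that $x_j^1/x_1^1 \neq x_j^2/x_1^2$ for at least one $j$, and more generally one must check that the system is non-degenerate (no division by zero in the logarithms, i.e.\ all $x_j^i > 0$, which is guaranteed by $B < \min_j p_j^i$ and the first line of (\ref{eq.cesob})). Since the prices $p_j^2 = j$ are pairwise distinct and $\rho < 1$, one can argue that the map $\aa \mapsto$ (demand at $\pp^2$) genuinely differs from the demand at $\pp^1$ whenever $d \ge 2$, so a valid index $j$ always exists; the boundary case $d = 1$ is trivial since then $a_1 = 1$. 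Everything else is routine algebra on the KKT characterization, so once the non-degeneracy is dispatched, the theorem follows.
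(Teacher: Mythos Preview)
Your proposal is correct and follows essentially the same approach as the paper: the Leontief argument (single query at $\pp=\bo_d$, $B=1/2$, then $a_j=2x_j$) and the CES argument (two queries at $\pp^1=\bo_d$, $p_j^2=j$, $B=1/2$, then solve for $1-\rho$ via the log-ratio equation and back-substitute for $a_j/a_1$) are exactly what the paper does. Your additional discussion of the non-degeneracy needed to isolate $\rho$ is a welcome refinement the paper glosses over, and your resolution is correct since $(x_j^1/x_1^1)^{1-\rho}=j(x_j^2/x_1^2)^{1-\rho}$ together with $x_j^1/x_1^1=x_j^2/x_1^2$ would force $j=1$.
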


 \subsection*{Acknowledgments}
 This work was supported in part by AFOSR grant FA9550-09-1-0538, ONR grant
 N00014-09-1-0751, NSF grants CCF-0953192 and CCF-1101283, a Microsoft
 Faculty Fellowship, and a Google Research Award.

\bibliographystyle{abbrv}
\bibliography{revealedpreferencebib}

\newpage
\appendix
\section*{Appendix}

\section{Multi-class learning background}
\label{ss:background_multi}

 Here we review previously established results on multiclass learnability, that are relevant to the results in our paper.
 
 \subsection{The new bound for linear classes}
 In our work, we employ the following recent upper bound by \cite{DanielyS14}  on the sample complexity of \emph{$D$-dimension linear} hypothesis classes (Definition \ref{d:D_dim_linear_class}). 
 
 \begin{theorem}[\cite{DanielyS14}, Theorem 5, part 1]
  For every $\Psi:\Xcal \times \Ycal \to \reals^D$, the (PAC) sample complexity of learning $\CH_\Psi$ is 
  \[
   m[\CH_\Psi](\epsilon,\delta) = O\left(\frac{D\log(1/\epsilon)+\log(1/\delta)}{\epsilon} \right).
  \]
 \end{theorem}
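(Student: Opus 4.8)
The plan is to re-derive this as a consequence of sample compression: I would exhibit a \emph{sample compression scheme} of size $D$ for $\CH_\Psi$ and then invoke the generic generalization bound for compression schemes, which turns a size-$k$ scheme whose output is consistent with the training data into a PAC learner with sample complexity $O\big((k\log(1/\epsilon)+\log(1/\delta))/\epsilon\big)$; taking $k=D$ gives the stated bound. I work in the realizable, tie-free setting assumed in the statement.

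Given a realizable sample $S=((x_1,y_1),\dots,(x_m,y_m))$, I would form
\[
 Z=\{\Psi(x_i,y_i)-\Psi(x_i,y)\mid i\in[m],\ y\ne y_i\}\subseteq\reals^D .
\]
Realizability of $S$ by some $h_{\ww}\in\CH_\Psi$ is exactly the statement that $\inner{\ww,\zz}>0$ for all $\zz\in Z$, equivalently that the compact convex polytope $\conv(Z)$ does not contain the origin. Let $\ww'$ be the (unique) Euclidean projection of the origin onto $\conv(Z)$, i.e.\ its minimum-norm point. The compression map keeps those examples whose difference vector appears in some minimal convex representation of $\ww'$: since $\bz\notin\conv(Z)$, the point $\ww'$ lies on a proper face of the polytope, so applying Carath\'eodory's theorem inside the affine hull of that face writes $\ww'$ as a convex combination of at most $D$ vertices of $\conv(Z)$, each of which is an element of $Z$ and hence names a single example index. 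The reconstruction map, given such a subsample, forms the analogous set $Z'\subseteq Z$, computes its minimum-norm point $\ww''$, and returns $h_{\ww''}$.

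Then I would check two things. First, reconstruction recovers $\ww'$: since $\ww'\in\conv(Z')\subseteq\conv(Z)$, minimality in the smaller polytope gives $\|\ww''\|\le\|\ww'\|$ and minimality in the larger one gives $\|\ww''\|\ge\|\ww'\|$, and equality of norms together with uniqueness of the projection forces $\ww''=\ww'$. Second, $h_{\ww'}$ is consistent with all of $S$: the projection $\ww'$ of the origin onto a convex set $C$ with $\bz\notin C$ satisfies $\inner{\ww',c}\ge\|\ww'\|^2>0$ for every $c\in C$; applying this with $c=\Psi(x_i,y_i)-\Psi(x_i,y)$ yields $\inner{\ww',\Psi(x_i,y_i)}>\inner{\ww',\Psi(x_i,y)}$ for every $i$ and every $y\ne y_i$, so $h_{\ww'}(x_i)=y_i$. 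Hence (compression, reconstruction) is a genuine size-$D$ sample compression scheme that always outputs a hypothesis consistent with its input.

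Finally I would plug this into the standard sample-compression generalization bound: union-bounding over the $\binom{m}{k}$ possible compressed subsamples, and bounding by $(1-\epsilon)^{m-k}$ the probability that a fixed $\epsilon$-bad reconstructed hypothesis is still consistent with the remaining $m-k$ points, shows that $m=O\big((k\log(1/\epsilon)+\log(1/\delta))/\epsilon\big)$ examples suffice for error at most $\epsilon$ with confidence $1-\delta$; set $k=D$. The step that needs the most care is verifying that the minimum-norm construction really is a compression scheme of size $D$ --- that $\ww'$ both certifies consistency on the entire sample and is \emph{exactly} recoverable from just $D$ of the examples --- together with the nuisance of ties in the $\argmax$, which the present statement sidesteps via a tie-free distributional assumption (whereas \cite{DanielyS14} instead introduces a ``don't know'' label).
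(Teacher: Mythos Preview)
Your proposal is correct and is precisely the compression-scheme argument of \cite{DanielyS14}; the paper itself does not give an independent proof of this statement but simply cites it as Theorem~5, part~1 of \cite{DanielyS14}, describing the algorithm (the minimal-norm vector $\ww'\in\conv(Z)$) only in passing inside the proof of Theorem~\ref{t:svm_for_d_linear}. One small point worth tightening: an element of $Z$ encodes both an example index $i$ and an alternative label $y\ne y_i$, so strictly speaking you are compressing to $D$ examples plus side information; alternatively, keep only the $D$ examples and in reconstruction form the full difference set $Z''$ from those examples---since $Z'\subseteq Z''\subseteq Z$ and $\ww'\in\conv(Z')$, your sandwich argument still forces the minimum-norm point of $\conv(Z'')$ to equal $\ww'$.
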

 
 The upper bound in the above Theorem is achieved by a compression scheme based algorithm.  
 That is, the authors show that there always exists a compression scheme for linear classes, which yields learnability for both the realizable and the agnostic case as we outline next.

 \subsection{Compression scheme based learning}
 \label{ss:compression_schemes}
 
 \begin{definition}[Compression scheme]
  Let $\CH\subseteq \Ycal^\Xcal$ be a hypothesis class.
  A compression scheme of size $d$ for the class $\CH$ consists of two functions $C: \bigcup_{n\in \naturals}(\Xcal \times \Ycal)^n \to (\Xcal \times \Ycal)^d $ and $D :(\Xcal \times \Ycal)^d \to \CH$ satisfying the following condition:
  \begin{itemize}
   \item Let $S = ((x_1, y_1) , \ldots , (x_n, y_n))$ with $y_i = h(x_i)$ for some $h\in \CH$ and all $i$. Then $C(S)$ is a subsequence of $S$ and for the function $h_D = D(C(S))\in \CH$ we have 
   $
    h_D(x_i) = y_i
   $
   for all $x_i$ in $S$.
  \end{itemize}
 \end{definition}

 If a class admits a compression scheme, then it is learnable both in the realizable and in the agnostic case with the following sample complexity bounds (also see \cite{MLbook}, Chapter 30):
 
 \begin{theorem}[Based on \cite{LW86}]
  Assume that class $\CH \subseteq \Ycal^\Xcal$ has a compression scheme $(C,D)$ of size $d$. 
  Then it is learnable in the realizable case (by the algorithm $D \circ C$)
  with sample complexity  satisfying
  \[
   m[\CH](\epsilon,\delta) = O\left(\frac{d \log(1/\epsilon) +  1/\delta}{\epsilon}\right).
  \]
  Moreover, the class is also learnable in the the agnostic case with sample complexity satisfying
  \[
   m[\CH](\epsilon,\delta) = O\left(\frac{d \log(d/\epsilon) +  1/\delta}{\epsilon^2}\right).
  \]
 \end{theorem}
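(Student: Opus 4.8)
\medskip\noindent\textbf{Proof sketch.}
The plan is to prove both bounds by the classical sample--compression argument: a union bound over the (few) possible outputs of the compression map $C$, combined with a concentration bound applied to the sample points that $C$ did \emph{not} retain.

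\paragraph{Realizable case.}
First I would fix a sample size $m$ and view $C$ as selecting, from $S=(z_1,\dots,z_m)$ with $z_i=(x_i,y_i)$, an index set $I=C(S)\subseteq[m]$ with $|I|=d$; there are at most $\binom{m}{d}\le(em/d)^d$ such sets. For a \emph{fixed} index set $I$, the reconstruction $D(S_I)$ depends only on the $d$ examples $S_I$, hence (as $S$ is i.i.d.)\ it is independent of the remaining $m-d$ examples $S_{[m]\setminus I}$. The defining property of the scheme forces $h_D=D(C(S))$ to be consistent with \emph{all} of $S$, in particular with $S_{[m]\setminus I}$ when $I=C(S)$. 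So, conditioning on $S_I$: if $\err_P(D(S_I))>\epsilon$, then the probability that $D(S_I)$ correctly labels all $m-d$ held--out points is at most $(1-\epsilon)^{m-d}\le e^{-\epsilon(m-d)}$. A union bound over the $\le(em/d)^d$ index sets gives $\Pr[\err_P(h_D)>\epsilon]\le(em/d)^d e^{-\epsilon(m-d)}$, and I would then choose $m$ so that the right--hand side is at most $\delta$. Since $em/d=\Theta(1/\epsilon)$ when $m=\Theta(d/\epsilon)$, this reduces to requiring $\epsilon(m-d)\gtrsim d\log(1/\epsilon)+\log(1/\delta)$, which yields the stated bound $O\!\big((d\log(1/\epsilon)+\log(1/\delta))/\epsilon\big)$.

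\paragraph{Agnostic case.}
Here consistency is unavailable, so I would first reduce to the realizable scheme: given an arbitrary $S$, compute an empirical risk minimizer $\hat h_S\in\CH$, let $S'\subseteq S$ be the subsample of points that $\hat h_S$ labels correctly --- which is realizable by $\CH$ --- and output $h_D:=D(C(S'))$. Then $h_D$ agrees with $S'$, so $\widehat{\err}_S(h_D)\le\widehat{\err}_S(\hat h_S)\le\widehat{\err}_S(h^*)$ for the best--in--class hypothesis $h^*$, and $h_D$ is still a compression of at most $d$ points of $S$. For the generalization step I would replace the crude tail bound by Hoeffding's inequality: for each fixed index set $I$ with $|I|\le d$, condition on $S_I$, so that $D(S_I)$ is a fixed hypothesis and the $\ge m-d$ held--out points are i.i.d.\ and independent of it; Hoeffding gives $\Pr[\err_P(D(S_I))>\widehat{\err}_{[m]\setminus I}(D(S_I))+\gamma]\le e^{-2(m-d)\gamma^2}$. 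Union--bounding over the $\le(em/d)^d$ index sets with $\gamma=\Theta\!\big(\sqrt{(d\log(m/d)+\log(1/\delta))/m}\big)$, specializing to $I=C(S')$, and using $\widehat{\err}_{[m]\setminus C(S')}(h_D)\le\tfrac{m}{m-d}\widehat{\err}_S(h_D)$ together with a single application of Hoeffding to the fixed $h^*$ (so $\widehat{\err}_S(h^*)\le\eta+O(\sqrt{\log(1/\delta)/m})$), I obtain $\err_P(h_D)\le\eta+O(\gamma)$. Solving $\gamma\le\epsilon$ then gives $m=O\!\big((d\log(d/\epsilon)+\log(1/\delta))/\epsilon^2\big)$.

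\paragraph{Where the difficulty lies.}
The realizable part is essentially bookkeeping once one notices that $D(S_I)$ is independent of the held--out points --- this is the Littlestone--Warmuth bound. The delicate step is the agnostic case: one must \emph{not} condition on the data--dependent choice of retained indices (which would correlate the retained and the held--out points) but instead union--bound over \emph{all} index sets of size $\le d$; and, separately, the mere existence of a realizable compression scheme has to be upgraded to a rule that produces a small--empirical--error hypothesis on an \emph{arbitrary} sample, which is exactly what the ERM--restrict--compress construction above supplies. Matching the stated $\epsilon^{-2}$ rate also hinges on the union bound over $\binom{m}{d}$ sets contributing only a $d\log(m/d)$ term inside $\gamma$.
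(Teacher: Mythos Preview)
The paper does not actually prove this theorem: it is stated as background in the appendix with the attribution ``Based on \cite{LW86}'' and a pointer to Chapter~30 of the cited textbook, so there is no in-paper argument to compare against. Your proposal is the standard Littlestone--Warmuth sample-compression proof and is correct; the realizable case is exactly the union bound over $\binom{m}{d}$ index sets combined with independence of $D(S_I)$ from the held-out points, and your agnostic reduction (ERM, restrict to the correctly labeled subsample, compress, then Hoeffding plus the same union bound) is the usual way to upgrade a realizable compression scheme to an agnostic learner.

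Two minor remarks. First, the theorem as printed has $1/\delta$ rather than $\log(1/\delta)$ in both bounds; your argument in fact gives the stronger $\log(1/\delta)$ dependence, which is the standard statement and of course implies the weaker one. Second, in your agnostic paragraph you invoke an ERM $\hat h_S\in\CH$ purely to manufacture a realizable subsample; this is fine for a sample-complexity statement, but it is worth saying explicitly that no computational efficiency is being claimed here (the paper only uses this theorem to transfer the \emph{sample} bound from realizable to agnostic, not to obtain an efficient agnostic algorithm).
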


 \subsection{Lower bounds}
  The following measure of complexity of a hypothesis class yields a lower bound for multi-class learnability:

 \begin{definition} [N-shattering; Natarajan dimension]
  A set $\{x_1, \ldots, x_n\}$ is $N$-shattered by a class of functions $\CH \subseteq \Ycal^\Xcal$ if there exists two functions $f_1, f_2\in \Ycal^\Xcal$ with $f_1(x_i)\neq f_2(x_i)$ for all $i\in [n]$, such that, for any binary vector $v \in \{0,1\}^n$ of indices, there exists an $h_v \in H$ with
 \[
  h_v(x_i) \left\lbrace
  \begin{array}{lll}
  = f_1(x_i) & ~\text{if}~ & v_i = 1\\ 
  = f_2(x_i) & ~\mbox{if}~ & v_i = 0\\
  \end{array}
  \right.
  \]
  We call the size of a largest $N$-shattered set the \emph{Natarajan-dimension} of the class $\CH$.
  \end{definition}

  \begin{theorem}[\cite{Natarajan89}]\label{nat_graph_upper_lower}
     The sample complexity of learning a multi-class hypothesis class $\CH$ satisfies
  \[
 m[\CH](\epsilon,\delta) =  \Omega\left(\frac{d_N(H) + \ln(1/\delta)}{\epsilon}\right) 
  \]
  \end{theorem}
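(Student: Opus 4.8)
The plan is to prove the two summands separately, i.e.\ $m[\CH](\epsilon,\delta)=\Omega(d_N(\CH)/\epsilon)$ and $m[\CH](\epsilon,\delta)=\Omega(\ln(1/\delta)/\epsilon)$, and then combine them via $a+b\le 2\max\{a,b\}$ together with the monotonicity of $m[\CH]$ in $\epsilon$ and $\delta$ (this absorbs the regimes where $\epsilon$ or $\delta$ are bounded away from $0$ into the constants). Both bounds follow the standard ``no-free-lunch'' template: exhibit a distribution and a family of realizable targets on which a sample of the claimed size leaves the learner unable to guess correctly on a chunk of probability mass of size $\epsilon$.

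For the $d_N(\CH)/\epsilon$ term, fix an $N$-shattered set $C=\{x_1,\dots,x_d\}$ with $d=d_N(\CH)\ge 2$ (the case $d_N(\CH)=1$ is covered by the second bound with $\delta$ a constant) and witnesses $f_1,f_2$. Let $P$ put mass $1-8\epsilon$ on $x_1$ and mass $\tfrac{8\epsilon}{d-1}$ on each of $x_2,\dots,x_d$, and draw the target at random by fixing the label of $x_1$ to $f_1(x_1)$ and, independently for $i\ge 2$, setting the label of $x_i$ to $f_1(x_i)$ or $f_2(x_i)$ with equal probability; by the definition of $N$-shattering every such labelling is computed by some member of $\CH$. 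If $m\le\tfrac{d-1}{16\epsilon}$, then in expectation a size-$m$ sample meets the ``light'' points $x_2,\dots,x_d$ at most $8\epsilon m\le (d-1)/2$ times, so at least $(d-1)/2$ of them are unseen in expectation; on each unseen light point the learner's prediction is independent of that point's (balanced, two-valued) label, hence wrong with probability $\ge\tfrac12$. Consequently $\E_{\text{target},S}[\err_P^{h}(h_\Acal(S))]\ge \tfrac{8\epsilon}{d-1}\cdot\tfrac12\cdot\tfrac{d-1}{2}=2\epsilon$; averaging fixes a single target $h^\star\in\CH$ with $\E_S[\err_P^{h^\star}(h_\Acal(S))]\ge 2\epsilon$, and since this error never exceeds the light mass $8\epsilon$, a Markov-type inequality yields $\Pr_S[\err_P^{h^\star}(h_\Acal(S))>\epsilon]\ge c$ for an absolute constant $c>0$. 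Thus no $(\epsilon,\delta)$-learner with $\delta<c$ can use fewer than $\tfrac{d-1}{16\epsilon}$ samples, which after a constant rescaling of $\epsilon$ gives the bound.

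For the $\ln(1/\delta)/\epsilon$ term, use $d_N(\CH)\ge 1$ to get two hypotheses $g_0,g_1\in\CH$ and a point $x_1$ with $g_0(x_1)\ne g_1(x_1)$, and (possibly after restricting the domain to points where $g_0,g_1$ agree, plus $x_1$) build $P$ with $P(x_1)=2\epsilon$ and the remaining mass on a point where the two agree, so that on the support of $P$ the hypotheses differ only at $x_1$. Pick the target uniformly from $\{g_0,g_1\}$; a size-$m$ sample misses $x_1$ entirely with probability $(1-2\epsilon)^m\ge\delta$ whenever $m\le\tfrac{\ln(1/\delta)}{4\epsilon}$, and conditioned on that event the learner's prediction at $x_1$ is the same for both targets, hence wrong at $x_1$ — so it has error $\ge 2\epsilon$ — for at least one of them; averaging over the two targets and rescaling constants gives $m[\CH](\epsilon,\delta)=\Omega(\ln(1/\delta)/\epsilon)$. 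The only delicate part is the probabilistic bookkeeping: confirming that the random labellings in the first construction genuinely lie in $\CH$ (this is exactly what $N$-shattering buys once the $x_1$-coordinate is held fixed), bounding the number of distinct light points met by a size-$m$ sample with a Chernoff/union bound, and converting the in-expectation error estimate into a high-probability statement with clean constants; the few degenerate configurations (tiny domain or $d_N(\CH)$, or $g_0,g_1$ differing everywhere) are handled by routine ad hoc adjustments.
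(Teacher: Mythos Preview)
The paper does not prove this theorem at all: it is stated in the appendix as a known result, cited to Natarajan~\cite{Natarajan89}, and used as a black box to convert the Natarajan-dimension lower bound of Lemma~\ref{l:Nidim} into a sample-complexity lower bound. There is therefore no in-paper proof to compare against.

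That said, your sketch is the standard argument and is essentially sound. The first part (randomize the target over an $N$-shattered set with one heavy anchor point and $d-1$ light points, count unseen light points, then average and apply Markov to get a constant failure probability) is the classical route going back to Ehrenfeucht--Haussler--Kearns--Valiant in the binary case, and $N$-shattering is precisely what lets you realize every $\{f_1,f_2\}$-pattern by some $h\in\CH$. The second part (two hypotheses differing at a single mass-$2\epsilon$ point, miss it with probability $(1-2\epsilon)^m$) is likewise standard. One small point worth tightening: in the $\ln(1/\delta)/\epsilon$ construction you need a point where $g_0$ and $g_1$ agree to absorb the remaining $1-2\epsilon$ mass; if no such point exists in $\Xcal$, the usual fix is to enlarge $\Xcal$ by a dummy point on which every hypothesis takes a fixed value (or, equivalently, allow the distribution to be sub-probability and pad), rather than ``restricting the domain,'' which does not by itself create an agreement point. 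This is indeed a routine adjustment, but it is the one place where your parenthetical is slightly off.
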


\section{The lower bound in Theorem \ref{t:linear_result}}
\label{s:lower_bound_for_Hlin}

We show a lower bound on the Natarajan dimension of $\widehat{\Hlin}$:

\begin{lemma}
\label{l:Nidim}
 The Natarajan dimension of the class $\widehat{\Hlin}$ is at least $d-1$.
\end{lemma}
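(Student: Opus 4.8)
The plan is to exhibit an explicit set of $d-1$ instances (price/budget pairs) and two ``anchor'' demand functions $f_1, f_2$ that witness $N$-shattering, by showing that every sign pattern $v\in\{0,1\}^{d-1}$ is realized by some linear utility function. The natural instance to use is one where the budget is small enough that the agent buys exactly one good, so that the revealed bundle is a single standard basis vector scaled to exhaust the budget; this way the demand function's value at the instance directly encodes which of two goods has the higher ratio $a_j/p_j$. Concretely, for each $i\in\{2,\dots,d\}$ I would take the instance $x_i = (\pp^{(i)}, B^{(i)})$ where $\pp^{(i)}$ has $p_1 = p_i = 1$, all other coordinates huge (so those goods are never competitive), and $B^{(i)}$ tiny (say $B^{(i)} = 1/2^{i}$ or something that makes the instances pairwise distinct and forces purchase of a single good). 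Then $\CB_U(\pp^{(i)}, B^{(i)})$ is $B^{(i)} e_1$ if $a_1/p_1 > a_i/p_i$ and $B^{(i)} e_i$ if the reverse inequality holds (with ties broken by the cheapest/lexicographic rule, which I can avoid by perturbing). So I set $f_1(x_i) = B^{(i)} e_1$ and $f_2(x_i) = B^{(i)} e_i$; these are distinct at every $x_i$ as required.

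The core step is then: given any $v\in\{0,1\}^{d-1}$, construct $\aa\in\Rplus^d$ with $\sum_j a_j = 1$ such that the linear utility $U_\aa$ agrees with $f_1$ on the coordinates where $v_i = 1$ and with $f_2$ where $v_i = 0$. Since the instance $x_i$ only compares good $1$ against good $i$ (with $p_1 = p_i = 1$), this amounts to choosing the $a_j$'s so that $a_1 > a_i$ exactly when $v_i = 1$ and $a_1 < a_i$ exactly when $v_i = 0$. This is trivially arrangeable: pick $a_1$ to be some fixed value, set $a_i$ slightly below $a_1$ when $v_i = 1$ and slightly above when $v_i = 0$, and normalize. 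One has to double-check that the ``huge price'' goods indeed never enter any optimal bundle at these instances (immediate, since their value-per-price is negligible and the budget is exhausted by good $1$ or good $i$ alone) and that the budget is small enough that only one good is purchased (also immediate for $B^{(i)} < 1$). The normalization $\sum_j a_j = 1$ is harmless since scaling doesn't change the demand.

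The main obstacle — really the only delicate point — is handling the tie-breaking and the ``partially allocated good'' subtlety: I want the revealed bundle to be exactly a scaled basis vector, which requires that at instance $x_i$ the optimal bundle genuinely buys only one good and spends the whole budget on it. With $B^{(i)}$ strictly smaller than both relevant prices this is automatic, and by choosing all the $a/p$ ratios strictly separated (strict inequalities $a_1 \neq a_i$) I avoid ties entirely, so the cheapest-bundle and lexicographic rules never get invoked. A secondary bookkeeping point is ensuring the $d-1$ instances $x_2,\dots,x_d$ are genuinely distinct points of $\Xcal = \Rplus^d\times\Rplus$, which I arrange by making the budgets $B^{(i)}$ all different (or by making the price vectors differ in which coordinate is set to $1$). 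Once these checks are in place, the construction realizes all $2^{d-1}$ patterns, so the set is $N$-shattered and the Natarajan dimension of $\widehat{\Hlin}$ is at least $d-1$. Combined with Theorem~\ref{nat_graph_upper_lower} this gives the claimed $\Omega\!\left(\frac{(d-1)+\log(1/\delta)}{\epsilon}\right)$ lower bound, and the same construction transfers to the revealed-preference learning model since $U_\aa(\CB_U(x_i))$ differs between the two anchors at every shattered instance.
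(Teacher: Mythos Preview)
Your proposal is correct and takes essentially the same approach as the paper: both construct $d-1$ price/budget instances in which only good $1$ and good $i$ are competitive (equal low price, all other goods priced out), so that the revealed bundle encodes the comparison $a_1$ vs.\ $a_i$, and then realize any pattern $v$ by independently placing each $a_i$ above or below $a_1$. The only cosmetic difference is that the paper takes budget $B=1$ (so the chosen good is bought in full) and relies on the lexicographic tie-breaking rule for the case $a_i=a_1$, whereas you use a small budget and strict inequalities to sidestep ties entirely---arguably a cleaner choice.
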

\begin{proof}
 We show that there is a set of pairs of price vectors and budgets of size $d-1$ that is $N$-shattered by $\widehat{\Hlin}$.
 Consider the set $\{(\pp^1, 1) \ldots (\pp^{d-1}, 1)\}$ with all budgets set to $1$ and with the price vectors defined by:
 \begin{equation*}
 p^j_i = \left\{
\begin{array}{ll}
 1                           & \text{if} \quad i=1 \\
 1                           & \text{if} \quad i=j \\
 10                         & \text{otherwise}
\end{array}
\right. 
\end{equation*}
We consider the following functions $f_0$ and and $f_1$ that map the pairs $(\pp^j, 1)$ to bundles.
We set $f_0(\pp^j, 1) = (1, 0, \ldots, 0)$ for all $j$; that is, $f_0$ maps all pairs to the bundle where only the first good is bought.
Now we define $f_1$ by setting the $i$-th coordinate of the bundle $f_1(\pp^j, 1)$ to
 \begin{equation*}
 (f_1(\pp^j, 1))_i = \left\{
\begin{array}{ll}
 1                           & \text{if} \quad i=j \\
 0                         & \text{otherwise}
\end{array}
\right. 
\end{equation*}
That is, $f_1$ maps $(\pp^j, 1)$ to the bundle where only the $j$-th good is bought.

Now, given a vector $\vv\in\{0,1\}^{d-1}$, the demand function that is defined by the utility vector $\ww$ with
 \begin{equation*}
 w_i = \left\{
\begin{array}{ll}
 1                           & \text{if} \quad i=1 \\
 2                           & \text{if} \quad v_{i-1} =1 \\
 1                           & \text{if} \quad v_{i-1} = 0
\end{array}
\right. 
\end{equation*}
yields $\widehat{U_{\ww}}(\pp^i, 1) =  f_{v_j}(\pp^i, 1)$ for all $i,j$. Thus, the set $\{(\pp^1, 1) \ldots (\pp^{d-1}, 1)\}$ is $N$-shattered.
\end{proof}

According to Theorem \ref{t:linear_result} above, this lower bound on the Natarajan dimension yields the lower bound for learning $\widehat{\Hlin}$ stated in the Theorem.
It is not difficult to see that the shattering construction in the above lemma also yields the same lower bound for learning $\Hlin$ in the revealed preference model. For this, observe that the two functions $f_0$ and $f_1$ in the construction not only yield different optimal bundles on the $(\pp^j, 1)$,  but these optimal bundles also have different utility values.

\section{Statistical learning from revealed preferences}
\label{s:SPCL_CES_Leon}

\subsection{SPLC functions}
Recall that an SPLC utility function $U_{\AA\LL}$ can be defined by two $d\times \kappa$ matrices. Entry $a_{ij}$ of $\AA$ stands for the slope of the $j$-th segment of $U_i$ (the piecewise linear function for the marginal utility over good $i$).
Entry $l_{ij}$ of $\LL$ is the length of that same segment.
If the maximum number of segments and their lengths are known a priori, we can employ the same technique as for learning linear utility functions from revealed preferences.
That is, let $\Hsplc^{\LL}$ denote the subclass of all SPLC functions where number and lengths of the segments are fixed (defined by matrix $\LL$).

As for linear utility functions, we can identify admissible candidates for optimal bundles.
Note that, an agent will greedily buy segments according to an order of $a_{ij}/p_i$ and in an optimal bundle all, but at most one, segments are bought fully (see also Section \ref{ss:characterization_opt_bundles}).
Thus, here we call a bundle $\xx$ admissible for some $(\pp,B)$ if $|\{j ~:~ x_j \neq \sum_{g\leq h} l_{jg} \text{ for some } h \in [d]\}| \leq 1$ and $\inner{\pp, \xx} = B$.
As in the linear case, if $\inner{\pp, \bo_d} = \sum_{i\in\CG} p_i \leq B$, we also call the all $1$-bundle $\bo_d$ admissible (and in this case, it is the only admissible bundle).

Then the corresponding class of demand functions $\widehat{\Hsplc^{\LL}}$ is a $\kappa d$-dimensional linear class as witnessed by the mapping
\begin{equation*}
\Psi((\pp, B), \xx) = \left\{
\begin{array}{l}
 \xx^{\kappa d} \text{ if admissible} \\
 \bz_{\kappa d} \text{ otherwise },
\end{array}
\right. 
\end{equation*}
where $\xx^{\kappa d}$ is the ``split'' of $\xx$ into $\kappa d$ dimensions according to the matrix $\LL$ as follows:
\begin{equation*}
x^{\kappa d}_i = \left\{
\begin{array}{ll}
 l_{hj}                           & \text{if}\quad \sum_{g \leq j} l_{hg} \leq x_h \quad\text{and}\quad h = \lceil{i/d}\rceil  \quad\text{and}\quad i = j \mod d \\
 x_j -  \sum_{g \leq j-1} l_{hg}  & \text{if}\quad \sum_{g \leq j-1} l_{hg} \leq x_h <  \sum_{g \leq j} l_{hg} \quad\text{and}\quad h = \lceil{i/d}\rceil  \quad\text{and}\quad i = j \mod d \\
 0                                & \text{if}\quad x_h < \sum_{g \leq j-1} l_{hg}    \quad\text{and}\quad h = \lceil{i/d}\rceil  \quad\text{and}\quad i = j \mod d \\
\end{array}
\right. 
\end{equation*}

Therefore, we immediately get the sample complexity result:

\begin{theorem}
 The classes $\Hsplc^{\LL}$ of linear utility functions with known segments are learnable efficiently in the revealed preference model with sample complexity   
  \[
   O\left(\frac{\kappa d\log(1/\epsilon)+\log(1/\delta)}{\epsilon} \right).
  \] 
\end{theorem}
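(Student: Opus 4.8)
The plan is to mirror, step for step, the argument given earlier for the linear case (Theorem \ref{t:linear_result}), since the statement asserts exactly the same kind of conclusion — that $\widehat{\Hsplc^{\LL}}$ is a $\kappa d$-dimensional linear class — and then invoke Theorem \ref{t:svm_for_d_linear} to obtain the claimed sample complexity. So the single thing that genuinely needs proving is the identity $\CH_\Psi = \widehat{\Hsplc^{\LL}}$ for the $\Psi$ displayed just before the theorem, together with the observation that $\Psi$ maps into $\reals^{\kappa d}$ (which is immediate from the definition of $\xx^{\kappa d}$), so that Definition \ref{d:D_dim_linear_class} applies with $D = \kappa d$.

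First I would fix an SPLC utility $U_{\AA\LL}\in\Hsplc^{\LL}$, so that the segment lengths are exactly the entries of $\LL$ and only the slope matrix $\AA$ varies. I would define $\ww\in\reals^{\kappa d}$ to be the ``flattened'' slope vector, i.e.\ the coordinate of $\ww$ indexed by $(h,j)$ is $a_{hj}$, arranged in the same $\kappa d$-indexing scheme used to define $\xx^{\kappa d}$. The key claim is then: for every $(\pp,B)$ with $\inner{\pp,\bo_d}>B$ (the affordable-everything case being trivial, as noted in the text), the bundle $\widehat{U_{\AA\LL}}(\pp,B)$ is the unique maximizer of $\inner{\ww,\Psi((\pp,B),\cdot)}$ over all bundles. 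To see this I would argue in two directions, exactly as in the linear proof. (i) Any admissible $\xx$ has $\inner{\ww,\Psi((\pp,B),\xx)} = \inner{\ww,\xx^{\kappa d}} = \sum_{h,j} a_{hj} x^{\kappa d}_{hj} = \sum_h U_h(x_h) = U_{\AA\LL}(\xx)$ — that is, on admissible bundles the linear functional $\Psi$ computes precisely the SPLC utility, because $\xx^{\kappa d}$ is by construction the decomposition of each $x_h$ into the per-segment amounts. (ii) A non-admissible bundle is sent to $\bz_{\kappa d}$ and hence scores $0$, which is dominated by the optimal admissible bundle (whose utility is positive whenever prices are finite, by monotonicity and the fact that $\inner{\pp,\bo_d}>B>0$ allows buying a positive amount). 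Finally I would invoke the KKT characterization (\ref{eq.splcob}) from Section \ref{ss:characterization_opt_bundles}: the optimal bundle $\CB_{U_{\AA\LL}}(\pp,B)$ is always admissible (segments are bought greedily in decreasing order of $a_{ij}/p_i$, with at most one partially filled), so the $\argmax$ over admissible bundles of $U_{\AA\LL}$ coincides with $\CB_{U_{\AA\LL}}(\pp,B)$, and by the no-ties distributional assumption this $\argmax$ is a singleton. Putting (i), (ii) and the KKT fact together gives $h_{\ww} = \widehat{U_{\AA\LL}}$, and since $\ww$ ranges over all of $\reals^{\kappa d}$ (well, the nonnegative orthant — but this suffices for the upper bound, just as in the linear case) as $\AA$ ranges over $\Hsplc^{\LL}$, we conclude $\CH_\Psi = \widehat{\Hsplc^{\LL}}$.

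With that identity in hand, Theorem \ref{t:svm_for_d_linear} applied with $D = \kappa d$ yields sample complexity $O\!\left(\frac{\kappa d\log(1/\epsilon)+\log(1/\delta)}{\epsilon}\right)$ for learning $\widehat{\Hsplc^{\LL}}$ in the sense of Definition \ref{d:learnability}, and by the remark following Definition \ref{d:learning_rev_pref} this implies learnability of $\Hsplc^{\LL}$ in the revealed-preference model with the same bound. For the ``efficiently'' clause I would point to Remark \ref{r:svm_efficient}: it suffices to compute a second-best admissible bundle with respect to $\Psi$, and the $O(d)$ algorithm of Theorem \ref{thm:secondbest} for the linear case adapts directly once one works in the $\kappa d$-dimensional ``split'' coordinates — the goods there are just the $\kappa d$ segments, with per-segment capacities $l_{hj}$ in place of the unit caps, and the same greedy/transfer analysis (Lemmas \ref{le:1}–\ref{le:3}) goes through.

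The main obstacle, I expect, is nailing down the bookkeeping in step (i): one has to check carefully that the piecewise definition of $x^{\kappa d}$ really does reproduce, segment by segment, the allocation that a utility-maximizing agent would make, i.e.\ that whenever $\xx$ is admissible the quantity $x^{\kappa d}_{hj}$ equals the amount the agent puts on segment $(h,j)$ when consuming $x_h$ units of good $h$ — and in particular that ``at most one partially-filled segment across all goods'' (the admissibility condition) is consistent with the greedy order forced by the slopes, so that the linear functional $\inner{\ww,\cdot}$ is genuinely maximized at the KKT bundle rather than at some other admissible bundle that happens to rearrange amounts within a good. This is the SPLC analogue of the observation in the linear proof that ``goods are bought greedily in decreasing order of $a_i/p_i$'', and it is exactly what (\ref{eq.splcob}) encodes; the argument is routine but is the place where a careless write-up could go wrong. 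Everything else is a direct transcription of the linear case.
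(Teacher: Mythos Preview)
Your proposal is correct and follows essentially the same approach as the paper: show that $\widehat{\Hsplc^{\LL}}$ is a $\kappa d$-dimensional linear class via the displayed map $\Psi$, then invoke Theorem~\ref{t:svm_for_d_linear} for the sample complexity and adapt the second-best-bundle computation (Lemmas~\ref{le:1}--\ref{le:3}) for efficiency. The paper's own argument is terser---it simply asserts the linear-class structure and says ``therefore we immediately get the sample complexity result''---whereas you spell out the verification that $\inner{\ww,\Psi((\pp,B),\xx)} = U_{\AA\LL}(\xx)$ on admissible bundles and that the KKT bundle is admissible; but the route is identical.
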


In order to argue for the computational efficiency in the above theorem, according to Remark \ref{r:svm_efficient}, we need to show how to compute the second best {\em
admissible} bundle in polynomial-time. As in the linear case, if $\bo_d$ is an admissible bundle (that is, if $\inner{\pp,\bo_d} \leq B$), then any other bundle is second best (with respect to the mapping $\Psi$).

Otherwise, for given $(\pp,B)$, we design an $O( d)$-time algorithm to compute the
second best admissible bundle, i.e., $\yy \in \argmax_{\xx \mbox{ admissible }, \xx\neq \xx^*} \aa\cdot \xx$, where $\xx^*$ is the
optimal bundle.

Similar to an optimal bundle for a function of $\Hlin$, an optimal bundle for $U_{\aa}\in \Hsplc^{\LL}$ can be computed by sorting segments
$(j,k)$ in decreasing order of $\frac{a_{jk}}{p_j}$ and buying them in order (Section \ref{ss:characterization_opt_bundles});
$a_{jk}>a_{j(k+1)}$ ensures that segments of a good are bought from first to last. Thus, the second best admissible optimal bundle can also
be computed in similar way as done in Section \ref{s:computing_second_best} for $\Hlin$. 

Corresponding to the optimal bundle $\xx^*$, let $x^*_{jk}$ denote the allocation on segment $(j,k)$. Let $k_j$ be the last segment
bought of good $j$. Then, clearly there exists exactly one good, say $t$, such that $x^*_{tk_t}<l_{tk_t}$.  Let $\yy$ be the second
best admissible bundle. Like in Lemma \ref{le:1} it follows that $y_l >x*_l$ for exactly one good $l$. Further, the extra allocation
has to be on segment $(l,(k_l+1)$ if $l\neq t$ else $(t,k_t)$. Next like Lemma \ref{le:2} $y_i<x^*_i$ for exactly one good $i$, and
decrease in allocation is on segment $(i,k_i)$. Finally, similar to Lemma \ref{le:3}, if $l\neq t$ and $x^*_{tk_t}<l_{tk_t}$ then
$y_t<x^*_t$, and if $x^*_{tk_t}=l_{tk_t}$ then $l$ has to be the good whose was the first to be not allocated.

Thus, the algorithm to compute second best bundle will have check only $O(d)$ bundles, namely if $x^*_{tk_t}<l_{tk_t}$ then either
money is transferred from one of segments $(i,k_i)$ to one of segments $(t,k_t)$ or from $(t,k_t)$ to one of $(l,(k_l+1)),\ \forall
l\neq t$, and otherwise from $(i,k_i)$ to the best unallocated segment in $\xx^*$.

\subsection{CES known $\rho$}

We show that the classes of demand functions $\widehat{\Hces^\rho}$ are also $d$-dimensional linear classes, for any $\rho \in\Rplus, \rho\leq 1$ (the case $\rho =1$ yields linear utility functions whose demand functions were shown to be $d$-dimensional linear above).

Recall that a CES function is defined by a parameter $\rho \in\Rplus, \rho\leq 1, \rho\neq 0$ and a vector $\aa\in \Rplus^d$.
Note that for some price vector $\pp$ and budget $B$, we have
$$
\argmax_{\xx\in[0,1]^n, \inner{\pp, \xx} \le B} (\sum_j a_j x_j^\rho)^{\nfrac{1}{\rho}} = \argmax_{\xx\in[0,1]^n, \inner{\pp, \xx} \le B} \sum_j a_j x_j^\rho
$$
Thus, we can employ the following mapping:
\begin{equation}
\Psi((\pp, B), \xx) = \left\{
\begin{array}{l}
 \xx^{\rho} \text{ if } \pp\cdot \xx \le B \\
 \bz_{d} \text{ if } \pp\cdot \xx > B,
\end{array}
\right. 
\end{equation}
where $\xx^{\rho} = (x_1^\rho, \ldots, x_d^\rho)$. This yields:

\begin{theorem}
 The classes $\Hces^\rho$ of linear utility functions with known parameter $\rho$ are learnable in the revealed preference model with sample complexity   
  \[
   O\left(\frac{d\log(1/\epsilon)+\log(1/\delta)}{\epsilon} \right).
  \] 
\end{theorem}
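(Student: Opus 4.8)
The plan is to exhibit the encoding map $\Psi$ stated just above the theorem, verify that it turns the class of demand functions $\widehat{\Hces^\rho}$ into a $d$-dimensional linear class in the sense of Definition~\ref{d:D_dim_linear_class}, and then quote the sample complexity bound for such classes (Theorem~\ref{t:svm_for_d_linear}, equivalently the restated Theorem~5 of \cite{DanielyS14}). Since learnability of the demand-function class $\widehat{\Hces^\rho}$ in the standard multi-class sense (Definition~\ref{d:learnability}) implies learnability of $\Hces^\rho$ in the revealed preference model (Definition~\ref{d:learning_rev_pref}), as noted right after that definition, this reduction is all that is needed.

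First I would fix $\rho$ with $0<\rho\le 1$; the case $\rho=1$ is $\Hlin$, already treated, so I may assume $0<\rho<1$. For $U=U_{\aa\rho}\in\Hces^\rho$ I would take the weight vector $\ww=\aa\in\Rplus^d$ and check the two cases of $\Psi$. For an affordable bundle $\xx$ (i.e.\ $\inner{\pp,\xx}\le B$) we have $\inner{\ww,\Psi((\pp,B),\xx)}=\sum_j a_j x_j^\rho=U(\xx)^\rho\ge 0$, while for an unaffordable bundle the inner product equals $0$. Hence $\max_{\xx\in[0,1]^d}\inner{\ww,\Psi((\pp,B),\xx)}$ is attained within the convex set of affordable bundles, and because $t\mapsto t^{1/\rho}$ is strictly increasing on $\Rplus$, maximizing $\xx\mapsto\sum_j a_j x_j^\rho$ over that set is the same as maximizing $U$ over it, whose maximizer is $\CB_U(\pp,B)$ by definition. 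Under the standing no-ties assumption of this section this maximizer is unique, so $\CB_U(\pp,B)\in\argmax_{\xx\in[0,1]^d}\inner{\ww,\Psi((\pp,B),\xx)}$; thus $h_{\ww}=\widehat{U_{\aa\rho}}$ and $\widehat{\Hces^\rho}=\CH_\Psi$ with $D=d$. In contrast to the linear case, no ``admissible bundle'' restriction is needed inside $\Psi$ here, since strict concavity of $\sum_j a_j x_j^\rho$ when $0<\rho<1$ already forces the budget-constrained maximizer to be a single point.

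The one spot that needs a moment's care is that $\Psi$ must genuinely separate affordable from unaffordable bundles: I need the maximal affordable value $\sum_j a_j(\CB_U(\pp,B))_j^\rho$ to be at least the value $0$ that $\Psi$ assigns to every unaffordable bundle. This holds trivially since $\aa\ge 0$ and $\rho>0$, the only borderline case being zero optimal utility (where some unaffordable bundle would tie), which is excluded by the no-ties assumption and is in any event non-generic. With $\widehat{\Hces^\rho}=\CH_\Psi$ for $D=d$ established, Theorem~\ref{t:svm_for_d_linear} gives sample complexity $O\!\left(\tfrac{d\log(1/\epsilon)+\log(1/\delta)}{\epsilon}\right)$ for learning $\CH_\Psi$, and therefore the same bound for learning $\Hces^\rho$ from revealed preferences, as claimed. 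The main obstacle, such as it is, is purely this bookkeeping around the affordable/unaffordable dichotomy and ties; the substantive content is entirely imported from the $d$-dimensional linear class machinery of Section~\ref{sss:linear_classes}.
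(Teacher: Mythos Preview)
Your proposal is correct and follows essentially the same route as the paper: both exhibit the map $\Psi((\pp,B),\xx)=\xx^\rho$ on affordable bundles (and $\bz_d$ otherwise), verify that this realizes $\widehat{\Hces^\rho}$ as a $d$-dimensional linear class, and then invoke the sample-complexity bound for such classes. Your treatment is slightly more explicit in justifying why no admissibility restriction is needed (strict concavity for $0<\rho<1$) and in flagging the affordable/unaffordable tie edge case, but these are elaborations on the same argument rather than a different approach.
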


\subsection{Leontief}

Learning the class of Leontief functions from revealed preferences in a statistical setting is trivial, since observing \emph{one} optimal bundle reveals all the relevant information (see Section \ref{ss:characterization_opt_bundles}).
 
\section{Statistical learning of utility functions}
\label{s:statistical_value}

As a point of comparison, we also analyze the learnability of classes of utility functions in the standard statistical multi-class learning model (Definition \ref{d:learnability}).
That is, here the input to the learner is a sample $S = ((\xx_1, U(\xx_1)), \ldots, (\xx_n, U(\xx_n)))$ of pairs of bundles and values generated by a distribution $P$ over bundles and labeled by a utility function $U$ from a class $\CH$.
The learner outputs a function from bundles to values $\Acal(S): [0,1]^d \to \Rplus$. 
\subsection{Linear}

Linear functions are learnable in the multi-class learning framework.
The following result has been implicit in earlier works. For completeness, we provide a proof here.

\begin{theorem}
The class of linear functions $H=\{\xx \mapsto \inner{\xx, \ww} ~:~ \ww\in \reals^d\}$ on $\reals^d$ is learnable
with sample complexity $O\left(\frac{d \log(d/\epsilon) + \log(1/\delta)}{\epsilon}\right)$.
\end{theorem}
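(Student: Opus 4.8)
The plan is to reduce the problem of learning linear functions over bundles (from exact utility values) to a low-dimensional linear regression / realizable learning problem whose sample complexity is controlled by a standard VC-type or covering-number argument. Concretely, for a target $U(\xx) = \inner{\xx, \ww}$ with $\ww \in \reals^d$, the learner receives pairs $(\xx_i, \inner{\xx_i, \ww})$; since this is a noiseless (realizable) linear system, any $\ww'$ consistent with all sample points agrees with $\ww$ on the observed bundles, and an ERM procedure (e.g., solve the linear system, or run SVM/least squares) returns such a $\ww'$. The first step I would carry out is to pin down exactly what ``error'' means here: since $\Ycal = \Rplus$ is infinite, equality $h(\xx) = U(\xx)$ is the relevant $0/1$-type loss, so the learning task is to ensure $\Pr_{\xx\sim P}[\inner{\xx, \ww'} \neq \inner{\xx, \ww}] \leq \epsilon$, i.e., the hyperplane $\{\xx : \inner{\xx, \ww - \ww'} = 0\}$ has small $P$-measure of its complement.

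Next I would invoke the appropriate combinatorial dimension. One clean route: the class of ``disagreement regions'' $\{\xx : \inner{\xx, \vv} \neq 0\}$ as $\vv$ ranges over $\reals^d$ is contained in the class of complements of hyperplanes through the origin, which has VC dimension $d-1$ (or at most $d$). An ERM learner that outputs any consistent $\ww'$ then has, by the standard realizable PAC bound, sample complexity $O\big(\frac{d\log(1/\epsilon) + \log(1/\delta)}{\epsilon}\big)$. Alternatively — and this is likely closer to the ``implicit in earlier works'' remark, and matches the $\log(d/\epsilon)$ factor in the statement — one casts the class of demand-like predictors as a $d$-dimensional linear class in the sense of Definition \ref{d:D_dim_linear_class} via a suitable $\Psi$, and appeals directly to Theorem \ref{t:svm_for_d_linear}. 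The $\log(d/\epsilon)$ rather than $\log(1/\epsilon)$ suggests the authors route through the generic compression-scheme bound (the realizable bound in the compression theorem in Section \ref{ss:compression_schemes}) or a Natarajan-dimension argument with a $|\Ycal|$-style factor; I would follow whichever gives the cleanest reduction, most naturally the $D$-dimensional linear class framework with $D = d$.

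The main obstacle I anticipate is handling the infinite (continuous) label space cleanly: one cannot directly quote a finite-$\Ycal$ multiclass bound, so the argument must either (i) observe that on any fixed sample only finitely many distinct labels appear, so the effective label set is bounded by the sample size, and then control the resulting $\log$-factor, or (ii) argue purely geometrically about measures of hyperplanes, bypassing $\Ycal$ entirely. I expect the write-up to take route (i) or to reduce to the disagreement-region VC bound. A secondary, minor point to check is that ERM is well-defined and efficiently computable — solving the linear system $\inner{\xx_i, \ww'} = U(\xx_i)$ for all $i$ always has a solution (namely $\ww$ itself), so consistency is achievable, and any consistent solution suffices for the generalization bound. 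Once the dimension count is in place, the sample-complexity bound $O\big(\frac{d\log(d/\epsilon) + \log(1/\delta)}{\epsilon}\big)$ follows from the cited realizable learnability theorem.
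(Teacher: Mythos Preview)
Your ``clean route'' is exactly what the paper does: it observes that the disagreement region between two linear functions $\ww,\ww'$ is the complement of the linear subspace $\{\xx:\inner{\xx,\ww-\ww'}=0\}$, notes that the class of (complements of) linear subspaces of $\reals^d$ has VC dimension $d$, and then invokes the classical $\epsilon$-net bound of Haussler--Welzl to conclude that any sample-consistent $\ww'$ has error at most $\epsilon$. Your route (ii) for handling the infinite label space is precisely the one taken, and your remark about finding a consistent hypothesis by solving the linear system on a maximal independent subset also appears verbatim in the paper.

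One small clarification: you speculate that the $\log(d/\epsilon)$ (rather than $\log(1/\epsilon)$) factor signals a detour through the $D$-dimensional linear class machinery or a Natarajan-type bound. It does not. The paper stays entirely within the VC/$\epsilon$-net argument; the extra $\log d$ is simply the constant in the classical Haussler--Welzl $\epsilon$-net theorem they cite, which gives net size $O\!\left(\tfrac{d}{\epsilon}\log\tfrac{d}{\epsilon}\right)$. Your alternative route via Theorem~\ref{t:svm_for_d_linear} would not apply cleanly here anyway, since that framework is for discrete-label prediction and there is no natural $\Psi$ making value prediction a $d$-dimensional linear class in the required sense.
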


\begin{proof}[Sketch]
 Note that for any two linear functions $\ww$ and $\ww'$, the set of points on which $\ww$ and $\ww'$ have the same value forms a linear subspace.
 Thus, the the set $H\Delta H$ of subsets of $\Xcal$ where two linear functions $\ww$ and $\ww'$ disagree is exactly the collection of all complements of linear subspaces.
 The set of all linear subspaces of a vector space of dimension $d$ has VC-dimension $d$. 
 Since a collection of subsets has the same VC-dimension as the collection of corresponding complements of subsets, $H\Delta H$ has VC-dimension $d$ for the class $H$ of linear functions.
 
 An \iid sample of size $O\left(\frac{d \log(d/\epsilon) + \log(1/\delta)}{\epsilon}\right)$ is an $\epsilon$-net for $H\Delta H$ with probability at least $1-\delta$ \cite{HausslerW86}.
 This guarantees that (with probability at least $1-\delta$) every function that is consistent with the sample has error at most $\epsilon$. 
 Note that, to find a function $\ww$ that is consistent with a sample, it suffices to find a maximal linearly independent set of vectors $\xx_i$ in the sample.
 The value on a new example can then be inferred by solving a linear system.
\end{proof}

\subsection{SPLC and CES}

It is straightforward to see that learning the class $\Hsplc^{\LL}$ of SPLC utility functions where the number and lengths of the segments are known reduces to learning $\kappa d$-dimensional linear functions, where $\kappa$ is the maximum number of segments per good. For this, given a sample $S$, create a new sample $S^{\kappa d}$ by mapping every example $(\xx, U(\xx))\in S$ to an example $(\xx^{\kappa d}, U(\xx))$ for $S^{\kappa d}$, where $\xx^{\kappa d}\in [0,1]^{\kappa  d}$ is defined coordinate-wise as follows:
\begin{equation*}
x^{\kappa d}_i = \left\{
\begin{array}{ll}
 l_{hj}                           & \text{if}\quad \sum_{g \leq j} l_{hg} \leq x_h \quad\text{and}\quad h = \lceil{i/d}\rceil  \quad\text{and}\quad i = j \mod d \\
 x_j -  \sum_{g \leq j-1} l_{hg}  & \text{if}\quad \sum_{g \leq j-1} l_{hg} \leq x_h <  \sum_{g \leq j} l_{hg} \quad\text{and}\quad h = \lceil{i/d}\rceil  \quad\text{and}\quad i = j \mod d \\
 0                                & \text{if}\quad x_h < \sum_{g \leq j-1} l_{hg}    \quad\text{and}\quad h = \lceil{i/d}\rceil  \quad\text{and}\quad i = j \mod d \\
\end{array}
\right. 
\end{equation*}
Now, we can just learn a linear function $\ww \in \reals^d$ on $S^{\kappa d}$ and predict according to this function (employing the same mapping on a test example).

Similarly, we can reduce learning $\Hces^\rho$ of learning CES functions with fixed parameter $\rho$ to learning linear utility functions.
For this, given a sample $S$, create a new sample $S^{\rho}$ by mapping every example $(\xx, U(\xx))\in S$ to an example $(\zz, (U(\xx))^{\rho})$ for $S^{\rho}$, where $\zz\in [0,1]^{d}$ is defined coordinate-wise by setting ${z}_i  = (x_i)^\rho$.

\subsection{Leontief}

We now show that the class of Leontief functions is learnable. 
Recall that, a Leontief utility function is defined by a vector $\aa = (a_1,\ldots, a_d)$ by $U_{\aa}(\xx)=\min_{j \in \CG} \nfrac{x_j}{a_j}$. 

Note that, given an example $(\xx, y) = (\xx, U_{\aa}(\xx))$, we have
\[
 U_{\aa}(\xx) ~\leq~ \frac{x_j}{a_j}
\]
for all $j\in[d]$ with equality for at least one index $j$. Equivalently, we have
\[
 a_j ~\leq~ \frac{x_j}{U_{\aa}(\xx)}
\]
for all $j\in[d]$ with equality for at least one index $j$. That is, each example provides us with upper bounds on all the (unknown) parameters $a_j$ of the utility function. This suggests the following learning procedure: Going over all training examples, we maintain estimates $b_i$ of the $a_i$, by using the above inequalities (see Algorithm \ref{a:learning_leon}).

\begin{algorithm}[tb]
   \caption{Learning Leontief}\label{a:learning_leon}
   \label{a:leontief_stat_value}
\begin{algorithmic}
   \STATE {\bfseries Input:} Sample $S=((\xx^1, y^1),\ldots, (\xx^m, y^m))$
   \STATE $b_j \gets \min\{b_j, x^1_j/y^1\}$
     \FORALL{$i\in [m]$}
        \FORALL{$j \in [d]$}
          \STATE $b_j \gets \min\{b_j, x^i_j/y^i\}$
        \ENDFOR
     \ENDFOR
   \STATE {\bfseries Return:} vector $\bb = (b_1, \ldots, b_d)$ 
\end{algorithmic}
\end{algorithm}

On a new example, we predict with the Leontief utility function defined by $\bb$.

In order to prove that the above algorithm is a successful learner, we use the following claim, that characterizes the cases where an estimate $\bb$ of a target Leontief function $\aa$ errs on an example $\xx$.

\begin{claim}\label{cl:leontief_error}
 Let $\aa$ and $\bb$ be two vectors (defining Leontief utility functions) with $b_i \geq a_i$ for all $i\in[d]$.
 Then, $U_{\bb}(\xx) \neq U_{\aa}(\xx)$ implies 
 \[
  \frac{x_k}{U_{\aa}(\xx)} < b_k
 \]
 for the index $k$ that defines $U_{\bb}(\xx)$ (that is, the $k$ that minimizes $\nfrac{x_k}{b_k}$).
\end{claim}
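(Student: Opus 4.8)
\textbf{Proof proposal for Claim~\ref{cl:leontief_error}.}
The plan is to argue directly from the definitions of the two Leontief utilities together with the standing hypothesis $b_i \geq a_i$ for all $i$. First I would record the two easy one-sided comparisons. Since $U_{\aa}(\xx) = \min_j x_j/a_j$, we have $x_j/a_j \geq U_{\aa}(\xx)$ for every $j$, with equality for at least one index; similarly $U_{\bb}(\xx) = \min_j x_j/b_j = x_k/b_k$ for the index $k$ attaining the minimum under $\bb$. Because $b_i \geq a_i > 0$, we get $x_j/b_j \leq x_j/a_j$ coordinatewise, hence $U_{\bb}(\xx) = \min_j x_j/b_j \leq \min_j x_j/a_j = U_{\aa}(\xx)$. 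So the inequality between the two utilities only ever goes one way, and the assumption $U_{\bb}(\xx) \neq U_{\aa}(\xx)$ is equivalent to the \emph{strict} inequality $U_{\bb}(\xx) < U_{\aa}(\xx)$.

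Next I would unwind what strictness at the minimizing index $k$ means. We have $U_{\bb}(\xx) = x_k/b_k$, and $U_{\aa}(\xx) \leq x_k/a_k$ trivially, but the point is to compare $x_k/b_k$ with $U_{\aa}(\xx)$ itself, not with $x_k/a_k$. From $U_{\bb}(\xx) < U_{\aa}(\xx)$ we directly get $x_k/b_k < U_{\aa}(\xx)$, and since $U_{\aa}(\xx) > 0$ (assuming $\xx$ has all positive coordinates, or at least that $U_{\aa}(\xx)\neq 0$; otherwise the claim is about the degenerate case and one checks it separately), we may rearrange to $x_k/U_{\aa}(\xx) < b_k$, which is exactly the asserted conclusion. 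So the whole argument is: (i) $b \geq a$ forces $U_{\bb} \leq U_{\aa}$ pointwise; (ii) hence ``$\neq$'' upgrades to ``$<$''; (iii) evaluate the strict inequality at the $\bb$-minimizing index $k$ and rearrange.

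I do not expect a genuine obstacle here; the only thing to be careful about is the division by $U_{\aa}(\xx)$, i.e.\ the case $U_{\aa}(\xx) = 0$. If some coordinate $x_j = 0$, then $U_{\aa}(\xx) = 0$ and also $U_{\bb}(\xx) = 0$ (both minima are $0$), so $U_{\bb}(\xx) = U_{\aa}(\xx)$ and the hypothesis of the claim is vacuous; thus we may assume all $x_j > 0$ and $U_{\aa}(\xx) > 0$, and the division is legitimate. I would state this reduction in one sentence at the start and then give the three-step argument above. The claim will then feed into the sample-complexity proof for Algorithm~\ref{a:learning_leon} by showing that every error of $\bb$ on $\xx$ is ``witnessed'' at the coordinate $k$, so the set of bad examples is controlled by $d$ one-dimensional threshold events, one per coordinate.
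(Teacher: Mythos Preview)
Your proof is correct and, in fact, more streamlined than the paper's. The paper argues by a case split on whether the minimizing index $k$ for $\bb$ coincides with the minimizing index $i$ for $\aa$: in the case $i=k$ it computes $x_k/U_{\aa}(\xx)=a_k<b_k$, and in the case $i\neq k$ it chains $x_k/b_k < x_i/b_i \leq x_i/a_i = U_{\aa}(\xx)$ and rearranges. You bypass this case analysis entirely by first observing the pointwise monotonicity $U_{\bb}\leq U_{\aa}$ (from $b_i\geq a_i$), which immediately upgrades the hypothesis $U_{\bb}(\xx)\neq U_{\aa}(\xx)$ to the strict inequality $x_k/b_k = U_{\bb}(\xx) < U_{\aa}(\xx)$; one rearrangement then finishes. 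Your handling of the degenerate case $U_{\aa}(\xx)=0$ is also a nice touch that the paper leaves implicit. Both arguments are short, but yours isolates the single inequality that does all the work and avoids worrying about ties among minimizers.
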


\begin{proof}
 Let $\xx$ be some bundle with $U_{\bb}(\xx) \neq U_{\aa}(\xx)$, that is $\min_{j \in \CG} \nfrac{x_j}{b_j} \neq \min_{j \in \CG} \nfrac{x_j}{a_j}$.
 Let $k$ be the index that minimizes the left hand side (that defines $U_{\bb}(\xx)$) and let $i$ be the index that minimizes the left hand side (that defines $U_{\aa}(\xx)$). Then the above inequality implies that either $i\neq k$ or $i = k$ and $a_i = a_k \neq b_k$.
 
 If  $i = k$ and $a_i \neq b_k$, then we get
 \[
  \frac{x_k}{U_{\aa}(\xx)} ~=~ \frac{x_k}{\nfrac{x_k}{a_k}} ~=~ a_k ~<~ b_k,
 \]
 by the assumption that $b_i \geq a_i$ for all $i\in[d]$. 
 If $i\neq k$, we have
 \[
  \frac{x_k}{b_k} ~<~ \frac{x_i}{b_i} ~\leq~ \frac{x_i}{a_i} ~=~ U_{\aa}(\xx),
 \]
 and thus
 \[
 \frac{x_k}{U_{\aa}(\xx)} ~<~ b_k.
 \]
\end{proof}

\begin{theorem}
The class of Leontief utility functions is learnable
with sample complexity $O\left(\frac{d \log(d/\delta)}{\epsilon}\right)$.
\end{theorem}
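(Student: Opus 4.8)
The goal is to show Algorithm~\ref{a:learning_leon} learns the class of Leontief utility functions with sample complexity $O\!\left(\frac{d\log(d/\delta)}{\epsilon}\right)$. The plan is to combine the error characterization in Claim~\ref{cl:leontief_error} with a coordinate-wise $\epsilon/d$-net argument. First I would observe that the algorithm's output $\bb$ satisfies $b_j \geq a_j$ for every $j$ (each update only decreases $b_j$, and every sampled constraint $x_j^i/y^i \geq a_j$ is valid, so the invariant $b_j\ge a_j$ is maintained throughout); hence Claim~\ref{cl:leontief_error} applies to $\bb$ versus the target $\aa$.

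Next I would decompose the error event. For a fresh $(\pp,B)$-free test bundle $\xx\sim P$ on which $U_{\bb}(\xx)\ne U_{\aa}(\xx)$, Claim~\ref{cl:leontief_error} gives an index $k=k(\xx)$ with $x_k/U_{\aa}(\xx) < b_k$, i.e. $\xx$ lands in the ``bad region'' $R_k := \{\xx : a_k \le x_k/U_{\aa}(\xx) < b_k\}$ for that $k$. By a union bound, $\err(\bb) \le \sum_{j=1}^d \Pr_{\xx\sim P}[\xx \in R_j]$, where $R_j = \{\xx : a_j \le x_j/U_{\aa}(\xx) < b_j\}$ (note $x_j/U_{\aa}(\xx)\ge a_j$ always holds). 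So it suffices to ensure that, with probability $\ge 1-\delta$, each of the $d$ quantities $\Pr[\xx\in R_j]$ is at most $\epsilon/d$.

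Now I would run the standard rejection-sampling/net argument per coordinate. Fix $j$. Consider the threshold $\tau_j$ defined as the largest value $t\ge a_j$ such that $\Pr_{\xx\sim P}[a_j \le x_j/U_{\aa}(\xx) < t] \le \epsilon/d$ (using right-continuity; if no sampled point ever forces $b_j$ below $\tau_j$, we are fine). The algorithm sets $b_j = \min_i x_j^i/y^i$ over the sample, so $b_j > \tau_j$ happens only if \emph{no} sample point $\xx^i$ satisfies $a_j \le x_j^i/U_{\aa}(\xx^i) < \tau_j$ — but that event has probability $> \epsilon/d$ under $P$, so after $m$ i.i.d. draws it is missed with probability at most $(1-\epsilon/d)^m \le e^{-m\epsilon/d}$. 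Choosing $m = \frac{d}{\epsilon}\ln(d/\delta)$ makes this at most $\delta/d$; a union bound over $j\in[d]$ then gives that with probability $\ge 1-\delta$ every $b_j \le \tau_j$, whence $\Pr[\xx\in R_j] \le \epsilon/d$ for all $j$, and therefore $\err(\bb)\le \epsilon$. This yields the claimed sample complexity $O\!\left(\frac{d\log(d/\delta)}{\epsilon}\right)$.

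\textbf{Main obstacle.} The delicate point is handling ties and the exact form of the bad region so that Claim~\ref{cl:leontief_error} composes cleanly with the net argument — in particular, being careful that the index $k$ produced by the claim depends on $\xx$, yet the union bound is over the fixed set of $d$ coordinates, and that the half-open interval $[a_j,\tau_j)$ versus $[a_j,\tau_j]$ boundary behavior (which matters when $P$ has atoms) is treated consistently. Everything else is routine: the $b_j\ge a_j$ invariant is immediate, and the per-coordinate miss-probability bound is the textbook one-dimensional $\epsilon$-net estimate.
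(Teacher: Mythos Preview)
Your proposal is correct and follows essentially the same route as the paper: both invoke Claim~\ref{cl:leontief_error} to reduce the error of the output $\bb$ to a union over coordinates of the events $\{x_j/U_{\aa}(\xx) < b_j\}$, define a per-coordinate threshold (your $\tau_j$, the paper's $B_j$) so that each such event has mass at most $\epsilon/d$, and then use the standard $(1-\epsilon/d)^m$ miss-probability bound with a union over the $d$ coordinates to conclude. The only cosmetic difference is that the paper phrases the threshold as the minimal $B$ with $\Pr[x_j/U_{\aa}(\xx)\in[a_j,B]]\ge\epsilon/d$ rather than your supremum formulation, and the boundary/atom issue you flag as the main obstacle is treated just as loosely there.
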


\begin{proof}
 We show that Algorithm \ref{a:leontief_stat_value} is a successful learner for the class of Leontief utility functions.
 Let $\aa$ be the vector that defines the target Leontief function. For each $j \in [d]$, we define consider an interval $[a_j, B_j]$, where $B_j$ is defined by
 \[
  B_j := \min\{B\in \reals ~:~ \Pr_{x \sim P}[ (x_j/U_{\aa}(\xx)) \in [a_j, B] ] \geq \epsilon/d \}.
 \]
 Note that we may have $B_j = a_j$, in which case the interval contains only one point.
 Claim \ref{cl:leontief_error} implies that any Leontief utility function defined by a vector $\bb$ with $b_j \in [a_j, B_j]$ for all $j$ has error at most $\epsilon$ since for any $b_j \leq B_j$ we have
 \[
  \Pr_{x \sim P}[ (x_j/U_{\aa}(\xx)) < b_j ] ~\leq~ \epsilon/d
 \]
 by definition of $B_j$.
 Thus, it suffices to show that the vector $\bb$ that is returned by Algorithm \ref{a:leontief_stat_value} satisfies this requirement (with high probability).
 
 Consider a sample $S=((\xx^1, y^1),\ldots, (\xx^m, y^m))$, with instances generated \iid by the distribution $P$ over bundles and labeled by Leontief function $\aa$ (that is $y^i = U_{\aa}(\xx^i)$). The output vector $\bb$ satisfies $b_j \in [a_j, B_j]$ for all $j$ if, for every index $j$, there exists an example $\xx^i$ in the sample with $x^i_j/y^i = x^i_j/U_{\aa}(\xx) \leq B_j$, that is if the sample $S$ hits all the intervals $[a_j, B_j]$. By definition of $B_j$, the probability that an \iid sample from $P$ of size $m$ does not hit all the intervals is bounded by
 \[
  n (1 - \nfrac{\epsilon}{n})^m ~\leq ~ \e^{\frac{\epsilon m}{d}}.
 \]
 If $m \geq \frac{d \ln(d/\delta)}{\epsilon}$, this probability is bounded by $\delta$.
 Thus, we have shown that with probability at least $1-\delta$ over the training sample $S$ algorithm \ref{a:leontief_stat_value} outputs a Leontief function of error at most $\epsilon$.
\end{proof}

\section{Learning Utility Functions via Value Queries}
\label{ss:value_query}
In this section we show how to learn each of utility functions $\CH_{lin}$, $\CH_{splc}$, $\CH_{ces}$ and $\CH_{leon}$ efficiently from value queries.
In the value query learning setting, a learning algorithm has access to an oracle that, upon given the input of a bundle $\xx$ , outputs the corresponding value $U(\xx)$ of some utility function $U$.
Slightly abusing notation, we also denote this oracle by $U$.

\begin{definition}[Learning from value queries]
 A learning algorithm \emph{learns a class $\CH$ from $m$ value queries}, if for any function $U\in\CH$, if the learning algorithm is given responses from oracle $U$, then after at most $m$ queries the algorithm outputs the function $U$. 
\end{definition}

The complexity of a query learning  algorithm is measured in terms of the number of queries it needs to learn a class $\CH$.
It is considered efficient if this number is polynomial in the size of the target function.
Since we assume that all defining parameters in the classes of Section \ref{ss:classes} are numbers of bit-length at most $n$, we will show that $\poly(n,d)$ queries suffice to learn these classes.

\medskip

\noindent{\bf Linear function. }
For a function $U_{\aa}\in \CH_{lin}$, where $U_{\aa}(\xx)=\sum_j a_j x_j$, $d$ queries are enough to determine it. Define $\forall k\le d,\ 
x^k_j=0,\ \forall j\neq k$ and $x^k_k=1$. Then clearly, $a_k=U(\xx^k)$. 
\medskip
\medskip

\noindent{\bf SPLC function. }
Given a function $U\in \CH_{splc}$ it can be decomposed as $U(\xx)=\sum_j U_j(x_j)$, where each $U_j$ is a piecewise-linear concave
function. As described in Section \ref{ss:classes}, each $U_j$ constitutes of a set of pieces with slopes and lengths. We will learn each
such $U_j$ separately. Let $a_{jk}$ be the slope of segment $k$, and $l_{jk}$ be its length. Let $r$ be the number of segment in
function $U_j$, then except for $l_{jr}$ (which is $\infty$) let $n$ be the maximum bit length of 
any $a_{jk}$ or $l_{jk}$, then $\nfrac{1}{2^n} \le a_{jk},l_{jk} \le 2^n$.
Note that $r$ is unknown.

Given lengths and slopes of segments $1,\dots k-1$ determining the slope of segment $k$ is easy: let $L=\sum_{s<k} l_{js}$ and ask for
$x_j=L+\epsilon$, where $\epsilon<\nfrac{1}{2^n}$. Then $U_j(x_j)=\sum_{s<k} a_{js}l_{js} + a_{jk} \epsilon$ (as
$\epsilon<l_{jk}$) gives the value of
$a_{jk}$. Let $u_L = \sum_{s<k} a_{js} l_{js}$. 

Next is to learn the length $l_{jk}$ of $k^{th}$ segment. Note that, $k$ is the last segment of function $U_j$ if and only if
$U_j(L+2^{n+1})= u_L + a_{jk} 2^{n+1}$. This is because if it is not the last segment then $l_{jk}\le 2^{n}$. Thus,
one query is enough to check this. Suppose $k$ is not the last segment, then we will compute $l_{jk}$ through a binary search, as
follows:

\begin{itemize}
\item[$S_1$] Let $l_l = 0$ and $l_h=2^{n+1}$. Set $i=0$.  
\item[$S_2$] Set $l=\frac{l_l+l_h}{2}$ and $x_j = L + l$. 
\item[$S_3$] If $U_j(x_j) < u_l + u_{jk} l$ then set $l_h=l$, else set $l_l=l$. 
\item[$S_4$] Set $i=i+1$. If $i>2n$ then output $l$ and exit. Else go to $S_2$.
\end{itemize}

In the above procedure we maintain the invariant that $l_l \le l\le l_h$. 
In step $S_3$ of an iteration, the inequality holds only if $l_{jk}<l$, and therefore the $l_h$ is reset to $l$. 
The correctness of the procedure follows from the fact that bit length of $l_{jk}$ is at most $n$.

We learn each $U_j$ separately starting from first to the last segment. This requires $n$ queries to learn each $l_{jk}$, and one query
to learn $a_{jk}$, thus total of $O(n|U_j|)$ queries. Function $U\in \CH_{splc}$ can be learned by making $O(n\kappa d)$ queries to its
value oracle, where $\kappa =\max_j |U_j|$.
\medskip
\medskip

\noindent{\bf CES function with known $\rho$. }
Let $U\in \CH^{\rho}_{ces}$ such that $U(\xx)=(\sum_j a_j x_j^\rho)^{1/\rho}$, where $\rho$ is given. Learning such a function is
equivalent to learning a linear function. Thus for $\xx^k$ as defined in case of Linear functions, we get $a_k=U(\xx^k)^{1/\rho}$. 
\medskip

\noindent{\bf Leontief function. }
Let $U \in \CH_{leon}$ such that $U(\xx)=\min_j \nfrac{x_j}{a_j}$, where every bit length of every $a_j$ is at most $n$. 
In other words, if $a_j>0$ then $\frac{1}{2^n} \le a_j \le 2^n$. Therefore, given that $a_j,a_k>0$, we have $\frac{1}{2^{2n}}\le
\frac{a_k}{a_j} \ge 2^{2n} ,\ \forall j,k$. 

Since $\frac{0}{0}$ is considered as $\infty$, for $x_j=0$ and $\forall k\neq j,\ x_k=1$, $U(\xx)>0$ if and only if $a_j=0$. 
Thus we can figure out all the non-zero $a_j$s using $d$ queries, and therefore wlog assume that $a_j>0,\ \forall j$. Consider a bundle
$\xx^k$, where $x^k_j = 1,\ \forall j\neq k$, and $x^k_k < \nfrac{1}{2^{2n}}$.
\[
\forall j \neq k,\ 
\frac{a_k}{a_j} \ge \frac{1}{2^{2n}} \Rightarrow \frac{1}{a_j} \ge \frac{1}{2^{2n}a_k} \Rightarrow \frac{x^k_j}{a_j} >
\frac{x^k_k}{a_k} 
\]
The above conditions imply that $U(\xx)=\min_j \frac{x^k_j}{a_j}=\frac{x^k_k}{a_k} \Rightarrow a_k = \frac{x^k_k}{U(\xx^k)}$. Thus, $2d$ queries are enough
to learn $U$.

\begin{theorem}\label{thm.vq}
We can learn
\begin{itemize}
\item $\CH_{lin}$ from $O(d)$
\item $\CH_{slpc}$ from $O(n\kappa d)$ (where $\kappa =\max_j |U_j|$)
\item $\CH_{ces}^{\rho}$ from $O(d)$
\item $\CH_{leon}$ from $O(d)$
\end{itemize}
value queries.
\end{theorem}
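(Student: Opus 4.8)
The plan is to prove the four bullets one at a time, in each case writing down an explicit sequence of value queries and verifying correctness directly from the closed-form structure of the utility class in question; the theorem is then just the conjunction of these four arguments.

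For $\CH_{lin}$ I would query $U$ at the $d$ standard basis bundles $\xx^k$, where $x^k_k=1$ and $x^k_j=0$ for $j\neq k$. Since $U_{\aa}(\xx^k)=a_k$, this recovers $\aa$ exactly in $d$ queries. For $\CH_{ces}^{\rho}$ with $\rho$ known, the same bundles work: $U_{\aa\rho}(\xx^k)=(a_k)^{1/\rho}$, so $a_k=U(\xx^k)^{\rho}$, again $d$ queries. For $\CH_{leon}$ I would first identify the support $\{j:a_j>0\}$ by, for each $j$, querying the bundle that is $0$ in coordinate $j$ and $1$ elsewhere: since $0/0$ is treated as $\infty$, the returned value is positive iff $a_j=0$. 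This uses $d$ queries, after which we may assume $a_j>0$ for all $j$, so $\tfrac{1}{2^n}\le a_j\le 2^n$ and hence $\tfrac{a_k}{a_j}\ge \tfrac{1}{2^{2n}}$ for all $j,k$. Then, for each $k$, query the bundle $\xx^k$ with $x^k_j=1$ for $j\neq k$ and $x^k_k<\tfrac{1}{2^{2n}}$; the bit-length bound forces $\tfrac{x^k_k}{a_k}<\tfrac{x^k_j}{a_j}$ for every $j\neq k$, so $U(\xx^k)=\tfrac{x^k_k}{a_k}$ and $a_k=\tfrac{x^k_k}{U(\xx^k)}$. That is another $d$ queries, for $O(d)$ total.

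The substantial case is $\CH_{splc}$. Here $U(\xx)=\sum_j U_j(x_j)$, and since fixing $x_i=0$ for $i\neq j$ (and $U_i(0)=0$) isolates $U_j$, the good-wise functions can be learned independently, left segment to right segment. Suppose the slopes $a_{j1},\dots,a_{j(k-1)}$ and lengths $l_{j1},\dots,l_{j(k-1)}$ are already known; set $L=\sum_{s<k}l_{js}$ and $u_L=\sum_{s<k}a_{js}l_{js}$. Querying $x_j=L+\epsilon$ with a fixed $\epsilon<\tfrac{1}{2^n}$ returns $u_L+a_{jk}\epsilon$ (since $\epsilon<l_{jk}$), recovering $a_{jk}$ in one query. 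To find $l_{jk}$, first detect whether segment $k$ is the last one: because $l_{jk}\le 2^n$ for any non-final segment, the identity $U_j(L+2^{n+1})=u_L+a_{jk}\,2^{n+1}$ holds iff $k$ is last. If $k$ is not last, run a binary search for $l_{jk}$ over $[0,2^{n+1}]$, at each step testing whether $U_j(L+l)<u_L+a_{jk}\,l$ — which holds iff $l>l_{jk}$ — and contracting the interval accordingly; after $2n$ steps the bit-length bound on $l_{jk}$ lets us round the endpoint to the exact value. This costs $O(n\,|U_j|)$ queries per good, hence $O(n\kappa d)$ overall with $\kappa=\max_j|U_j|$.

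The delicate points, where I would be most careful, are all in the SPLC analysis: checking that the comparison $U_j(L+l)<u_L+a_{jk}\,l$ has exactly the claimed sign pattern, which relies on strict concavity of $U_j$ (i.e.\ $a_{j(k+1)}<a_{jk}$, so that overshooting into later segments strictly lowers the value); confirming that $2n$ iterations plus rounding recovers $l_{jk}$ exactly given that its bit length is at most $n$; and verifying the last-segment test, which uses $l_{j|U_j|}=\infty$ together with $l_{jk}\le 2^n$ for $k<|U_j|$. The remaining three classes are routine once the query bundles above are written down, so the bulk of the writeup is the SPLC argument.
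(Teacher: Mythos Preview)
Your proposal is correct and follows essentially the same approach as the paper's own proof: the same basis-vector queries for $\CH_{lin}$ and $\CH_{ces}^\rho$, the same two-phase (support detection, then a small-coordinate query per good) scheme for $\CH_{leon}$, and the same segment-by-segment slope-then-binary-search routine for $\CH_{splc}$. Your exponent $a_k=U(\xx^k)^{\rho}$ in the CES case is in fact the correct one (the paper's text has a typo there), and the ``delicate points'' you flag are exactly the places where the paper's argument relies on the bit-length and strict-concavity hypotheses.
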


\end{document}